\newcommand{\bpr}{\begin{trivlist} \item[]{\bf Proof. }}
\newcommand{\epr}{\hspace*{\fill} $\qed$\end{trivlist}}
\newcommand{\be}{\begin{eqnarray}}
\newcommand{\ee}{\end{eqnarray}}
\newcommand{\ba}{\begin{align}}
\newcommand{\ea}{\end{align}}
\newcommand{\bi}{\begin{itemize}}
\newcommand{\ei}{\end{itemize}}
\newcommand{\secref}[1]{Section~\ref{sec:#1}}
\newcommand{\seclab}[1]{\label{sec:#1}}
\newcommand{\eqlab}[1]{\label{eq:#1}}
\renewcommand{\eqref}[1]{(\ref{eq:#1})}
\newcommand{\figref}[1]{Fig.~\ref{fig:#1}}
\newcommand{\figlab}[1]{\label{fig:#1}}
\newcommand{\lemmaref}[1]{Lemma~\ref{lemma:#1}}
\newcommand{\lemmalab}[1]{\label{lemma:#1}}
\newcommand{\remlab}[1]{\label{remark:#1}}
\newcommand{\thmref}[1]{Theorem~\ref{theorem:#1}}
\newcommand{\thmlab}[1]{\label{theorem:#1}}
\newcommand{\appref}[1]{Appendix~\ref{app:#1}}
\newcommand{\applab}[1]{\label{app:#1}}
\newcommand{\R}{\mathbb R}
\newtheorem{theorem}{Theorem}[section]
\newtheorem{lemma}[theorem]{Lemma}
\newtheorem{remark}[theorem]{Remark}
\numberwithin{equation}{section}
\newcommand\KUK[1]{{\color{black}{#1}}}
\newcommand\rspp[1]{{\color{black}{#1}}}
\newcommand\rsp[1]{{\color{black}{#1}}}
\begin{document}
\title[A dynamical systems approach to WKB-methods]{A dynamical systems approach to WKB-methods: The eigenvalue problem for a single well potential}

\author{K. Uldall Kristiansen*}
\address{Department of Applied Mathematics and Computer Science, 
Technical University of Denmark, 
2800 Kgs. Lyngby, 
Denmark }
\email{krkri@dtu.dk}
\author{P. Szmolyan}
\address{Institute of Analysis and Scientific Computing, TU Wien, Vienna, Austria
}
%\date {}
\date\today
\maketitle

\vspace* {-2em}

 \begin{abstract}
 In this paper, we revisit the eigenvalue problem of the one-dimensional Schr{\"o}dinger equation for smooth single well potentials. In particular, we provide a new interpretation of the Bohr-Sommerfeld quantization formula.  A novel aspect of our results, which are based on recent work of the authors on the turning point problem based upon dynamical systems methods, is that 
we cover all eigenvalues $E\in [0,\mathcal O(1)]$ and show that the Bohr-Sommerfeld quantitization formula approximates all of these eigenvalues (in a sense that is made precise). At the same time, we provide rigorous smoothness statements of the eigenvalues as functions of $\epsilon$. We find that whereas the small eigenvalues $E=\mathcal O(\epsilon)$ are smooth functions of $\epsilon$, the large ones $E=\mathcal O(1)$ are smooth functions of $n\epsilon \in[c_1,c_2],\,0<c_1<c_2<\infty$, and $0\le \epsilon^{1/3}\ll 1$; here $n\in \mathbb N_0$ is the index of the eigenvalues. 
 \end{abstract}
 \section{Introduction}
%  \note{need something on classical mechanics, quantum mechanics}
 In this paper, we reconsider the quantization of energies $E$ as described by the one-dimensional Schr{\"o}dinger equation
 \begin{align}
  \epsilon^2 \ddot x &=\left(V(t)-E\right)x,\eqlab{eq0}
 \end{align}
in the semi-classical limit $\epsilon \rightarrow 0$. 
% to its relevance for the eigenvalue problem for the
% one-dimensional Schr{\"o}dinger equation
%  \begin{align}
%   \epsilon^2 x'' &=\left(V(t)-E\right)x,\eqlab{eq0}
%  \end{align}
% in the semi-classical limit $\epsilon \rightarrow 0$. 
Here $x(t)$ is the wave function, $t \in \R$ a spatial variable,  $V(t)$ the potential 
and  $E$  the energy. The eigenvalue problem consists of {determining the values of $E$ for which \KUK{$L^2$-integrable} solutions $x:\mathbb R\rightarrow \mathbb R$ exists}. For $E < V(t)$ solutions are exponentially growing/decaying, for $E >V(t)$ solutions are oscillatory. This is obvious when $V$ is constant, but carries over to the time-dependent case whenever $0<\epsilon\ll 1$.
\KUK{This behaviour changes at \textit{turning points}} $t=t_*(E)$ where $V(t_*(E)) =E$. In the corresponding classical dynamics these points are the points where
the velocity of the corresponding particle changes its sign, hence the name turning point.

% In the case of a single well potential:
{It is well-known (see e.g. \cite[Section 2.3]{berizin}) \KUK{that the spectrum of the Schr{\"o}dinger operator $x\mapsto -\epsilon^2 \ddot x +V(t) x\in L^2$} -- under mild growth conditions of $V(t)\rightarrow \infty$ for $t\rightarrow \pm \infty$ -- is discrete: $\{E_n(\epsilon)\}_{n\in \mathbb N_0}$ with $E_n(\epsilon)\rightarrow \infty$ for $n\rightarrow \infty$ for all $\epsilon>0$. In fact, (due to K. Friedrichs \cite{friedrichs1934a}) the statement also holds true in arbitrary dimensions $d\in \mathbb N$, where one replaces the second order derivative $\ddot{()}$ by the Laplacian $\Delta=\sum_{i=1}^d \frac{\partial^2}{\partial t_i^2}$ with $t=(t_1,\ldots,t_d)\in \mathbb R^d$, see also \cite{berizin,reed1972a}.}
\KUK{In this paper, we will consider \eqref{eq0} in the case of a single well potential: $V'(t)=0\,\Leftrightarrow\, t=0$ \rspp{and focus on} the asymptotics of $E_n(\epsilon)$ with respect to $\epsilon\to 0$.}  %In this paper, we are concerned with the 
\rsp{In particular,} we consider $V\in C^\infty$ and suppose that
\begin{enumerate}[label=(\Alph*)]
 \item \label{Vquadratic} $V(0)=0$, $V'(t)=0 \Leftrightarrow t=0$ and $V''(0)=2$, so that $t=0$ is a global minimum with value $V(0)=0$,
\end{enumerate} 
as well as
 $\lim_{t\rightarrow \pm \infty} V(t)=\infty$ (see the separate technical assumptions of the paper \ref{smoothness} and \ref{growth} below). (Notice that although $V''(0)=2$ looks restrictive, it is (through simple scalings) without loss of generality once we suppose that $V''(0)>0$.)  In this case, there are two turning points $t_-(E)<0<t_+(E)$ for every $E>0$. We define
\begin{align}
 J(E) := \frac{1}{\pi}\int_{t_-(E)}^{t_+(E)} \sqrt{E-V(t)} dt,\quad E>0.\eqlab{Jdefn}
\end{align}
 It is then common ``\rsp{folkore}'', see \cite[Section 15.2]{hall2013} for a recent mathematics-oriented reference, that eigenvalues of \eqref{eq0} can be approximated by the Bohr-Sommerfeld \KUK{quantization} formula:
\begin{align}
 J(E_n) = \left(n+\frac12\right) \epsilon\quad n\in \mathbb N.\eqlab{quant}
\end{align}
The equivalent form
\begin{align}
\int_{t_-(E_n)}^{t_+(E_n)} \sqrt{E_n-V(t)}dt=\pi J(E_n) =\pi\left(n+\frac12\right)\epsilon,\nonumber
\end{align}
using \eqref{Jdefn} in the first equality and \eqref{quant} in the second,
is perhaps more familiar. The fraction $1/2$ appearing on the right hand side is known as the Maslov correction, see \cite{keller}.

The quantization condition \eqref{quant} can be determined in a formal way based upon the WKB-method (see e.g. \cite{bender1978,fedoryuk1993,olver1974a}) for $E\ge c>0$. In particular, WKB gives rise to an exponentially decaying solution for $t \to -\infty$:
\begin{align*}
%  \begin{equation}\eqlab{wkbsolution} 
 \rspp{x_-^{\rm{exp}}}(t) \approx \frac{1}{ \sqrt[4]{V(t)-E} } \exp   \left ( - \frac{1}{\epsilon}\int \sqrt{V(t)-E}\,dt\right ),
%  \end{equation}  
\end{align*}
within the ``classically forbidden region'' $(-\infty, t_-(E))$. This solution 
can be connected (formally) to an oscillatory WKB-solution
\begin{align*}
x_{-}^{\rm{osc}}(t) \approx
 \operatorname{Re}\left(\frac{1}{ \sqrt[4]{E-V(t)} } \exp   \left ( \pm \frac{i}{\epsilon}\int \sqrt{E-V(t)}\,dt\right )u\right),
\end{align*}
for some $u\in \mathbb C$, 
within the ``classically allowed region'' $(t_-,t_+)$ through the Airy-function:
\begin{align}
 x(t) = \operatorname{Ai} \left(-\epsilon^{-2/3} (-V'(t_{-}(E)))^{1/3} (t-t_-(E))\right),\eqlab{airysol}
\end{align}
since this function solves
the local model of \eqref{eq0}
\begin{align*}
 \epsilon^2 \ddot x & = V'(t_-(E)) (t-t_-(E)) x,
\end{align*}
near $t=t_-(E)$. Similarly,   the exponentially decaying WKB-solution \rsp{$x_+^{\rm{exp}}$} for $t \to \infty$ within the ``classically forbidden region'' $(t_+,\infty)$, can be connected to 
an oscillatory solution   $x_{+}^{\rm{osc}} $  in the ``classically allowed region'' $(t_-,t_+)$. The quantization condition is then obtained (formally) from the requirement 
 $x_{-}^{\rm{osc}} (t) = x_{+}^{\rm{osc}}(t) $ for all  $t \in  (t_-,t_+)$.

If we continue to suppose that $E\ge c>0$, then the left hand side of \eqref{quant} satisfies $J(E)\ge \bar c>0$ for some $\bar c>0$. Consequently, \eqref{quant} implies that $n=\mathcal O(\epsilon^{-1})$ for $\epsilon\rightarrow 0$. This makes the interpretation of \eqref{quant}, as a formula for the approximation of eigenvalues as $\epsilon\to 0$, nontrivial. Moreover, from this perspective there is no justification for \eqref{quant} approximating all (bounded) eigenvalues. 
 
 The following is also well-known (\rsp{see e.g. \cite{bender1978,landau2013quantum,simon1984a}}): For every $n\in \mathbb N_0$ there is an eigenvalue
 \begin{align}
  E_n\approx 2\left(n+\frac12\right)\epsilon,\eqlab{Esmall}
 \end{align}
 for $\epsilon\rightarrow 0$. \eqref{Esmall} is clearly not uniform with respect to $n$ and these eigenvalues are therefore for $n=0,\ldots,n_0$, $n_0\in \mathbb N$ fixed, the \textit{small} (or low-lying) eigenvalues, each being $E_n=\mathcal O(\epsilon)$ with respect to $\epsilon\rightarrow 0$. The corresponding eigenfunctions have $n$ zeros with $n$ being the number in \eqref{Esmall}.%, we have found \cite{yafaev2011a} to provide the most precise formulation. %\cite{yafaev2011a} follows the approach in \cite{fedoryuk1993} and 
 
 The two sets of eigenvalues, \textit{large} of order $\mathcal O(1)$ (given by \eqref{quant}) and the \textit{small} ones of order $\mathcal O(\epsilon)$ (given by \eqref{Esmall}) do not overlap as $\epsilon\rightarrow 0$. Consequently, we have a set of eigenvalues in between, which we shall refer to as the \textit{intermediate} eigenvalues. To the best of our knowledge, these eigenvalues and the lack of overlap between \eqref{quant} and \eqref{Esmall} appear to be somewhat overlooked; \rsp{we refer to \cite{weinstein1985a,weinstein1987a} for a treatment of such gaps in the context of periodic potentials}. A possible explanation for this might be that \eqref{quant} with $V(t)= t^2$ gives the leading order expression \eqref{Esmall} \KUK{for all $n\in \mathbb N_0$}. 
 
%  \rsp{gap in the context of periodic potentials \cite{weinstein1987a}}

%  

 Although the quantization of eigenvalues and the Bohr-Sommerfeld approximation is covered \KUK{in  several} classical references, see e.g. the textbooks \cite{fedoryuk1993,quantum}, we have found the reference \cite{yafaev2011a} to provide the most precise mathematical description of the problem in the context of assumption \ref{Vquadratic} and under the assumption that $V$ is smooth (not analytic). Following a similar approach to the one used in \cite{fedoryuk1993}, \cite[Theorem 2.5]{yafaev2011a} provides an expansion of solutions across the turning points. These solutions are then concatenated using the associated Wronskian in order to set up an equation for the associated eigenvalue problem. It is shown that this leads to the Bohr-Sommerfeld equation up to remainder terms of order $\mathcal O(\epsilon^2)$, see \cite[Eq. (4.5)]{yafaev2011a}. However, we believe that the subtleties related to $n\rightarrow \infty$ and the smoothness of eigenvalues with respect to $\epsilon$ are not addressed. 
 
%  for the  to expand and  also studies the quantization of eigenvalues. and we have found this reference to be
 % advantages: systematic approach, good parts bad parts still something puzzling.  
 
 % downplay large eigenvalue: recover it but not so strong, instead we get complete connection. eigenvalue issue with n not been looked at we have techniques to do it. Best reference: yafaev. Fedoryuk: stronger statements but without details. 
 
 In \cite{uldall2024a}, the present authors provided a detailed description of turning points using an alternative approach based upon dynamical systems theory (including singular perturbation theory, normal form theory and the blowup method).  We believe that this approach has the advantage of being more systematic \KUK{and sheds light on the above mentioned smoothness issues}. In particular, we are in the process of applying this approach in related problems with elliptic and hyperbolic transitions. 
 In \cite{uldall2024a}, we were also able to address smoothness with respect to $\epsilon$. In line with \eqref{airysol} and asymptotics of $\operatorname{Ai}$, we found that solutions across the turning point are smooth functions of $\epsilon^{1/3}$; \KUK{in fact, more precisely they are smooth functions of $\epsilon^{2/3}$ and $\epsilon$ but we will stick with the former simpler formulation throughout}. We have not seen this elsewhere in the literature. This is potentially due to the fact that we do the matching in a different way (using blowup). We will discuss this fact further below. %This is potentially due to the fact that $(-t)^{1/4} \operatorname{Ai}(t)$ for $t\rightarrow -\infty$ has an expansion in $t^{3/2}$ and from this perspective \eqref{airysol} leads to an expansion in $\epsilon$ for fixed $t>t_-(E)$. 
 
 The present paper can be seen as a companion paper to \cite{uldall2024a}. Here we again pursue modern dynamical systems based techniques to address the eigenvalues of \eqref{eq0}, including the smoothness properties with respect to $\epsilon$ and the dependency on $n$. Our results will (in contrast to \cite{yafaev2011a}) lead to an equation for the eigenvalues which depends smoothly on $n\epsilon $ and $\epsilon^{1/3}$. In particular, we find that this equation is a $\mathcal O(\epsilon^{5/3})$-perturbation (and not a $\mathcal O(\epsilon^2)$-perturbation as in \cite{yafaev2011a}) of the Bohr-Sommerfeld approximation. For these two separate results to match, the $\epsilon^{5/3}$-term in our expansion needs to be absent, but we have neither performed a detailed calculation to verify this (this is not a trivial task) nor have we found a direct explanation either. We hope to shed further light on the discrepancies in future work.
 % can be addressed more directly. 

 %YAFAEV
 
% 

% To put the approach and results of this paper into context we briefly sketch 
% the basic formal approach known as the Liouville-Green or WKB-approximation. Later we will also comment on
% rigorous variants of the WKB-method and other related asymptotic methods.

\subsection{Main results}
In this paper, we provide a rigorous description of the eigenvalues of \eqref{eq0} for a single well potential, assuming that \ref{Vquadratic} and the following conditions all hold true:
% For our dynamical systems approach we will prefer the first order formulation:
% \begin{equation}\eqlab{xyt}
%  \begin{aligned}
%   \dot x &=y,\\
%   \dot y&=\left(V(t)-E\right)x,\\
%   \dot t&=\epsilon.
%  \end{aligned}
%  \end{equation}
% Here $V$ is the potential which we will assume is a single well:
\begin{enumerate}[resume*]
\item \label{smoothness} $V:\mathbb R\rightarrow \mathbb R$ is $C^\infty$.
%  \item \label{Vquadratic} $V'(t)=0 \Leftrightarrow t=0$, $V''(0)=2$, so that $t=0$ is a global minimum with value $V(0)=0$.
 \item \label{growth} 
 $V(t)\rightarrow \infty$ as $t\rightarrow\pm \infty$ such that the function
 \begin{align}
  t\mapsto V(t)^{-\frac14} \frac{d^2}{dt^2}\left(V(t)^{-\frac14}\right),\eqlab{phifunc}
 \end{align}
which is defined for all $t\ne 0$ by assumption \ref{Vquadratic}, 
 belongs to $L^1((-\infty,-1])\cap L^1([1,\infty))$. 
% 
%  
%  
%  
%  $V$ has algebraic growth at $t\rightarrow \infty$: There is an $M\in \mathbb N$, a $V_M>0$ and a smooth function $W_M:[0,c^{-1}]\rightarrow \mathbb R$ such that
% \begin{align}
% V(t) = V_Mt^{M}+t^{M-1} W_M(t^{-1}),
% \lim_{t\rightarrow \pm \infty} \vert V(t)t\vert^{-n}
% \end{align} 
% for all $t\ge c$, $c>$ large enough, see \figref{potentialV}. We require a similar condition at $t\rightarrow -\infty$. The growth rates for $t\rightarrow \pm \infty$ can be different. 
\end{enumerate}

\rspp{Assumptions \ref{smoothness} and \ref{growth} imply that \eqref{eq0} for all $0<\epsilon\ll 1$, and all $E\in \mathbb R$, have non-trivial solutions $x_\pm^{\rm{exp}}:\mathbb R\rightarrow \mathbb R$  that have exponential decay as $t\rightarrow \pm \infty$, respectively. This follows from first applying the Liouville transformation described in \cite[Theorem 1.1, p. 190]{olver1974a}, see also \cite[Eq. (106), p. 191]{olver1974a}, and subsequently using  Proposition 8.1 from \cite[p. 92]{coddington1955a}. (Notice that we simply require that the $\phi$-function in \cite[Eq. (106), p.191]{olver1974a} is integrable with respect to $\xi$. Upon a change of coordinates, this leads to the statement regarding \eqref{phifunc}, see also \cite[Proposition 4.4]{berizin}.) Assumption \ref{growth} also relates to \cite[Assumption 2.1]{yafaev2011a} (although our formulation is slightly weaker). }

\begin{remark}
 As also noted in \cite{yafaev2011a} in the context of \cite[Assumption 2.1]{yafaev2011a}, assumption \ref{growth} is \rsp{satisfied} for all potentials with algebraic growth where $V(t)=\mathcal O(t^{M})$, $V'(t)=\mathcal O(t^{M-1})+\mathcal O(1)$ and $V''(t)=\mathcal O(t^{M-2})+\mathcal O(1)$ for $t\rightarrow \pm \infty$ for some  $M\in \mathbb N$. Indeed, in this case the function \eqref{phifunc} is $\mathcal O(t^{-2})$ for all $M\in \mathbb N$ as $t\rightarrow \pm\infty$. This follows from a simple calculation. The same is true for logarithmic and exponential growth of $V$. On the other hand, the function $V(t)=b \sin (t^3)+t^2$ with $b>0$ small enough is an example of a single well potential (satisfying \ref{Vquadratic} and \ref{smoothness}) that does not satisfy \ref{growth}; in this case \eqref{phifunc} \rspp{becomes} $\frac{9}{4} b \sin(t^3) t + \mathcal O(t^{-1})$ for $t\rightarrow \pm \infty$.
%  On the other hand, assumption \ref{growth} fails for logarithmic growth where $V(t)=\mathcal O(\log t)$, $V'(t)=\mathcal O(t^{-1})$ and $V''(t)=\mathcal O(t^{-2})$ for $t\rightarrow \pm \infty$ for some $C_{0,1,2}\ne 0$. Here \eqref{phifunc} is $\mathcal O$
\end{remark}

Our approach is based upon dynamical systems theory, and \rspp{we will therefore write \eqref{eq0} as the first order system}:
\begin{equation}\eqlab{xyt}
\begin{aligned}
 \dot x &=y,\\
 \dot y &=(V(t)-E) x,\\
 \dot t&=\epsilon.
 \end{aligned}
\end{equation}
\rspp{Within this (geometric) viewpoint, the existence of $x_\pm^{\rm{exp}}$ implies that there are well-defined stable and unstable manifolds $W^{s}(\epsilon,E)$, $W^u(\epsilon,E)$ of the line $(0,0,t)$, $t\in \mathbb R$, for \eqref{xyt} for $t\gg 1$ and $t\ll -1$, respectively.  $W^i(\epsilon,E)$, $i=s,u$, can each be extended by the flow for all $t\in \mathbb R$.} Obviously, by the linearity of the problem, $W^s(\epsilon,E)$ and $W^u(\epsilon,E)$ are $C^\infty$ line bundles (cf. assumption \ref{smoothness}). Geometrically $E$ being an eigenvalue, with an associated eigenfunction $x:\mathbb R\rightarrow \mathbb R$ with $x(t)\rightarrow 0$ (exponentially) as $t\rightarrow \pm \infty$, means that the manifolds $W^s(\epsilon,E)$ and $W^u(\epsilon,E)$ coincide. %\rsp{The main difficulty lies in extending $W^s$ and $W^u$ beyond the turning points}

 \begin{figure}[!ht] 
\begin{center}
{\includegraphics[width=.78\textwidth]{./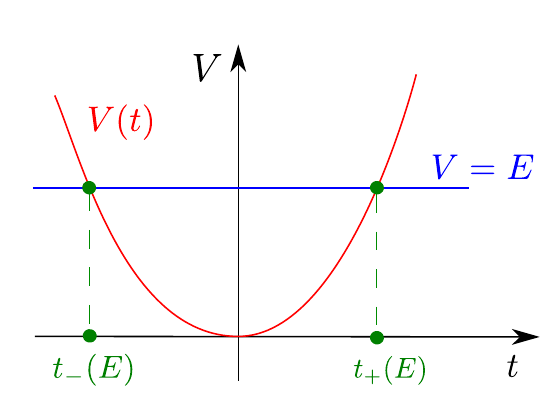}}
%\subfigure[]{\includegraphics[width=.48\textwidth]{./figs/P1.pdf}}
% \subfigure[$\eta^R<\eta_{H}^R<\eta<\eta_{Het}^R$]{\includegraphics[width=.4\textwidth]{./figures/pRblowup2.pdf}}
% \subfigure[$\eta^R<\eta<\eta_{H}^R<\eta_{Het}^R$]{\includegraphics[width=.4\textwidth]{./figures/pRblowup3.pdf}}
% \subfigure[Bifurcation diagram]{\includegraphics[width=.4\textwidth]{./figures/pRblowupBif.pdf}}
% \input{centerManifoldCa2.ps_tex}
\end{center}
 \caption{A single well potential $V(t)$ having a global minimum at $t=0$. We also illustrate the turning points $t_\pm(E)$, $E>0$, where $U(t,E)=E-V(t)$ changes sign.  }%THERE IS NO DASHED LINE IN THIS FIGURE (SEE TEXT).}
%  The quantities $\chi_{\pm}$ determining the eigenvectors satisfy $\chi_{\pm}\text{sign}(b)\gtrless 0$ for all values of $c$ and $\gamma$. The different diagrams illustrate the singular canards within $\Sigma_{sl}^-$ in the different parameter regions. For example in the region gray shaded region, corresponding to case (S) ze see $b>0$ within the region $c+\gamma>0$ and $b<0$ within the region $c+\gamma<0$ the illustration of two true singular canards. For $b>0$ the true canard is positive while it is negative (dashed line) for $b<0$. Within the region where $c+\gamma >\sqrt{(c-\gamma)^2+4b\beta}$ corresponding to case (N) ze see a diagram for $b<0$ to the right and a diagram for $b>0$ to the left. For $b>0$ the sliding region $\Sigma_{sl}$ is filled with positive singular canards (indicated by gray shaded regions). 
% }
\figlab{potentialV}
\end{figure}

Consider \eqref{xyt} with $\epsilon=0$ and $t<t_-(E)$. Then the linearization \rsp{about} $(0,0,t)$ produces two real and nonzero eigenvalues $\pm \lambda(t)$ with $\lambda(t)=\sqrt{V(t)-E}$. In this way, we obtain an unstable manifold $W^u(0,E)$ for $\epsilon=0$, being the line bundle of the unstable spaces of $(0,0,t)$ associated with the eigenvalue $\lambda(t)$, and by standard hyperbolic theory, $W^u(\epsilon,E)$ is a smooth perturbation within compact subsets of $t<t_-(E)$. $W^s(\epsilon,E)$ is similarly a perturbation of a stable manifold \rsp{$W^s(0,E)$} for \eqref{xyt} with $\epsilon=0$ within compact subsets of $t>t_+(E)$. However, at $t=t_\pm(E)$ we have that $\lambda(t)=0$ and the hyperbolic theory offers no control of these manifolds within the elliptic regime  $t\in (t_-(E),t_+(E))$ \rsp{where  $\lambda(t)\in i\mathbb R$. In this region, solutions of \eqref{xyt} are rapidly oscillating for $0<\epsilon\ll 1$, see \figref{xyt_new}.}

\begin{figure}[!ht] 
\begin{center}
{\includegraphics[width=.99\textwidth]{./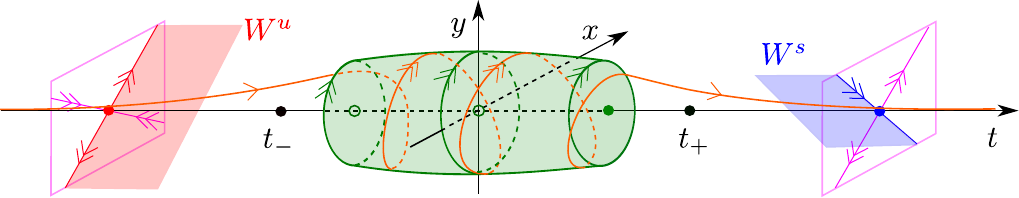}}
%\subfigure[]{\includegraphics[width=.48\textwidth]{./figs/P1.pdf}}
% \subfigure[$\eta^R<\eta_{H}^R<\eta<\eta_{Het}^R$]{\includegraphics[width=.4\textwidth]{./figures/pRblowup2.pdf}}
% \subfigure[$\eta^R<\eta<\eta_{H}^R<\eta_{Het}^R$]{\includegraphics[width=.4\textwidth]{./figures/pRblowup3.pdf}}
% \subfigure[Bifurcation diagram]{\includegraphics[width=.4\textwidth]{./figures/pRblowupBif.pdf}}
% \input{centerManifoldCa2.ps_tex}
\end{center}
 \caption{\rsp{Illustration of our geometric viewpoint: For $t<t_-$ and $t>t_+$, we show the unstable manifold $W^u$ (red) and the stable manifold $W^s$ (blue), respectively. In these regions, these manifolds are approximated to leading order by the unstable and stables spaces, respectively, of \eqref{xyt} for $\epsilon=0$. In between $t\in (t_-,t_+)$, we have fast oscillations as indicated in green (using orbits of \eqref{xyt} for $\epsilon=0$). We emphasize that $E$ is an eigenvalue when the extensions of $W^u$ and $W^s$ coincide. An associated eigenfunction for $0<\epsilon\ll 1$ is illustrated in orange.   } }%THERE IS NO DASHED LINE IN THIS FIGURE (SEE TEXT).}
%  The quantities $\chi_{\pm}$ determining the eigenvectors satisfy $\chi_{\pm}\text{sign}(b)\gtrless 0$ for all values of $c$ and $\gamma$. The different diagrams illustrate the singular canards within $\Sigma_{sl}^-$ in the different parameter regions. For example in the region gray shaded region, corresponding to case (S) ze see $b>0$ within the region $c+\gamma>0$ and $b<0$ within the region $c+\gamma<0$ the illustration of two true singular canards. For $b>0$ the true canard is positive while it is negative (dashed line) for $b<0$. Within the region where $c+\gamma >\sqrt{(c-\gamma)^2+4b\beta}$ corresponding to case (N) ze see a diagram for $b<0$ to the right and a diagram for $b>0$ to the left. For $b>0$ the sliding region $\Sigma_{sl}$ is filled with positive singular canards (indicated by gray shaded regions). 
% }
\figlab{xyt_new}
\end{figure}

% The unstable manifold $W^u(\epsilon,E)$ is therefore with compact subsets of $t<t_-(E)$ a smooth perturbation of 
When reference to $\epsilon$ and $E$ is not important, we write $W^i(\epsilon,E)=W^i$, $i=s,u$, for simplicity.

\begin{lemma}\lemmalab{Jprop}
Suppose that the assumptions \ref{Vquadratic}, \ref{smoothness} and \ref{growth} all hold. Then the following holds.
\begin{enumerate}
\item \label{tpmE} The functions $t_\pm:\mathbb R_+\rightarrow \mathbb R$ defined by 
\begin{align}\eqlab{tpmEeqns}
E-V(t_-(E))=E-V(t_+(E)) = 0,
\end{align}
with $t_-(E)<0<t_+(E)$, are $C^\infty$-smooth functions and satisfy \begin{align}
 t_\pm (E)\rightarrow \pm \infty,\eqlab{tasymp}
\end{align}
for $E\rightarrow \infty$.
\item \label{Jprop2}
$J:\mathbb R_+\rightarrow \mathbb R_+$ defined by 
\eqref{Jdefn}
is a $C^\infty$ diffeomorphism.
\item \label{Jprop3} $J$ extends smoothly to $E=0$, with $J(0)=0$ and $J'(0)=\frac12$.
\end{enumerate}
\end{lemma}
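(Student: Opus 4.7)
The plan is to prove the three claims in order, with the principal technical work being a single global desingularizing change of variables that simultaneously removes the square-root singularity of the integrand at the turning points and exposes the smoothness of $J$ down to $E=0$.

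For part (\ref{tpmE}), assumption \ref{Vquadratic} gives $V'(t) > 0$ for $t > 0$ and $V'(t) < 0$ for $t < 0$; together with $V(0) = 0$ and assumption \ref{growth}, the restrictions $V|_{(0,\infty)}$ and $V|_{(-\infty,0)}$ are smooth strictly monotone bijections onto $(0,\infty)$. The inverse function theorem produces smooth $t_\pm:\mathbb{R}_+\to\mathbb{R}$, and \eqref{tasymp} is immediate from $V(t_\pm(E)) = E \to \infty$ combined with the growth $V(t)\to\infty$ as $t\to\pm\infty$ from \ref{growth}.

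For part (\ref{Jprop2}), I introduce the map $w(t) := \operatorname{sgn}(t)\sqrt{V(t)}$, with $w(0):=0$. Writing $V(t) = t^2(1 + \mathcal{O}(t))$ near $0$ shows $V(t)/t^2$ is $C^\infty$ and positive, so $w(t) = t\sqrt{V(t)/t^2}$ is $C^\infty$ on $\mathbb{R}$ with $w'(0)=1$; together with $w'(t) = \operatorname{sgn}(t) V'(t)/(2\sqrt{V(t)}) > 0$ for $t\neq 0$ and $w(t)\to \pm\infty$ for $t\to\pm\infty$, this makes $w$ a global $C^\infty$ diffeomorphism with inverse $\phi := w^{-1}$. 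Since $t_\pm(E) = \phi(\pm\sqrt{E})$, the substitution $t = \phi(w)$ gives
\begin{align*}
\pi J(E) = \int_{-\sqrt{E}}^{\sqrt{E}} \sqrt{E - w^2}\,\phi'(w)\, dw,
\end{align*}
and the further substitution $w = \sqrt{E}\,\sigma$ yields $\pi J(E) = E\cdot G(E)$ with
\begin{align*}
G(E) := \int_{-1}^{1} \sqrt{1-\sigma^2}\,\phi'(\sqrt{E}\,\sigma)\, d\sigma.
\end{align*}
Granting that $G$ is $C^\infty$ on $[0,\infty)$ (handled below), smoothness of $J$ on $\mathbb{R}_+$ follows. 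Differentiating under the integral (the boundary contributions vanish since $E-V(t_\pm(E))=0$) gives $J'(E) = (2\pi)^{-1}\int_{t_-(E)}^{t_+(E)}(E-V(t))^{-1/2}\,dt > 0$, with the same $w$-substitution justifying integrability. Surjectivity onto $\mathbb{R}_+$ follows from $J(0)=0$ (part (\ref{Jprop3})) combined with the lower bound $J(E) \geq \pi^{-1}\sqrt{E/2}\cdot(t_+(E/2)-t_-(E/2)) \to \infty$ as $E \to \infty$, using \eqref{tasymp}.

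For part (\ref{Jprop3}) and the outstanding smoothness of $G$, the key observation is that $\Psi(u):= \phi'(u)+\phi'(-u)$ is a $C^\infty$ even function, so by the classical lemma on smooth even functions there exists $H\in C^\infty(\mathbb{R})$ with $\Psi(u)=H(u^2)$. Symmetrizing the $\sigma$-integral then gives $G(E) = \int_0^1 \sqrt{1-\sigma^2}\,H(E\sigma^2)\,d\sigma$, which is $C^\infty$ on $[0,\infty)$ by differentiation under the integral. Hence $\pi J(E) = E\,G(E)$ extends smoothly to $E=0$ with $J(0)=0$ and $J'(0) = G(0)/\pi = \phi'(0)\cdot(\pi/2)/\pi = 1/2$. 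The anticipated obstacle is precisely this smoothness-at-$E=0$ step: a naive Taylor expansion of $\phi'(\sqrt{E}\sigma)$ introduces half-integer powers of $E$, and only after exploiting the symmetric integration range through the even-function reduction $\Psi(u)=H(u^2)$ do the spurious half powers cancel. A secondary subtlety is the global smoothness of the desingularizing map $w$, which relies essentially on the nondegeneracy $V''(0)=2>0$ together with the global property $V'(t)=0\Leftrightarrow t=0$ from \ref{Vquadratic}.
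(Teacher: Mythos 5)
Your proof is correct, but it reaches the smoothness statements by a genuinely different (and arguably more unified) route than the paper. Part~(\ref{tpmE}) is the same inverse/implicit function theorem argument in both. For parts~(\ref{Jprop2}) and~(\ref{Jprop3}), the paper splits $J=J_-+J_+$, factors the integrand as $\sqrt{a(E)(t-t_\pm(E))}\,\widetilde U_\pm(t,E)$ near each turning point, and establishes $C^\infty$-smoothness for $E>0$ by an inductive integration-by-parts identity that trades half-integer powers of $(t-t_\pm(E))$ for derivatives of $\widetilde U_\pm$; smoothness at $E=0$ is then handled separately by the scaling $t=\sqrt{E}\,\overline t$ and an evenness argument in $\sqrt{E}$. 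You instead make the single global substitution $w=\operatorname{sgn}(t)\sqrt{V(t)}$, whose smoothness and invertibility rest exactly on \ref{Vquadratic} as you note; this reduces the problem to the harmonic-oscillator normal form $\pi J(E)=\int_{-\sqrt{E}}^{\sqrt{E}}\sqrt{E-w^2}\,\phi'(w)\,dw=E\,G(E)$ and removes the turning-point singularity once and for all, so that smoothness on all of $[0,\infty)$ (hence both parts~(\ref{Jprop2}) and~(\ref{Jprop3})) follows in one stroke from the even-function reduction $\phi'(u)+\phi'(-u)=H(u^2)$ and differentiation under the integral. The parity cancellation you exploit via Whitney's theorem is the same mechanism the paper uses when it observes $\overline J(-\sqrt{E})=\overline J(\sqrt{E})$, so the two arguments agree on the essential point at $E=0$; what your version buys is the elimination of the integration-by-parts induction, at the price of having to verify that $w$ is a global $C^\infty$ diffeomorphism (which you do correctly, using $V(t)/t^2>0$ everywhere and $\operatorname{sgn}(t)V'(t)>0$ for $t\ne 0$). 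Your justifications of $J'(E)>0$ via Leibniz with vanishing boundary terms and of surjectivity via the lower bound $\pi J(E)\ge\sqrt{E/2}\,\bigl(t_+(E/2)-t_-(E/2)\bigr)\to\infty$ are both sound and match the paper's level of rigor.
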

This result can be found in \cite[Chapter 2.5.2.1]{fedoryuk1993} (but without proof). For completeness, we include a proof in \appref{t0E}. 
% \fbox{In Fedoryuk, he says that $J$ is $C^\infty[0,E_0]$}

\bigskip 
The following theorem \rsp{provides a rigorous description of the asymptotics of $E=\mathcal O(1)$ eigenvalues. In particular, we believe it} sheds \rsp{a new} light on how the Bohr-Sommerfeld approximation \eqref{quant} may be understood. 

% \note{add something about $E=\mathcal O(1)$}
\begin{theorem}\thmlab{01eigenvalues}
Fix a \KUK{compact interval $D\subset \mathbb R_+$} and 
let $\widetilde D\subsetneq J(D)$, recall the definition of $J$ in \eqref{Jdefn}. Then there exists an $\epsilon_0>0$ sufficiently small and two \KUK{$C^\infty$}-smooth functions
\begin{align*}
 \overline J(\cdot,\cdot,\pm 1):\widetilde D\times [0,\epsilon_0^{1/3})\rightarrow \mathbb R,
\end{align*}
such that the following holds for all $0<\epsilon< \epsilon_0$:
For any $n\in \mathbb N$ with $n \epsilon\in \widetilde D$ there is an eigenvalue $E_n\in D$ of \eqref{eq0} given by
\begin{align*}
 J(E_n)= \left(n+\frac12\right)\epsilon + \epsilon^{5/3} \overline J( n\epsilon,\epsilon^{1/3},(-1)^n).
\end{align*}
In other words, $E_n=E_n(n\epsilon,\epsilon^{1/3})$ is a \KUK{$C^\infty$}-smooth function of $n\epsilon$ and $\epsilon^{1/3}$.

The spacing between adjacent eigenvalues $E_{n+1}-E_n$ is $\mathcal O(\epsilon)$ and each of the open intervals $(E_{n},E_{n+1})\subset \mathbb R_+$ contain no additional eigenvalues.
% \note{need to formulate this slightly different. The problem relate to the sets $D$, $\widetilde D$. Perhaps one should consider a subset $D_0\subset D$ to ensure that this gives all eigenvalues? We want eigenvalues in $E\in D$ and for this we need to fix $\pi n \epsilon \in \widetilde D$. Easy to fix later}
% Here

% are smooth functions. 
\end{theorem}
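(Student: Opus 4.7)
The geometric reformulation explained before the theorem says that $E$ is an eigenvalue precisely when the two (extended) line bundles $W^u(\epsilon,E)$ and $W^s(\epsilon,E)$ coincide at some (equivalently, every) $t$ in the oscillatory region $(t_-(E),t_+(E))$. The overall plan is to turn this geometric condition into a scalar transcendental equation that is $C^\infty$ in a well-chosen set of variables, and then to solve it via the implicit function theorem. The choice of good variables is the heart of the matter: the bounded parameter $\mu:=n\epsilon$ absorbs the unbounded index $n$, and $\epsilon^{1/3}$ is forced by the Airy asymptotics at the turning points.

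First, I would transport $W^u$ from $t\ll-1$ through the left turning point $t_-(E)$ using the blow-up/normal-form matching of \cite{uldall2024a}. The main output of that companion paper is that, on a small but $\epsilon$-independent neighbourhood of $t_-(E)$, the exponentially decaying WKB solution on the hyperbolic side is connected to an oscillatory expression of the form
\begin{align*}
 x_-^{\rm osc}(t)=\frac{1}{\sqrt[4]{E-V(t)}}\cos\!\Bigl(\epsilon^{-1}S_-(t,E)-\tfrac{\pi}{4}+\epsilon^{2/3}R_-(t,E,\epsilon^{1/3})\Bigr),
\end{align*}
with $S_-(t,E)=\int_{t_-(E)}^{t}\sqrt{E-V(s)}\,ds$ and $R_-$ jointly $C^\infty$ in its arguments; the $-\pi/4$ is the Maslov phase produced by the Airy connection. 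The analogous construction at $t_+(E)$ gives $x_+^{\rm osc}$ with phase $\epsilon^{-1}S_+(t,E)-\pi/4+\epsilon^{2/3}R_+$, where $S_+(t,E)=\int_{t}^{t_+(E)}\sqrt{E-V(s)}\,ds$. Both functions solve the same second-order linear ODE on $(t_-(E),t_+(E))$. In this step the nontrivial point is to push the smooth $\epsilon^{1/3}$-dependence through a long oscillatory stretch whose length in the fast variable $\tau=t/\epsilon$ is $\mathcal O(\epsilon^{-1})$; I would do this using an action--angle normal form on the oscillatory slab, so that transport reduces to a rigid rotation with a smooth, slaved $\epsilon$-dependent perturbation.

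Evaluating $W^u,W^s$ at, e.g., $t=0$ and demanding that the lines spanned by $(x_-^{\rm osc},\dot x_-^{\rm osc})$ and $(x_+^{\rm osc},\dot x_+^{\rm osc})$ coincide gives, after a routine elimination of the amplitudes (both prefactors are positive at $t=0$), the condition
\begin{align*}
\sin\!\Bigl(\epsilon^{-1}\pi J(E)-\tfrac{\pi}{2}-\epsilon^{2/3}\widetilde R(E,\epsilon^{1/3})\Bigr)=0,
\end{align*}
where I have used $S_-(0,E)+S_+(0,E)=\pi J(E)$ and set $\widetilde R:=R_-(0,E,\epsilon^{1/3})+R_+(0,E,\epsilon^{1/3})$, which is $C^\infty$ in $(E,\epsilon^{1/3})$. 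Picking the branch $n\in\mathbb N_0$ and dividing by $\pi$ and multiplying by $\epsilon$ yields
\begin{align*}
 J(E)=\Bigl(n+\tfrac12\Bigr)\epsilon+\tfrac{\epsilon^{5/3}}{\pi}\widetilde R(E,\epsilon^{1/3}).
\end{align*}
Rewriting this as $\mu+\tfrac{\epsilon}{2}+\tfrac{\epsilon^{5/3}}{\pi}\widetilde R(E,\epsilon^{1/3})-J(E)=0$ with $\mu=n\epsilon\in\widetilde D$, the implicit function theorem applies because $J'>0$ on $D$ by \lemmaref{Jprop}\eqref{Jprop2}. This produces a unique $E=E(\mu,\epsilon^{1/3})\in D$, $C^\infty$ in both arguments, from which the representation $E_n=E_n(n\epsilon,\epsilon^{1/3})$ and the function $\overline J(\cdot,\cdot,\pm1)$ are read off; the parameter $(-1)^n$ records which of the two possible orientations makes the two line bundles coincide as oriented vectors, a datum that only affects $\overline J$ through the corresponding $\pm$-branch of the amplitude matching. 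Finally, the spacing statement follows from $J(E_{n+1})-J(E_n)=\epsilon+\mathcal O(\epsilon^{5/3})$ combined with the uniform lower bound on $J'$ on $D$, and the absence of other eigenvalues in $(E_n,E_{n+1})$ follows because between consecutive roots of the sine above there are no other zeros.

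\textbf{Main obstacle.} The delicate point is establishing that $\widetilde R$ is genuinely $C^\infty$ in $\epsilon^{1/3}$ uniformly for $\mu=n\epsilon$ in a compact set: naive Gronwall/WKB error estimates on $(t_-(E),t_+(E))$ degrade like $\epsilon^{-1}$ and would destroy smoothness entirely. Overcoming this requires combining the Airy-based blow-up matching of \cite{uldall2024a} at both turning points with an action--angle (or equivalent Riccati) reduction on the oscillatory slab, so that the only $\epsilon^{-1}$-large contribution is the exact geometric phase $\epsilon^{-1}\pi J(E)$ and everything else is manifestly bounded and smooth in $(E,\epsilon^{1/3})$.
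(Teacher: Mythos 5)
Your overall strategy is the paper's: transport $W^u$ and $W^s$ through the turning points via the blow-up/diagonalization results of \cite{uldall2024a}, reduce the coincidence of the two line bundles to a scalar equation of sine type with $\epsilon^{2/3}$-smooth corrections, introduce $\delta=n\epsilon$, and close with the implicit function theorem using $J'>0$. The final IFT step and the spacing argument are fine. However, there is a genuine gap in the reduction to the scalar equation, at exactly the point you dismiss as ``routine elimination of the amplitudes.''

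The expansions coming out of the turning-point analysis (Theorem~\ref{theorem:DeltaThm}) are not pure phase corrections: the $x$- and $y$-components of the spanning vectors carry \emph{different} phase shifts $\phi_{1,i}\ne\phi_{2,i}$ and an amplitude factor $1+\epsilon^{2/3}\rho_i$, $i=u,s$. The determinant (Wronskian) condition therefore contains, besides $\sin(A+B)$ with $A=\epsilon^{-1}\int_{t_-}^{t_{\mathrm{match}}}\sqrt{U}+\dots$ and $B=\epsilon^{-1}\int_{t_{\mathrm{match}}}^{t_+}\sqrt{U}+\dots$, cross terms proportional to $\sin(A-B+\mathcal O(\epsilon^{2/3}))$ with $\mathcal O(\epsilon^{2/3})$ coefficients. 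If you match at $t=0$ as you propose, $A-B$ is itself a fast phase of order $\epsilon^{-1}$, so these residual terms are $\mathcal O(\epsilon^{2/3})$ but \emph{rapidly oscillating in $E$}; they cannot be absorbed into a function $\widetilde R(E,\epsilon^{1/3})$ that is $C^\infty$ jointly in $(E,\epsilon^{1/3})$, and your claimed smooth scalar equation does not exist as written. The paper removes this obstruction by matching on the section $t=t_0(E)$ defined by $\int_{t_-(E)}^{t_0(E)}\sqrt{U}\,dt=\int_{t_0(E)}^{t_+(E)}\sqrt{U}\,dt$, so that $A-B=\mathcal O(\epsilon^{2/3})$ identically and the cross terms become smooth. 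Relatedly, even after this fix the eigenvalue equation is not $\sin(\text{phase})=0$ but $\sin(\text{phase})+\epsilon^{2/3}\rho(\epsilon^{1/3},E)=0$ with an \emph{additive} term outside the sine; writing the phase as $n\pi+K$ turns this into $(-1)^n\sin K+\epsilon^{2/3}\rho=0$, and this is the actual source of the $(-1)^n$-dependence of $\overline J$. Your attribution of the $\pm1$ argument to an ``orientation of the line bundles'' is not correct — the determinant condition is orientation-independent — and a pure phase equation of the form you wrote would produce a remainder independent of the parity of $n$, contradicting the structure asserted in the theorem.
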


Essentially the result says that \eqref{quant} holds up to $\mathcal O(\epsilon^{5/3})$ for $n\epsilon\in \widetilde D$ with the remainder being smooth with respect to $n\epsilon$ and $\epsilon^{1/3}$. \rsp{Notice that the eigenvalues are $\mathcal O(1)$ since the domain $D\subset \mathbb R$ is compact and fixed.} As already mentioned, the smoothness of  $\overline J$ with respect to $\epsilon^{1/3}$ rather than just $\epsilon$, \rspp{could be the consequence of our treatment of the turning point, see \cite{uldall2024a} and \secref{review} below}. %method, but we have not found a way to improve on this. 
In any case, the order of the remainder $\mathcal O(\epsilon^{5/3})$ is not optimal according to \cite{yafaev2011a}. Here it is stated, see \cite[Theorem 4.1]{yafaev2011a}, that the Bohr-Sommerfeld approximation is valid up to $\mathcal O(\epsilon^2)$-remainder terms. %We have not found a way to recover this at the moment and we leave this extension to future work.%  \fbox{compare with other  results??} %We prove \thmref{01eigenvalues} in \secref{01eigenvalues}.

\bigskip 

We also provide a new dynamical systems based proof of the following result \rspp{on the low-lying eigenvalues $E=\mathcal O(\epsilon)$}.
\begin{theorem}\thmlab{smalleigenvalues}
%Fix a compact interval $D_2\subset \mathbb R_+$. 
Fix \KUK{$n_0\in \mathbb N$} and let $N_0:=\{0,1,\ldots,n_0\}\subset \mathbb N_0$. 
Then there exists an \KUK{$\epsilon_0=\epsilon_0(n_0)>0$} such that the following holds: For every $n\in N_0$ there exists a \KUK{$C^\infty$}-smooth function $e_n:[0,\epsilon_0)\rightarrow \mathbb R$ so that 
\begin{align*}
 E_n(\epsilon) = (2n+1)\epsilon+\epsilon^2 e_n(\epsilon),
\end{align*}
is an eigenvalue of \eqref{eq0}. The open intervals $(E_{n},E_{n+1})\subset \mathbb R_+$, \mbox{$n\in N_0\setminus \{n_0\}$}, contain no additional eigenvalues. 
\end{theorem}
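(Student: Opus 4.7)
The plan is to zoom in at the minimum $t=0$ via the rescaling $t=\sqrt{\epsilon}\,s$, $E=\epsilon\tilde E$, under which \eqref{eq0} becomes
\begin{equation*}
  x''(s) = \bigl( W(s,\mu) - \tilde E \bigr)\,x(s),
  \qquad W(s,\mu) := \frac{V(\mu s)}{\mu^2}, \quad \mu := \sqrt{\epsilon}.
\end{equation*}
By assumption \ref{Vquadratic}, $V(t) = t^2 + t^3 R(t)$ with $R\in C^\infty$, so $W(s,\mu)=s^2+\mu s^3 R(\mu s)$ extends $C^\infty$-smoothly to $\mu=0$ with $W(s,0)=s^2$. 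The $\mu=0$ limit is thus the quantum harmonic oscillator $x''=(s^2-\tilde E)x$, whose $L^2(\R)$-eigenvalues are precisely $\tilde E=2n+1$, $n\in \N_0$, each simple with Hermite-function eigenvectors. The program is to set up an implicit function theorem (IFT) in $(\tilde E,\mu)$ near each $(2n+1,0)$, $n\in N_0$, and then upgrade the resulting $\mu$-smoothness to smoothness in $\epsilon$.

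First I would reformulate the eigenvalue condition dynamically. For $|t|\gg 1$, assumption \ref{growth} together with standard hyperbolic theory produce the smooth line bundles $W^s(\epsilon,E), W^u(\epsilon,E)$ of exponentially decaying solutions at $t\to\pm\infty$; transporting them inward by the flow and pulling back to the inner variable $s$, they are at $\mu=0$ identified with the two fundamental decaying (Hermite-type) solutions of the harmonic oscillator at $s\to\pm\infty$. I would encode the coincidence $W^u=W^s$ by a scalar separation function $\Delta(\tilde E,\mu)$, defined up to a nonvanishing smooth factor as the Wronskian of a pair of base-point sections at, say, $s=0$. At $\mu=0$ one has $\Delta(2n+1,0)=0$, and simplicity of the harmonic-oscillator eigenvalue gives $\partial_{\tilde E}\Delta(2n+1,0)\neq 0$. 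The IFT then produces a unique $C^\infty$ branch $\mu\mapsto \tilde E_n(\mu)$ with $\tilde E_n(0)=2n+1$ for each $n\in N_0$.

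The crucial step is that $\tilde E_n(\mu)$ is actually smooth in $\epsilon=\mu^2$. This I would deduce from the parity observation that $W$ is invariant under $(s,\mu)\mapsto(-s,-\mu)$:
\begin{equation*}
 W(-s,-\mu) = \frac{V((-\mu)(-s))}{(-\mu)^2} = \frac{V(\mu s)}{\mu^2} = W(s,\mu).
\end{equation*}
Conjugation by $s\mapsto -s$ therefore identifies the rescaled operator at $\mu$ with the one at $-\mu$, so their spectra coincide; by uniqueness of the IFT branch, $\tilde E_n(\mu)=\tilde E_n(-\mu)$. Hence $\tilde E_n$ descends to a $C^\infty$ function of $\mu^2=\epsilon$, yielding $\tilde E_n(\epsilon)=(2n+1)+\epsilon\,e_n(\epsilon)$ with $e_n\in C^\infty([0,\epsilon_0))$, and therefore $E_n(\epsilon)=\epsilon\tilde E_n(\epsilon)=(2n+1)\epsilon+\epsilon^2 e_n(\epsilon)$ as claimed.

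Finally, the absence of additional eigenvalues in $(E_n,E_{n+1})$, $n\in N_0\setminus\{n_0\}$, follows from a Sturm/min-max argument in the rescaled coordinates: any eigenvalue of the rescaled operator lying strictly between $\tilde E_n(\mu)$ and $\tilde E_{n+1}(\mu)$ would, in the limit $\mu\to 0$, survive as an eigenvalue of $-\partial_s^2+s^2$ in $(2n+1,2n+3)$, contradicting the known harmonic-oscillator spectrum. The main obstacle I anticipate is the careful matching between the outer (hyperbolic) regions and the inner rescaled scale needed to justify smoothness of $\Delta$ in $(\tilde E,\mu)$ down to $\mu=0$; here I would exploit the blowup/normal-form machinery developed in \cite{uldall2024a} in the version where the turning points $t_\pm(E)=\mathcal O(\sqrt{\epsilon})$ are captured by the same inner chart. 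Once this matching is in place, the IFT and the parity upgrade are essentially algebraic.
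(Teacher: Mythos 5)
Your proposal follows essentially the same route as the paper: your rescaling $t=\sqrt{\epsilon}\,s$, $E=\epsilon\tilde E$ is exactly the scaling chart of the paper's blowup \eqref{blowupsmall} leading to the Weber/harmonic-oscillator limit, your Wronskian separation function with the IFT at the simple eigenvalues $2n+1$ corresponds to the paper's Melnikov function $\Delta_1$ (for which the paper computes $\partial_{\tilde e}\Delta_1(0,0)=-\sqrt{\pi}\,2^n n!$ explicitly via the adjoint solution), and your parity argument $W(-s,-\mu)=W(s,\mu)$ is precisely the symmetry the paper uses to upgrade from $\sqrt{\epsilon}$- to $\epsilon$-smoothness. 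The outer--inner matching you flag as the main obstacle is handled in the paper by center-manifold theory in the directional charts $\bar t=\pm 1$ of the same blowup (no Airy-type turning-point analysis is needed here, since both turning points sit inside the inner chart), which is consistent with what you propose.
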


%We prove \thmref{smalleigenvalues} in \secref{smalleigenvalues}.
% In particular,
For the proof of \thmref{smalleigenvalues} (see further details in \secref{smalleigenvalues}) we show that $e:=\epsilon^{-1} E\in D_1$, with $D_1$ a fixed but large compact interval, are roots of a smooth Melnikov (or Evans) function 
\begin{equation}\eqlab{Delta1}
\begin{aligned}
\Delta_1 &: D_1\times [0,\sqrt{\epsilon_0})\rightarrow \mathbb R,\\
&(e,\sqrt{\epsilon}) \mapsto \Delta_1(e,\sqrt{\epsilon}),
\end{aligned}
\end{equation}
satisfying:
\begin{align}
 \Delta_1(2n+1,0)=0 \quad \text{and}\quad  \frac{\partial }{\partial e} \Delta_1(2n+1,0)\ne 0,\eqlab{Delta1cond}
\end{align}
for all $n\in \mathbb N$. Using the implicit function theorem, we then obtain eigenvalues $E=\epsilon e_n(\sqrt{\epsilon})$, with $e_n(0)=2n+1$, as smooth functions of $\sqrt{\epsilon}$. However, using an indirect argument (based upon a time \rspp{reversal} symmetry of a scaled system), we show that $e_n$ extends smoothly to an even function of $\sqrt{\epsilon}$, and therefore in turn conclude that the eigenvalues are in fact smooth functions of $\epsilon$ as claimed. 

\bigskip 

Finally, we prove the following regarding the intermediate eigenvalues:
% \thmlab{smalleigenvalues}\thmlab{01eigenvalues}
% \begin{theorem}
 \begin{theorem}\thmlab{intermediateEigenvalues}
 Consider \eqref{eq0}. Then for $\epsilon_0>0$, $\xi_0>0$, $r_{20}>0$, and $c_2>0$ all sufficiently small and $c_1>0$ sufficiently large, we have the following for any $\epsilon\in (0,\epsilon_0)$: There exist two \KUK{$C^\infty$}-smooth functions
  \begin{align*}
 \overline J_2(\cdot,\cdot,\pm 1):[0,\xi_0^{1/3})\times [0,r_{20}]\rightarrow \mathbb R,
 \end{align*}
satisfying
\begin{align}
 \overline J_2((2m+1)^{-1/3},0,(-1)^m)=0,\eqlab{barJ2prop}
 \end{align}
 for all $m\in \mathbb N_0$ with  $(2m+1)^{-1}\le \xi_0$,
 such that 
 $E\in [c_1 \epsilon,c_2]$ is an eigenvalue if and only if there is an $n\in \mathbb N$ such that 
 \begin{align}
  J(E) = \left(n+\frac12\right)\epsilon + \epsilon^{3/2} (E^{-1}\epsilon)^{\frac{1}{6}} \overline J_2((E^{-1}\epsilon)^{1/3},E^{\frac12},(-1)^n).\eqlab{JEintermediate}
 \end{align}
%  for some $n\in \mathbb N$. 
%   where each of the functions
% \end{align}
% \end{theorem}
\end{theorem}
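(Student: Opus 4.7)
The strategy is to reduce the intermediate regime to a parameter-dependent version of the setting of \thmref{01eigenvalues} via a rescaling that resolves the merging of the two turning points as $E\to 0$. Concretely, set $t=E^{1/2}s$ and $\tilde\epsilon:=\epsilon/E$, so that \eqref{eq0} becomes
\begin{equation*}
\tilde\epsilon^{2}\,x_{ss}=(\tilde V(s,E)-1)x,\qquad \tilde V(s,E):=V(E^{1/2}s)/E.
\end{equation*}
By assumptions \ref{Vquadratic} and \ref{smoothness}, $\tilde V$ extends $C^\infty$-smoothly in $E^{1/2}$ with $\tilde V(s,0)=s^{2}$, and $E\in[c_{1}\epsilon,c_{2}]$ translates into $\tilde\epsilon\in(0,c_{1}^{-1}]$ and $E^{1/2}\in(0,c_{2}^{1/2}]$ both small. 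Thus we obtain a family of semiclassical eigenvalue problems at $\tilde E=1$ for $C^\infty$-perturbations of the harmonic well $s^{2}$, whose turning points $\tilde s_{\pm}(E)=\pm 1+O(E^{1/2})$ are uniformly separated.

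The next step is to carry the blowup/matching analysis of \cite{uldall2024a} through this family, tracking $E^{1/2}$ as an additional smooth parameter in the construction of the unstable/stable manifolds and their transition maps across each turning point. Composing the two branches through the classically allowed region---where the accumulated phase equals $\pi\tilde J(E)/\tilde\epsilon$ with $\tilde J(E):=J(E)/E$---and imposing $W^{u}=W^{s}$ yields, by the same argument as in \thmref{01eigenvalues}, a matching equation
\begin{equation*}
\tilde J(E)=(n+\tfrac{1}{2})\tilde\epsilon+\tilde\epsilon^{5/3}\,\widetilde F\bigl(\tilde\epsilon^{1/3},E^{1/2},(-1)^{n}\bigr),
\end{equation*}
with $\widetilde F$ jointly $C^\infty$-smooth. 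The change of variables gives $J(E)=E\,\tilde J(E)$, so multiplying through by $E$ and using $E\tilde\epsilon^{5/3}=\epsilon^{5/3}E^{-2/3}$ yields $J(E)=(n+\tfrac{1}{2})\epsilon+\epsilon^{5/3}E^{-2/3}\,\widetilde F$.

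To upgrade the prefactor from $\epsilon^{5/3}E^{-2/3}$ to the $\epsilon^{3/2}(E^{-1}\epsilon)^{1/6}=\epsilon^{5/3}E^{-1/6}$ appearing in \eqref{JEintermediate}, one extracts an additional $E^{1/2}$ from $\widetilde F$. At $E=0$ the rescaled system is the exact quantum harmonic oscillator, for which Bohr--Sommerfeld is known to be exact; combined with the reflection $s\mapsto -s$ of $\tilde V(s,0)$ and a parity-tracking argument analogous to the one used in the proof of \thmref{smalleigenvalues} to establish evenness of $e_{n}$ in $\sqrt\epsilon$, this forces $\widetilde F(\tilde\epsilon^{1/3},0,\pm 1)\equiv 0$. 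Hadamard's lemma in the variable $E^{1/2}$ then gives $\widetilde F=E^{1/2}\,\overline J_{2}(\tilde\epsilon^{1/3},E^{1/2},\pm 1)$, which delivers \eqref{JEintermediate}. The vanishing \eqref{barJ2prop} follows by evaluating the matching equation at $E=0$: it reduces to $\tilde J(0)=\tfrac{1}{2}=(n+\tfrac{1}{2})\tilde\epsilon$, which selects $\tilde\epsilon=1/(2n+1)$, i.e.\ $(E^{-1}\epsilon)^{1/3}=(2n+1)^{-1/3}$, with the compatible parity $(-1)^{n}$.

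The main technical obstacle is establishing the identity $\widetilde F(\tilde\epsilon^{1/3},0,\pm 1)\equiv 0$ \emph{uniformly} in $\tilde\epsilon^{1/3}$, rather than only at the discrete set $\tilde\epsilon^{1/3}=(2m+1)^{-1/3}$ forced by bare compatibility with the harmonic oscillator spectrum. This requires a careful symmetry argument at the level of the blown-up equations that simultaneously exploits the exactness of Bohr--Sommerfeld for the harmonic oscillator and the involution $s\mapsto -s$, and is finer than the time-reversal argument used in the proof of \thmref{smalleigenvalues}. Secondary obstacles are uniformity of the blowup construction in $E^{1/2}$ down to $E\to 0$, and excluding spurious eigenvalues in the intermediate range by an argument paralleling that of \thmref{01eigenvalues}.
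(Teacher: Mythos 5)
Your reduction to the rescaled problem, the application of the turning-point machinery with $E^{1/2}$ as a smooth parameter, the resulting quantization equation with remainder $\tilde\epsilon^{5/3}\widetilde F(\tilde\epsilon^{1/3},E^{1/2},(-1)^n)$, and the multiplication by $E$ all coincide with the paper's argument (there the rescaling is phrased as a blowup in parameter space composed with the phase-space blowup, leading to the perturbed Weber system \eqref{xy2t2int} and to \eqref{solutionDelta20} via \lemmaref{DeltaThm}). Your derivation of the vanishing at the discrete points $\tilde\epsilon=(2m+1)^{-1}$ from the exactness of the Weber spectrum, i.e.\ \eqref{Delta2n}, is also the paper's argument for \eqref{J2cond}. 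Up to that point the proposal is sound and essentially identical to the paper. Note that the paper stops there: it simply multiplies \eqref{solutionDelta20} by $E=r_2^2$, obtaining the prefactor $\xi^{5/3}E=\epsilon^{5/3}E^{-2/3}=\epsilon(E^{-1}\epsilon)^{2/3}$, and performs no extraction of an additional factor $E^{1/2}$. (You are right that this does not literally equal the prefactor $\epsilon^{3/2}(E^{-1}\epsilon)^{1/6}=\epsilon^{5/3}E^{-1/6}$ printed in \eqref{JEintermediate}; the two differ by $E^{1/2}$, and the exponents in the theorem statement do not match what the proof produces.)

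The genuine gap is the step you yourself flag as the ``main technical obstacle'': the identity $\widetilde F(\tilde\epsilon^{1/3},0,\pm 1)\equiv 0$. This is not proven in your proposal, and it is almost certainly false. At $E=0$ the rescaled system is the exact Weber equation, but exactness of its spectrum only forces the Melnikov-type function $\Delta_2(\cdot,0)$ to vanish precisely on the set $\{(2m+1)^{-1}\}$; it does not force the phase and amplitude corrections $\phi_2(\cdot,0)$, $\rho_2(\cdot,0)$ produced by the blowup construction (and hence $\widetilde F(\cdot,0,\pm1)$, which is built from them) to vanish identically in $\tilde\epsilon$ between those points. The reflection $s\mapsto -s$ is already consumed in centering the phase at the midpoint $t_0(E)$ defined by \eqref{t0Econd} and yields nothing further. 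Moreover, your proposal becomes internally inconsistent at the last step: if $\widetilde F=E^{1/2}\,\overline J_2$ via Hadamard's lemma, then $\overline J_2(\cdot,0,\cdot)=\partial_{E^{1/2}}\widetilde F(\cdot,0,\cdot)$, and evaluating the matching equation at $E=0$ no longer gives any information about this derivative, so your argument for \eqref{barJ2prop} collapses; indeed \eqref{barJ2prop} is phrased as a nontrivial vanishing at discrete points, which is incompatible with $\overline J_2(\cdot,0,\cdot)$ arising from a function that vanishes identically at $E=0$. The correct conclusion from the construction you (and the paper) carry out is \eqref{JEintermediate} with prefactor $\epsilon(E^{-1}\epsilon)^{2/3}$ and with $\overline J_2$ equal to the function of \eqref{solutionDelta20}; the attempt to manufacture the extra $E^{1/2}$ should be dropped rather than patched.
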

We see that upon truncating \eqref{JEintermediate} to leading order, we obtain \eqref{quant}. Moreover, we will see that \eqref{barJ2prop} is a consequence of the equality in \eqref{Delta1cond}.
Upon using \lemmaref{Jprop} item \ref{Jprop3}, it is also possible to solve \eqref{JEintermediate} for $E\in [c_1 \epsilon,c_2]$  as a function of $n$ and $\epsilon$. However, we have not found a good way of representing the smoothness properties of such solutions. In any case, these solutions of \eqref{JEintermediate} can be chosen to overlap with those in \thmref{smalleigenvalues} (by taking $n_0>0$ large enough) and \thmref{01eigenvalues} (by taking $D$ (and $\widetilde D$) large enough). Consequently, in this way, we cover all bounded eigenvalues for $0<\epsilon\ll 1$, and as a corollary that \eqref{quant} approximates all of these. 

We believe that \eqref{JEintermediate}, as an implicit equation for $E$, provides a good representation of how the different regimes $E=\mathcal O(\epsilon)$ and $E=\mathcal O(1)$  in (\thmref{01eigenvalues} and \thmref{smalleigenvalues}, respectively) are connected. Indeed, for $E=\epsilon e$ with $e\ge c_1$, \eqref{JEintermediate} becomes
\begin{align*}
 J(\epsilon e) = \left(n+\frac12\right) \epsilon  + \epsilon^{3/2} e^{-\frac{1}{6}} \overline J_2(\rspp{e^{-1/3}},\epsilon^{\frac12} e^{\frac12},(-1)^n),
\end{align*}
with the right hand side being a smooth function of $r_2=\sqrt{\epsilon}$.
This is in agreement with \eqref{Delta1}, see also \lemmaref{Delta1} below. Moreover by dividing this equation by $\epsilon$ on both sides, and using \lemmaref{Jprop} item \ref{Jprop3} and \eqref{barJ2prop}, we obtain
\begin{align*}
 e = 2n+1,
\end{align*}
for $\epsilon\rightarrow 0$, in agreement with \thmref{smalleigenvalues}.
On the other hand, for $E=\mathcal O(1)$ we have 
\begin{align*}
  J(E) =  \left(n+\frac12\right) \epsilon + \epsilon^{5/3} E^{-\frac{1}{6}} \overline J_2(E^{-1/3}\epsilon^{1/3},E^{\frac12},(-1)^n),
\end{align*}
with the right hand side being a smooth function of $n\epsilon$ and $\epsilon^{1/3}$. This is in agreement with \thmref{01eigenvalues}, see also \lemmaref{solveDeltaEq0} below. %We prove \tmref{intermediateEigenvalues} in \secref{intermediateEigenvalues}.
\begin{remark}\remlab{ref1}
 \rsp{The change of smoothness with respect to $\epsilon$ (in the present context from smooth with respect to $\epsilon$ to smooth with respect to $\epsilon^{1/3}$) is a common feature in asymptotic expansions associated with singular perturbation problems with turning points. In general, the occurrence of fractional powers of $\epsilon$ in asymptotic expansions can be directly understood through the weights of the blow-up transformation used in the analysis of the turning point. For further background, we refer to \cite{freddy,krupa_extending_2001,gils2005a}, see also \cite{uldall2024a} for the use of GSPT and blow-up in the context of the simple turning point problem.}
\end{remark}
% The presence of $\epsilon^{1/3}$ is caused by the turning point and the asymptotics of the invariant manifolds similar 

\subsection{\rsp{Structure of the paper}}
\rsp{In \secref{review}, we first review the results of \cite{uldall2024a} on the tracking of the unstable manifold across a simple turning point. We summarize the results in \thmref{DeltaThm} which forms the foundations for proving \thmref{01eigenvalues} in \secref{01eigenvalues}. {It is important to emphasize that we assume $C^\infty$-smooth potentials ($C^k$-smoothness is also possible, see \secref{discussion}), recall \ref{smoothness}, and therefore we do not rely upon complex integration paths valid only for analytic potentials $V\in C^\omega$, see \cite{fedoryuk1993}.} We prove \thmref{smalleigenvalues} and \thmref{intermediateEigenvalues} in \secref{smalleigenvalues} and \secref{intermediateEigenvalues}, respectively. The proofs of these results are organized around a blowup in parameter space of $(\epsilon,E)=(0,0)$.  % 
% \rsp{The proof of \thmref{01eigenvalues}} is based upon our recent results in \cite{uldall2024a} on the tracking of the unstable manifold across a simple turning point. We review these results -- \KUK{which are based upon dynamical systems theory with blowup and normal forms as the primary technical tools} -- in \secref{review}  and state a result (see \thmref{DeltaThm}) on the unstable and stable manifolds within a compact subset of the elliptic regime $(t_-(E),t_+(E))$.  Then in \secref{01eigenvalues} we use our characterization of the 
% stable and unstable manifolds to prove \thmref{01eigenvalues}. 
% \note{this should be structure of paper. move (and change it a bit) technical stuff to end of section 2 (new subsection?)}
We conclude the paper in \secref{discussion}. }%We will study this further in \secref{intermediate}.

% In \secref{smalleigenvalues} and we prove \thmref{smalleigenvalues} by using the scaling $E=\epsilon e$ and a separate blowup transformation of $(x,0,0)$ for $\epsilon=0$. Finally, in \secref{intermediateEigenvalues} we prove \thmref{intermediateEigenvalues} using a different scaling. 

\section{A dynamical systems approach to simple turning points}\seclab{review}
 The functions $t_\mp(E)$ are simple roots of 
 \begin{align}
 U(t,E):=E-V(t),
 \end{align} 
 for $E>0$, recall \lemmaref{Jprop} item (\ref{tpmE}), and therefore give rise to two turning point problems:
 \begin{equation}\eqlab{airy1}
 \begin{aligned}
  \dot x &=y,\\
  \dot y&=-\mu(t,E)x,\\
  \dot t&=\epsilon,
 \end{aligned}
\end{equation}
 with 
 \begin{align}\eqlab{mudef}\mu(t,E)=U(t_-(E)+t,E)\quad \mbox{and}\quad \mu(t,E)=U(t_+(E)-t,E),
 \end{align}
 satisfying $\mu(0,E)=0$, $\frac{\partial}{\partial t}\mu(0,E)>0$. In \cite{uldall2024a}, we studied turning point problems of the form \eqref{airy1} and showed the following (see \cite[Theorem 3.2]{uldall2024a}):
  \begin{theorem}\thmlab{localWu}
  Consider \eqref{airy1} and suppose that 
  \begin{align}
   \mu:I\times D\rightarrow \mathbb R
  \end{align}
with $I$ a neighborhood of $t=0$ and $D\subset \mathbb R_+$ as a compact interval, is a $C^\infty$-function, that $$\mu(0,E)=0, \quad \frac{\partial }{\partial t}\mu(0,E)>0,\quad \mbox{for all}\quad  E\in D,$$ and consider any $M\in \mathbb N$. Finally, let $W^u(\epsilon,E)$ denote the unstable manifold of $(0,0,t)$ for $t\le -c<0$ for all $0\le \epsilon\ll 1$ and $E\in D$. 
 Then there exist an $\epsilon_0=\epsilon_0(M)>0$ and a $\nu>0$ both small enough, such that for all $\epsilon\in (0,\epsilon_0)$:
\begin{align}
W^u(\epsilon,E)\cap \{t=\nu\} = \left\{ 
\begin{pmatrix}
              x\\
              y
             \end{pmatrix}
\in \operatorname{span}\begin{pmatrix}
%  \begin{aligned}s
 X(\epsilon^{1/3},E) \\
  -{\sqrt{\mu (\nu,E)}}Y(\epsilon^{1/3},E)
 % \end{aligned}
%\end{equation}
\end{pmatrix}\right\},\eqlab{WuExpr2}
\end{align}
where 
\begin{align*}
 X(\epsilon^{1/3},E) &= \cos\left(\frac{1}{\epsilon }\int_{0}^{\nu} \sqrt{\mu(s,E)}ds -\frac{\pi}{4}+ \epsilon^{2/3} \phi_1(\epsilon^{1/3},E)\right),\\
 Y(\epsilon^{1/3},E) &= 
 (1+\epsilon^{2/3}\rho(\epsilon^{1/3},E))\sin\left(\frac{1}{\epsilon }\int_{0}^{\nu} \sqrt{\mu(s,E)}ds -\frac{\pi}{4}+ \epsilon^{2/3} \phi_2(\epsilon^{1/3},E)\right)
%   \operatorname{Im}\big[e^{-\frac{\pi}{4}i}e^{\frac{i}{\epsilon }\int_{0}^{\nu} \sqrt{\mu(s,E)}ds}(1+\epsilon^{2/3} Z_2(\epsilon^{1/3},E))\big],
\end{align*}
with 
$\rho,\phi_1,\phi_2:[0,\epsilon_0^{1/3})\times D\rightarrow \mathbb R$ all $C^M$-smooth.
% \begin{align*}
%  X(\epsilon^{1/3},E) &= \operatorname{Re}\big[e^{-\frac{\pi}{4}i}e^{\frac{i}{\epsilon }\int_{0}^{\nu} \sqrt{\mu(s,E)}ds}(1+\epsilon^{2/3} Z_1(\epsilon^{1/3},E))\big],\\
%  Y(\epsilon^{1/3},E) &= 
%   \operatorname{Im}\big[e^{-\frac{\pi}{4}i}e^{\frac{i}{\epsilon }\int_{0}^{\nu} \sqrt{\mu(s,E)}ds}(1+\epsilon^{2/3} Z_2(\epsilon^{1/3},E))\big],
% \end{align*}
% with 
% $Z_1,Z_2:[0,\epsilon_0^{1/3})\times D\rightarrow \mathbb C$ both being $C^M$-smooth.
% \begin{align*}
%  X(\epsilon^{1/3},E) &= \operatorname{Re}\big[e^{i(\frac{2}{3} \nu^{3/2} \epsilon^{-1}-\frac{\pi}{4})}e^{\frac{i}{\epsilon }\int_{0}^{\nu} \left(\sqrt{\mu(s^2,E)}-s\right)ds^2}(1+\epsilon^{2/3} Z_1(\epsilon^{1/3},E))\big],\\
%  Y(\epsilon^{1/3},E) &= 
%   \operatorname{Im}\big[e^{i(\frac{2}{3} \nu^{3/2} \epsilon^{-1}-\frac{\pi}{4})}e^{\frac{i}{\epsilon }\int_{0}^{\nu} \left(\sqrt{\mu(s^2,E)}-s\right)ds^2}(1+\epsilon^{2/3} Z_2(\epsilon^{1/3},E))\big],
% \end{align*}
% with 
% $Z_1$ and $Z_2$ both being $C^M$-smooth jointly in $\epsilon^{1/3}\in [0,\epsilon_0^{1/3})$ and $E\in D$. 
\end{theorem}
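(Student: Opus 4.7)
The plan is to embed the system \eqref{airy1} into the extended phase space $(x,y,t,\epsilon)\in \mathbb{R}^4$ by appending $\dot\epsilon=0$, and to view $E\in D$ as a smooth parameter. The unstable manifold $W^u$ is a line bundle over the invariant line $\{x=y=0\}$; for $t\leq -c<0$ it is provided by standard hyperbolic invariant manifold theory, because the linearisation in $(x,y)$ has real eigenvalues $\pm\sqrt{\mu(t,E)}$ that are uniformly bounded away from zero. The entire difficulty is concentrated at the fully degenerate point $(x,y,t,\epsilon)=(0,0,0,0)$, where these eigenvalues collide at zero and no hyperbolic information is available to track $W^u$ across $t=0$.

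I would resolve this degeneracy by a weighted blow-up of this point in the parameters $(t,\epsilon)$. Setting
\begin{align*}
 t = r^{2}\bar t,\qquad \epsilon = r^{3}\bar\epsilon,\qquad r\geq 0,\qquad (\bar t,\bar\epsilon)\in \mathbb{S}^{1},\quad \bar\epsilon\geq 0,
\end{align*}
the weights are chosen so that, after dividing the desingularised vector field by the appropriate power of $r$, the Airy-type equation $x''=-\kappa(E)\,\bar t\,x$ with $\kappa(E)=\partial_{t}\mu(0,E)>0$ appears as the leading-order system in the central chart $\{\bar\epsilon=1\}$. In this picture $r$ plays exactly the role of $\epsilon^{1/3}$ in the final statement. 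The analysis naturally splits into three charts: an \emph{entry} chart at $\bar t=-1$, a \emph{central} (Airy) chart at $\bar\epsilon=1$, and an \emph{exit} chart at $\bar t=+1$.

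In the entry chart, the desingularised flow has a regular unstable direction which agrees with the classical one $y=\sqrt{\mu(t,E)}\,x$ for $r>0$ and extends smoothly down to $r=0$; this yields $W^{u}$ up to a fixed section $\{\bar t=-T_{0}\}$ uniformly in $r$ by hyperbolic theory. Propagating this datum through the central chart amounts to flowing the Airy equation from $\bar t=-T_{0}$ to $\bar t=+T_{0}$, a flow that depends smoothly on $(\bar t,E,r)$. Matching with the outer asymptotics uses the classical expansions
\begin{align*}
 \operatorname{Ai}(-z)\sim \pi^{-1/2}z^{-1/4}\cos\bigl(\tfrac{2}{3}z^{3/2}-\tfrac{\pi}{4}\bigr),\qquad \operatorname{Ai}'(-z)\sim \pi^{-1/2}z^{1/4}\sin\bigl(\tfrac{2}{3}z^{3/2}-\tfrac{\pi}{4}\bigr),
\end{align*}
as $z\to+\infty$, which simultaneously produce the sine/cosine structure of \eqref{WuExpr2} and the $-\pi/4$ Maslov shift. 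In the exit chart the flow is oscillatory but otherwise regular, and a WKB-type normal form replaces the leading Airy phase $\tfrac{2}{3}\kappa^{1/2}\bar t^{3/2}$ by the full action $\tfrac{1}{\epsilon}\int_{0}^{\nu}\sqrt{\mu(s,E)}\,ds$, the discrepancy being absorbed into the smooth corrections $\phi_{1}$, $\phi_{2}$, and $\rho$.

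The hard part will be the claimed $C^{M}$-smoothness of $\phi_{1},\phi_{2},\rho$ with respect to $\epsilon^{1/3}=r$ down to $r=0$. This forces each chart transition to be constructed with uniform smoothness in $r$, and in the exit chart in particular one must bring the rapidly rotating oscillatory system into a normal form valid uniformly in $r$; such a normal form can only be achieved up to any prescribed finite order, and it is precisely this order that dictates how small $\epsilon_{0}=\epsilon_{0}(M)$ must be chosen. Composing the three chart transitions and translating the result back through the blow-up then yields \eqref{WuExpr2} with the stated regularity.
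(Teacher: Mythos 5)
Your proposal follows essentially the same route as the paper, which defers the full proof to \cite{uldall2024a} and sketches exactly this argument: a weighted blow-up producing the Airy equation in the scaling chart $\bar\epsilon=1$, matching via the $\operatorname{Ai}$ asymptotics to obtain the $-\pi/4$ phase, and a finite-order quasi-diagonalization in the oscillatory exit regime that yields the $C^M$-smoothness in $\epsilon^{1/3}$ with $\epsilon_0=\epsilon_0(M)$. The only cosmetic differences are that the paper also rescales $y=r\bar y$ in the blow-up \eqref{blowup0} and that the real eigenvalues in the hyperbolic region $t\le -c$ should read $\pm\sqrt{-\mu(t,E)}$.
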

The proof of this theorem in \cite{uldall2024a} is based upon applying a blowup of the degenerate points $(x,0,0)$ for \eqref{airy1} for $\epsilon=0$:
\begin{align}
 (r,(\bar y,\bar t,\bar \epsilon))\mapsto \begin{cases}
                                           y=r\bar y,\\
                                           t\,=r^2 \bar t,\\
                                           \epsilon\, =r^3 \bar \epsilon,
                                          \end{cases}\eqlab{blowup0}
\end{align}
where $r\ge 0$, $(\bar y,\bar t,\bar \epsilon)\in S^2\subset \mathbb R^3$. 
By scaling $t$ and $\epsilon$ it is without loss of generality to consider $\frac{\partial }{\partial t}\mu(0,E)=1$ for all $E\in D$. Then $\lim_{t\rightarrow 0} t^{-1} \mu(t,E)=1$ and the Airy equation:
\begin{align}\eqlab{airyeq}
\epsilon^2 x''(t) = - tx,
%  \dot x &= y,\\
%  \dot y&=-t x,\\
%  \dot t&=\epsilon,
\end{align}
provides an approximate model for $t=\mathcal O(\epsilon^{2/3})$. This is made precise by the blowup \eqref{blowup0}, insofar that the $\bar \epsilon=1$-chart:
\begin{align*}
 (y_2,t_2,r_2)\mapsto \begin{cases}
                       y =r_2 y_2,\\
                       t \,= r_2^2 t_2,\\
            \epsilon \,=r_2^3,
                      \end{cases}
\end{align*}
with chart-specific coordinates $(y_2,t_2,r_2)$, $r_2=\epsilon^{1/3}$, gives
\begin{align*}
 \dot x &=y_2,\\
 \dot y_2 &=-t_2 \rsp{x},\\
 \dot t_2 &=1,
\end{align*}
for $r_2\rightarrow 0$. This system
is obviously equivalent to \eqref{airyeq}. Hence $x(t)=\operatorname{Ai}(-\epsilon^{-2/3} t)$, $y=-\epsilon^{1/3} \operatorname{Ai}'(-\epsilon^{-2/3} t)$ provides an accurate tracking of the unstable manifold $W^u$ within this regime. 

For $t\ge c \epsilon^{2/3}$, we use a diagonalization procedure. Basically, we apply a transformation of the form
\begin{align*}
 \begin{pmatrix}
  x\\
  y 
 \end{pmatrix} = \begin{pmatrix}
 f(t,\epsilon) &\overline f(t,\epsilon)\\
 \lambda(t) & \overline \lambda(t)\end{pmatrix} \begin{pmatrix}
u\\
v\end{pmatrix},
\end{align*}
where $\lambda(t)=i\sqrt{\mu(t,E)}$. (In the following, we suppress the dependency on $E$ for simplicity).
A simple calculation shows that the resulting equations for $u$ and $v$ are diagonalized provided that $f$ satisfies the following equation:
\begin{align}
 \epsilon \frac{\partial f}{\partial t}(t,\epsilon) = \lambda(t) (1-f(t,\epsilon)^2) +\epsilon \lambda(t)^{-1} \lambda'(t) f(t,\epsilon),\eqlab{feqn}
\end{align}
or as a first order system:
\begin{equation}\eqlab{feqn2}
\begin{aligned}
 \dot f &=\lambda(t) (1-f^2) +\epsilon \lambda(t)^{-1} \lambda'(t) f,\\
 \dot t&=\epsilon.
\end{aligned}
\end{equation}
Consider first $t \ge c>0$. 
 Then $\mu$ is uniformly bounded away from zero and $f=1$ is a normally elliptic critical manifold of \eqref{feqn2} for $\epsilon=0$. Indeed the linearization of \eqref{feqn2} for $\epsilon=0$ around $(1,t)$, $t\ge c>0$, has a single nonzero eigenvalue $-2\lambda(t)\in i\mathbb R\setminus \{0\}$. Therefore if $\mu$ is real analytic, then it follows from \cite{de2020a} that there exists a (local) solution $f(t,\epsilon)$ of \eqref{feqn}, analytic in $t$ and Gevrey-1 in $\epsilon$. If $\mu$ is smooth (as in our case, see \eqref{mudef}), then there are quasi-solutions:
\begin{lemma}\lemmalab{uvN}
Fix any $N$, suppose that $\mu$ is smooth and consider $t\in I_+$ so that $\mu(t)\ge c>0$. Then there exists a smooth function $f_N$, being polynomial of degree $N$ with respect to $\epsilon$, such that the transformation $(u,v,t)\mapsto (x,y)$ defined by
% \begin{align}
  \begin{align}\eqlab{xyuvN}
  \begin{pmatrix}
   x\\
   y
  \end{pmatrix}&=
\begin{pmatrix}
   f_N(t,\epsilon) & \overline f_N(t,\epsilon)\\
   \lambda(t) & -\lambda(t)
  \end{pmatrix}\begin{pmatrix}
  u\\
  v
  \end{pmatrix},
 \end{align}
%  \end{align}
%  of the following form
brings \eqref{airy1} into the following near-diagonal form
 \begin{equation}\eqlab{uvtN}
\begin{aligned}
 \dot u &=\nu_N(t,\epsilon) u+\mathcal O(\epsilon^{N+1})v,\\
 \dot v&=\mathcal O(\epsilon^{N+1})\rsp{u}+\overline{\nu}_N(t,\epsilon)v,\\
 \dot t&=\epsilon.
\end{aligned}
\end{equation}
Here 
\begin{align}
\nu_N(t,\epsilon) &= \lambda(t) -\frac12 \lambda(t)^{-1} \lambda'(t)\epsilon +\epsilon^2 T_{2,N}(t,\epsilon),\eqlab{nupm}
\end{align}
for some smooth $T_{2,N}$. The remainder terms in \eqref{uvtN} are also smooth functions of $t\in I_+$ and $\epsilon$.
\end{lemma}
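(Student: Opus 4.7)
The plan is to construct $f_N$ via a formal power-series ansatz in $\epsilon$, and then to check by direct computation that the change of variables \eqref{xyuvN} brings the system into the required near-diagonal form. The crucial observation is that the defining ODE \eqref{feqn} for $f$ is \emph{exactly} the condition that kills the off-diagonal entries of the transformed coefficient matrix, so no obstruction arises beyond solvability of the algebraic recursion.

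First I would seek $f_N(t,\epsilon) = \sum_{k=0}^N f_k(t)\epsilon^k$ and substitute into \eqref{feqn}, matching coefficients of $\epsilon^k$. The order-zero balance $\lambda(1-f_0^2)=0$ forces $f_0 \equiv 1$ (the branch consistent with $W^u$). For $k\ge 1$ the balance reads
\begin{align*}
 f_{k-1}' = -2\lambda f_k - \lambda\sum_{0<m<k} f_m f_{k-m} + \lambda^{-1}\lambda' f_{k-1},
\end{align*}
which is an \emph{algebraic} (not differential) recursion for $f_k$ in terms of $f_0,\ldots,f_{k-1}$. Since $\mu(t)\ge c>0$ on $I_+$ makes $\lambda(t)=i\sqrt{\mu(t)}$ nowhere zero and smooth, each $f_k$ is obtained explicitly and lies in $C^\infty(I_+)$; in particular $f_1 = \lambda'/(2\lambda^2)$. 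By construction, the truncation then satisfies
\begin{align*}
 R_N(t,\epsilon) := \epsilon\partial_t f_N - \lambda(1-f_N^2) - \epsilon\lambda^{-1}\lambda' f_N = \mathcal{O}(\epsilon^{N+1})
\end{align*}
on compact subsets of $I_+$, together with all $t$-derivatives.

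Next, I would set $P(t,\epsilon) = \left(\begin{smallmatrix} f_N & \bar f_N \\ \lambda & -\lambda \end{smallmatrix}\right)$, where $\bar f_N$ denotes the conjugate expansion (obtained from the same recursion with $\lambda \mapsto -\lambda$), and compute the transformed coefficient matrix $P^{-1}AP - P^{-1}\epsilon\partial_t P$ entry by entry, using $-\mu=\lambda^2$, where $A=\left(\begin{smallmatrix} 0 & 1 \\ -\mu & 0 \end{smallmatrix}\right)$ is the coefficient matrix of \eqref{airy1}. Since $f_0 + \bar f_0 = 2$, the determinant $\det P = -\lambda(f_N+\bar f_N)$ is bounded away from zero for small $\epsilon$, so $P$ is invertible. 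A short calculation then shows that the off-diagonal entries collapse, up to the nonvanishing scalar factor $1/(f_N+\bar f_N)$, precisely to $\mp R_N$ and its conjugate, hence are $\mathcal{O}(\epsilon^{N+1})$. The $(1,1)$ entry is
\begin{align*}
 \nu_N = \frac{1}{f_N + \bar f_N}\bigl[\lambda(1+f_N\bar f_N) - \epsilon\partial_t f_N - \epsilon\lambda^{-1}\lambda' \bar f_N\bigr];
\end{align*}
plugging in $f_N = 1 + \tfrac{\lambda'}{2\lambda^2}\epsilon + \mathcal{O}(\epsilon^2)$ together with its conjugate and Taylor expanding yields $\nu_N = \lambda - \tfrac12 \lambda^{-1}\lambda'\epsilon + \epsilon^2 T_{2,N}(t,\epsilon)$ with $T_{2,N}$ smooth, as in \eqref{nupm}; the entry $\bar\nu_N$ follows from the analogous calculation with $\lambda \leftrightarrow -\lambda$.

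I do not expect any substantial obstacle: once one observes that \eqref{feqn} is the diagonalization condition, the rest is bookkeeping. The two points that need careful verification are (i) that the algebraic recursion for $f_k$ never breaks down, which is guaranteed by $\lambda\neq 0$ on $I_+$, and (ii) the smoothness of the remainder entries in \eqref{uvtN} on $I_+\times[0,\epsilon_0)$, which is immediate because $R_N$ is a polynomial in $\epsilon$ with coefficients smooth in $t$ on $I_+$.
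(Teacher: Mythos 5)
Your proposal is correct and follows essentially the same route as the paper: construct $f_N$ as a Taylor polynomial in $\epsilon$ (a ``quasi-solution'' of \eqref{feqn}), with the coefficients determined by an algebraic recursion that is solvable precisely because $\lambda\neq 0$ on $I_+$, and then observe that the residual of \eqref{feqn} is what populates the off-diagonal entries. You are somewhat more explicit than the paper in verifying the transformed matrix and the expansion \eqref{nupm}, and your recursion for $f_k$ (with the $-\tfrac{1}{2}\lambda^{-1}f_{k-1}'$ term) is the one that actually follows from \eqref{feqn}.
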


\begin{proof}
% First, we suppose that $k=\infty$. We will then deal with the finite smoothness towards the end of the proof. 
% 
Instead of solving \eqref{feqn} exactly, we look for ``quasi-solutions'' defined in the following sense: Write the equation \eqref{feqn} as $F(f,t,\epsilon)=0$. Then $f_N(t,\epsilon)$ smooth is a ``quasi-solution'' of order $\mathcal O(\epsilon^{N+1})$ if $F(f_N(t,\epsilon),t,\epsilon)=\mathcal O(\epsilon^{N+1})$ for $N\in \mathbb N$ uniformly in $t\in I_+$. It is standard that such quasi-solutions can be obtained as Taylor-polynomials $f_N(t,\epsilon)=\sum_{n=0}^{N} R_n(t)\epsilon^n$, see e.g. \cite{de2020a,kristiansenwulff}, with $R_n$ recursively starting from $R_0(t)=1$ in the present case. In fact, a simple calculation shows that $R_n$ is given by
\begin{align}
 R_n &=-\frac12 \sum_{l=1}^{n-1} R_l R_{n-l}+\frac12 \lambda^{-1} R_{n-1}'+\frac12 \lambda^{-2} \lambda' R_{n-1},\eqlab{Rneqn}
\end{align}
for $n\ge 1$. 
Consequently, we find that for each fixed $N\in \mathbb N$ there is transformation (which is polynomial with respect to $\epsilon$) so that \eqref{uvtN} holds with smooth off-diagonal remainder terms of order $\mathcal O(\epsilon^{N+1})$. %Moreover, from \eqref{Rneqn} we find by induction on $n$ that $R_n\in C^{k-n}$ and consequently, $f_N\in C^{k-N}$ whenever $\mu\in C^k$. We lose one degree of smoothness of $\nu_N$ due to the $f'$-term in $F(f,t,\epsilon)$ and hence $\nu_N\in C^{k-N-1}$ as claimed. 
\end{proof}
By integrating the near-diagonal system \eqref{uvtN}, we obtain the following.
\begin{lemma}
 Consider \eqref{uvtN} and an initial \rsp{condition} $(u(t_0),v(t_0))$ at $t=t_0$. Suppose for simplicity that $N\ge 4$. Then 
 \begin{equation}\eqlab{uvsol}
 \begin{aligned}
  u(t) = \exp\left(\frac{1}{\epsilon}\int_{t_0}^t \nu_N(s,\epsilon)ds\right)\left((1+\mathcal O(\epsilon^N))u(t_0)+\mathcal O(\epsilon^N)v(t_0)\right),\\
  v(t) = \exp\left(\frac{1}{\epsilon}\int_{t_0}^t \overline \nu_N(s,\epsilon)ds\right)\left(\mathcal O(\epsilon^N)u(t_0)+(1+\mathcal O(\epsilon^N))v(t_0)\right).
 \end{aligned}
 \end{equation}
 Here each of the $\mathcal O(\epsilon^N)$ remainder-terms  are $C^{\lfloor \frac{N}{2}\rfloor-1}$-smooth jointly in $t$, $\epsilon\ge 0$, and $E$, with the order of each of the terms changing as follows:
 \begin{align}
  \frac{\partial^{k+l+m}}{\partial t^{k} \partial \epsilon^{l} \partial E^{m}} \mathcal O(\epsilon^N) = \mathcal O(\epsilon^{N-(k+2l+m)}),\eqlab{derivatives}
 \end{align}
 for all $0\le k+l+m\le \lfloor \frac{N}{2}\rfloor-1$.
\end{lemma}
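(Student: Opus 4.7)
The plan is to decouple the near-diagonal system via integrating factors and then apply a Picard iteration. Since $\dot t=\epsilon>0$, I first rewrite \eqref{uvtN} in the $t$-parameterization, obtaining the non-autonomous $2\times 2$ linear system with diagonal coefficients $\epsilon^{-1}\nu_N$, $\epsilon^{-1}\overline\nu_N$ and smooth off-diagonal coefficients $h_j(t,\epsilon,E)=\mathcal O(\epsilon^N)$. Introducing $\Phi_u(t,\epsilon)=\exp\bigl(\epsilon^{-1}\int_{t_0}^t\nu_N(s,\epsilon)\,ds\bigr)$ and the analogous $\Phi_v$ built from $\overline\nu_N$, the substitution $u=\Phi_u\tilde u$, $v=\Phi_v\tilde v$ removes the diagonal terms and yields
\begin{align*}
\frac{d\tilde u}{dt}=\tilde h_1\tilde v,\qquad \frac{d\tilde v}{dt}=\tilde h_2\tilde u,\qquad \tilde h_j=h_j\,\Phi_u^{\mp 1}\Phi_v^{\pm 1}=\mathcal O(\epsilon^N).
\end{align*}

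A key observation is that $\Phi_u^{-1}\Phi_v$ and its reciprocal are uniformly bounded on compact $t$-intervals as $\epsilon\to 0$. This follows from \eqref{nupm}: the leading term $\lambda=i\sqrt{\mu}$ of $\nu_N$ is purely imaginary, while the $\mathcal O(\epsilon)$ correction $-\tfrac{1}{2}\lambda^{-1}\lambda'\epsilon=-\tfrac{\mu'}{4\mu}\epsilon$ is real but appears identically in $\nu_N$ and $\overline\nu_N$, hence cancels in $\overline\nu_N-\nu_N$. Consequently $\mathrm{Re}\bigl(\epsilon^{-1}\int_{t_0}^t(\overline\nu_N-\nu_N)\,ds\bigr)=\mathcal O(\epsilon)\cdot|t-t_0|$, so the exponential is uniformly bounded. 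By linearity it suffices to treat the two initial data $(\tilde u(t_0),\tilde v(t_0))=(1,0)$ and $(0,1)$ separately; for each, one Picard iteration of the integral form gives
\begin{align*}
\tilde u(t)=\tilde u(t_0)+A_{12}(t,\epsilon,E)\tilde v(t_0)+\mathcal O(\epsilon^{2N}),\qquad A_{12}=\int_{t_0}^{t}\tilde h_1\,ds=\mathcal O(\epsilon^N),
\end{align*}
and symmetrically for $\tilde v$. Undoing the integrating factors, with $\Phi_u(t_0)=\Phi_v(t_0)=1$, produces \eqref{uvsol} directly.

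The main obstacle is the derivative estimate \eqref{derivatives} together with the regularity class $C^{\lfloor N/2\rfloor-1}$ up to $\epsilon=0$. The remainders are iterated integrals of $\tilde h_j$, and differentiation must be distributed between the smooth prefactor $h_j=\mathcal O(\epsilon^N)$ and the rapidly oscillating exponential $\Phi_u^{-1}\Phi_v=\exp\bigl(\epsilon^{-1}\int(\overline\nu_N-\nu_N)\,ds\bigr)$. The derivative weights come from this exponential: $\partial_t$ produces a factor $\epsilon^{-1}(\overline\nu_N-\nu_N)=\mathcal O(\epsilon^{-1})$; $\partial_E$ produces $\epsilon^{-1}\int\partial_E(\overline\nu_N-\nu_N)\,ds=\mathcal O(\epsilon^{-1})$; and $\partial_\epsilon$ produces $-\epsilon^{-2}\int(\overline\nu_N-\nu_N)\,ds+\epsilon^{-1}\int\partial_\epsilon(\overline\nu_N-\nu_N)\,ds=\mathcal O(\epsilon^{-2})$, accounting for the weight $2l$. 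Combining via the Leibniz rule with the prefactor $\mathcal O(\epsilon^N)$ yields the conservative bound $\mathcal O(\epsilon^{N-k-2l-m})$. The cutoff $C^{\lfloor N/2\rfloor-1}$ then arises by requiring $N-k-2l-m$ to remain strictly positive, guaranteeing both the smooth extension to $\epsilon=0$ and that iterated Picard contributions $\mathcal O(\epsilon^{2N-\cdots})$ remain subdominant.
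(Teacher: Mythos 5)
Your proof follows essentially the same route as the paper: the integrating-factor substitution $u=\Phi_u\tilde u$, $v=\Phi_v\tilde v$, uniform boundedness of $\Phi_u^{-1}\Phi_v$, a Picard/Gronwall integration of the resulting off-diagonal system, and derivative losses traced to differentiating the oscillatory exponential. The only cosmetic difference is that since $\overline\nu_N$ is the complex conjugate of $\nu_N$, one has $|\Phi_u^{-1}\Phi_v|=1$ exactly (the paper uses this), rather than your slightly weaker $\mathrm{Re}=\mathcal O(\epsilon)|t-t_0|$ bound, which nonetheless suffices.
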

\begin{proof}
 We set $u(t)=Q(t,t_0,\epsilon) \tilde u(t)$,  $v(t)=\overline Q(t,t_0,\epsilon) \tilde v(t)$ where 
 \begin{align*}
  Q(t,t_0,\epsilon)=\exp\left(\frac{1}{\epsilon}\int_{t_0}^t \nu_N(s,\epsilon)ds\right).
 \end{align*}
Then 
\begin{equation}\eqlab{tildeuveqns}
\begin{aligned}
 \frac{d\tilde u}{dt}&= \mathcal O(\epsilon^N) Q(t,t_0,\epsilon)^{-1} \overline Q(t,t_0,\epsilon) \tilde v(t),\\
 \frac{d\tilde v}{dt}&= \mathcal O(\epsilon^N) \overline Q(t,t_0,\epsilon)^{-1}  Q(t,t_0,\epsilon) \tilde u(t).
\end{aligned}
\end{equation}
Since $\vert Q^{-1} \overline Q\vert=1$, we can integrate these equations using uniform bounds to obtain
\begin{equation}\eqlab{tildeuvsols}
\begin{aligned}
 \tilde u(t) &=(1+\mathcal O(\epsilon^N))\tilde u(t_0)+\mathcal O(\epsilon^N)\tilde v(t_0),\\
 \tilde v(t) &=\mathcal O(\epsilon^N)\tilde u(t_0)+(1+\mathcal O(\epsilon^N))\tilde v(t_0).
\end{aligned}
\end{equation}
\eqref{derivatives} is obtained by differentiating \eqref{tildeuveqns}. The $\mathcal O(\epsilon^N)$-terms are regular and the result therefore follows by differentiating $Q$ and integrating the variational equations using uniform bounds (as in \eqref{tildeuvsols}). For further details we refer to \cite{uldall2024a}.
\end{proof}
Now returning to the turning point problem and $t\ge c\epsilon^{2/3}$, the results of \cite{uldall2024a} show -- by working in the $\bar t=1$ chart associated with \eqref{blowup0}:
%  \begin{align*}
  \begin{align*}
 (y_1,r_1,\epsilon_1)\mapsto \begin{cases}
                       y =r_1 y_1,\\
                       t \,= r_1^2,\\
            \epsilon \,=r_1^3\epsilon_1,
                      \end{cases}
\end{align*}
with chart specific coordinates $(y_1,r_1,\epsilon_1)$ --
%  \end{align*}
that the quasi-diagonalization  of \lemmaref{uvN} can be extended to $t\ge c\epsilon^{2/3}$ for $c>0$ large enough. \KUK{For this we (also) use abstract results on complex saddle-nodes, see \cite[Lemma 4.8]{uldall2024a}}.  More precisely, there exists a quasi-solution $f(t,\epsilon)$ of \eqref{feqn} up to \rsp{remainder terms} that are $\mathcal O(\epsilon^{2N/3})$ for $t\ge c\epsilon^{2/3}$. Upon composing \eqref{uvsol} with the tracking of $W^u$ by the Airy-function for $t=\mathcal O(\epsilon^{2/3})$, we obtain the following representation 
\begin{equation}\eqlab{uvWu}
\begin{aligned}
 u(\nu) &=e^{-i\frac{\pi}{4}} e^{\frac{i}{\epsilon}\int_0^{\nu} \sqrt{\mu(s)} ds} (1+\mathcal O(\epsilon^{2/3})),\quad 
 v(\nu)=\overline u(\nu),
\end{aligned}
\end{equation}
of $W^u$ in the $(u,v)$-space at $t=\nu>0$ for $\nu>0$ small enough. This leads to \eqref{WuExpr2} upon carefully studying the remainder. In fact, this expression can be extended to any $\nu>0$ for which $\mu(t)>0$ for all $t\in (0,\nu]$. For this, we can just use \eqref{uvsol} with \eqref{uvWu} as initial conditions with $t_0=\nu$. Importantly, by this extension we see the functions $\rho$, $\phi_1$ and $\phi_2$ in the expression for \eqref{WuExpr2} are each $C^M$-smooth jointly in $\epsilon^{1/3}$, $E$ and $\nu$ (upon taking $N$ large enough). 
% % In terms of \eqref{feqn} it can be written as follows
% \begin{align*}
%  t&= r_3^2,\\
%  \epsilon &=r_3^3 \epsilon_3.
% \end{align*}

% 
% \begin{lemma}
% Suppose that assumptions (\ref{Vquadratic}) and (\ref{growth}) hold and that $V$ is smooth. 
% Consider
% % \begin{align*}
% %  \dot x &= y,\\
% %  \dot y&=-U(t,E) x,\\
% %  \dot t&=\epsilon.
% % \end{align*}
% $U(t,E)=E-V(t)$, see \eqref{mutE}. Then 

%  \end{theorem}

%  Hence by tracking the unstable manifold of \eqref{airy1} with $t<0$ across $t=0$ we are able to track the unstable and stable manifolds $W^u$ and $W^s$ for \eqref{eq0} and study their intersection at $t=0$ and determine the eigenvalues $E$. 
%  

We therefore obtain the following corollary from \cite[Theorem 3.2]{uldall2024a} on $W^u(\epsilon,E)$ and $W^s(\epsilon,E)$. 
\begin{theorem}\thmlab{DeltaThm}
 Consider \eqref{xyt} and suppose that \ref{Vquadratic}, \ref{smoothness} and \ref{growth} all hold and fix \KUK{$D\subset \mathbb R_+$} as a compact interval. Let $\nu>0$ be small enough, $E\mapsto t_0(E)$ be a \KUK{$C^\infty$}-smooth function such that $t_-(E)+\nu<t_0(E)<t_+(E)-\nu$ for all $E\in D$ and consider the unstable and stable manifolds $W^u(\epsilon,E)$ and $W^s(\epsilon,E)$ of \eqref{xyt} for $t\rightarrow -\infty$ and $t\rightarrow \infty$, respectively. 
 Then there is an $\epsilon_0>0$ such that the following holds for any $\epsilon\in (0,\epsilon_0)$: 
 \begin{align*}
  W^u(\epsilon,E) \cap \{t=t_0(E)\} &=\left\{\begin{pmatrix}x\\y \end{pmatrix} \in\operatorname{span}\begin{pmatrix}
%  \begin{aligned}s
 X_u(\epsilon^{1/3},E) \\
  -{\sqrt{U(t_0(E),E)}}Y_u(\epsilon^{1/3},E)
 % \end{aligned}
%\end{equation}
\end{pmatrix}\right\},\\
W^s(\epsilon,E) \cap \{t=t_0(E)\} &=\left\{\begin{pmatrix}x\\y \end{pmatrix} \in \operatorname{span}\begin{pmatrix}
%  \begin{aligned}s
 X_s(\epsilon^{1/3},E) \\
  {\sqrt{U (t_0(E),E)}}Y_s(\epsilon^{1/3},E)
 % \end{aligned}
%\end{equation}
\end{pmatrix}\right\},
\end{align*}
where
\begin{align*}
 X_u(\epsilon^{1/3},E) = &\cos\left(\frac{1}{\epsilon }\int_{t_-(E)}^{t_0(E)} \sqrt{U(t,E)}dt -\frac{\pi}{4}+ \epsilon^{2/3} \phi_{1,u}(\epsilon^{1/3},E)\right),\\
 Y_u(\epsilon^{1/3},E) =& 
 (1+\epsilon^{2/3}\rho_u(\epsilon^{1/3},E))\times \\
 &\sin\left(\frac{1}{\epsilon }\int_{t_-(E)}^{t_0(E)} \sqrt{U(t,E)}dt -\frac{\pi}{4}+ \epsilon^{2/3} \phi_{2,u}(\epsilon^{1/3},E)\right),\\
 X_s(\epsilon^{1/3},E) = &\cos\left(\frac{1}{\epsilon }\int_{t_0(E)}^{t_+(E)} \sqrt{U(t,E)}dt -\frac{\pi}{4}+ \epsilon^{2/3} \phi_{1,s}(\epsilon^{1/3},E)\right),\\
 Y_s(\epsilon^{1/3},E) =& 
 (1+\epsilon^{2/3}\rho_s(\epsilon^{1/3},E))\times \\
 &\sin\left(\frac{1}{\epsilon }\int_{t_0(E)}^{t_+(E)} \sqrt{U(t,E)}dt -\frac{\pi}{4}+ \epsilon^{2/3} \phi_{2,s}(\epsilon^{1/3},E)\right),
%   \operatorname{Im}\big[e^{-\frac{\pi}{4}i}e^{\frac{i}{\epsilon }\int_{0}^{\nu} \sqrt{\mu(s,E)}ds}(1+\epsilon^{2/3} Z_2(\epsilon^{1/3},E))\big],
\end{align*}
 and where
$\rho_{i},\phi_{1,i},\phi_{2,i}:[0,\epsilon_0^{1/3})\times D\rightarrow \mathbb R$, $i=u,s$ are all \KUK{$C^\infty$}-smooth functions.
\end{theorem}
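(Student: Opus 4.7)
The plan is to combine \thmref{localWu}, which describes the unstable manifold just past the left turning point, with the near-diagonal flow estimates of \lemmaref{uvN} and \eqref{uvsol}, in order to propagate $W^u(\epsilon,E)$ across the classically allowed region up to the cross section $\{t=t_0(E)\}$. The description of $W^s(\epsilon,E)$ is obtained by a mirror-image argument at the right turning point $t_+(E)$, applying \thmref{localWu} after the time-reversal $t\mapsto -t$ so that the stable manifold becomes an unstable one for the time-reversed equation.

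First I would apply \thmref{localWu} to the first turning-point problem in \eqref{airy1}--\eqref{mudef}. The local time $s$ is related to $t$ by $t=t_-(E)+s$, so that $\tfrac{1}{\epsilon}\int_0^\nu \sqrt{\mu(s,E)}\,ds = \tfrac{1}{\epsilon}\int_{t_-(E)}^{t_-(E)+\nu} \sqrt{U(t,E)}\,dt$. This yields initial data for $W^u(\epsilon,E)$ on the slice $\{t=t_-(E)+\nu\}$ with the span structure of \eqref{WuExpr2}, whose amplitude and phase corrections are smooth jointly in $(\epsilon^{1/3},E)$ after choosing $M$ arbitrarily large.

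Next I would invoke the diagonalization of \lemmaref{uvN} with $\lambda(t)=i\sqrt{U(t,E)}$ on the compact interval $[t_-(E)+\nu,\,t_0(E)]$, where $U(t,E)\ge c>0$ uniformly for $E\in D$; this is precisely the elliptic (oscillatory) regime where the argument of \cite{uldall2024a} for the turning point is extended by quasi-diagonalization. The transformation \eqref{xyuvN} is smooth and invertible, so the $(x,y)$-initial data at $t=t_-(E)+\nu$ translate into $(u_0,v_0)$-initial data with the same span structure. Propagating via \eqref{uvsol} (with $t_0$ replaced by $t_-(E)+\nu$) gives
\begin{equation*}
u(t_0(E)) = \exp\!\left(\tfrac{1}{\epsilon}\int_{t_-(E)+\nu}^{t_0(E)} \nu_N(s,\epsilon)\,ds\right)\bigl((1+\mathcal O(\epsilon^N))u_0+\mathcal O(\epsilon^N)v_0\bigr),
\end{equation*}
and a conjugate relation for $v(t_0(E))$. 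By \eqref{nupm} the leading term of the exponent is $\tfrac{i}{\epsilon}\int_{t_-(E)+\nu}^{t_0(E)}\sqrt{U(s,E)}\,ds$, with the $\tfrac12\lambda^{-1}\lambda'\epsilon$ and $\epsilon^2 T_{2,N}$ contributions, together with the smooth $\epsilon$-dependence of the columns of \eqref{xyuvN}, absorbed into an $\mathcal O(\epsilon^{2/3})$ correction of the amplitude and phase. Adding the local-turning-point phase reconstitutes the full integral $\tfrac{1}{\epsilon}\int_{t_-(E)}^{t_0(E)}\sqrt{U(s,E)}\,ds$ in the $\cos/\sin$ arguments, and inverting \eqref{xyuvN} at $t=t_0(E)$ recovers the advertised $(x,y)$-representation, with the factor $-\sqrt{U(t_0(E),E)}$ in front of $Y_u$ coming from the second row of \eqref{xyuvN} since $\lambda(t_0(E))\in i\mathbb R$.

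The formula for $W^s(\epsilon,E)$ is obtained completely analogously: reverse time via $\tau=t_+(E)-t$, apply \thmref{localWu} at the right turning point, and then diagonalize and propagate to $t=t_0(E)$; the opposite sign in front of $\sqrt{U(t_0(E),E)}Y_s$ reflects the opposite direction of approach. The main technical obstacle I anticipate is the bookkeeping of smoothness across the three regimes (the Airy-scale inner region near $t_\pm(E)$, the diagonalized oscillatory regime, and the linear change \eqref{xyuvN}) so that the composed remainders $\rho_{i},\phi_{1,i},\phi_{2,i}$ extend $C^\infty$-smoothly to $\epsilon=0$ as functions of $\epsilon^{1/3}$ and $E$. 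This should follow by combining the $C^M$-smoothness in \thmref{localWu} with the order estimates \eqref{derivatives}, the smoothness of $t_\pm(E)$ and $t_0(E)$, and a standard diagonal argument in which $M$ (equivalently $N$) is sent to infinity to upgrade $C^M$ to $C^\infty$.
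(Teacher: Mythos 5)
Your construction of $W^u(\epsilon,E)\cap\{t=t_0(E)\}$ and $W^s(\epsilon,E)\cap\{t=t_0(E)\}$ is exactly the route the paper takes: apply \thmref{localWu} at each turning point (with the time reversal $\tau=t_+(E)-t$ for the stable manifold, which accounts for the sign flip in front of $\sqrt{U(t_0(E),E)}\,Y_s$), then push the resulting data through the quasi-diagonalized system of \lemmaref{uvN} via \eqref{uvsol} across the compact elliptic region where $U(t,E)\ge c>0$. One small imprecision: the term $-\tfrac12\lambda^{-1}\lambda'\epsilon$ in \eqref{nupm} contributes an $\mathcal O(1)$ (not $\mathcal O(\epsilon^{2/3})$) real amplitude factor $(U(t_0(E),E)/U(t_-(E)+\nu,E))^{-1/4}$; it is harmless only because it is common to $u$ and $v$ and therefore drops out of the span. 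This does not affect the argument.

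The one genuine gap is the final upgrade from $C^M$ to $C^\infty$. A ``diagonal argument in $M$'' does not by itself work, for two reasons. First, \thmref{localWu} delivers, for each $M$, functions $\rho^{(M)},\phi_{1}^{(M)},\phi_{2}^{(M)}$ on a domain $[0,\epsilon_0(M)^{1/3})$ that may shrink as $M\to\infty$, so the intersection over $M$ need not contain any fixed interval. Second, these functions are a priori $M$-dependent, and without identifying them across different $M$ you cannot conclude anything about infinite differentiability of a single function. The paper closes both holes at once: since $W^u$ and $W^s$ are \emph{uniquely} determined as the unstable/stable manifolds at $t=\pm\infty$ (this is where assumptions \ref{smoothness} and \ref{growth} and the discussion following them enter), the correction functions $\rho_i,\phi_{1,i},\phi_{2,i}$ are independent of $M$; hence they admit derivatives of every order at $\epsilon^{1/3}=0$. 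For $\epsilon>0$ the system \eqref{xyt} depends regularly on $\epsilon$, so the intersections with $\{t=t_0(E)\}$ are automatically $C^\infty$ in $\epsilon>0$ (hence in $\epsilon^{1/3}>0$). Combining smoothness at the endpoint $\epsilon^{1/3}=0$ with smoothness on $(0,\epsilon_0^{1/3})$ gives $C^\infty$ on the full fixed interval $[0,\epsilon_0^{1/3})$. You should make this uniqueness-plus-regularity step explicit; as written, your proposal asserts the conclusion without the mechanism that delivers it.
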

  \begin{proof}
  \KUK{The expansions of $W^u\cap\{t=t_0(E)\}$ and $W^s\cap \{t=t_0(E)\}$ follow from the previous arguments, with $\rho_i,\phi_{1,i},\phi_{2,i}$, $i=u,s$, each being $C^M$-smooth with respect to $\epsilon^{1/3} \in [0,\epsilon_0^{1/3}(M))$ for any $M\in \mathbb N$. It is left to show that the functions are in fact $C^\infty$. %First we apply the result with $M=1$ and obtain $C^1$-smooth functions $\rho_i,\phi_{1,i},\phi_{2,i}$, $i=u,s$. Since 
  For this we use that $W^u$ and $W^s$ are uniquely fixed as the unstable and stable manifolds at infinity $t=\pm \infty$ (recall the discussion following assumption \ref{growth}). Hence $\rho_i,\phi_{1,i},\phi_{2,i}$, $i=u,s$ are independent of $M\in \mathbb N$ and it follows that they are each $C^\infty$ with respect to $\epsilon^{1/3}$ at $\epsilon^{1/3}=0$. At the same time, the problem is regular for $\epsilon>0$ and consequently $W^u\cap\{t=t_0(E)\}$ and $W^s\cap \{t=t_0(E)\}$ are each $C^\infty$ with respect to $\epsilon>0$ (and therefore also $C^\infty$ with respect to $\epsilon^{1/3}>0$). This shows that $\rho_i,\phi_{1,i},\phi_{2,i}$, $i=u,s$ are $C^\infty$ with respect to $\epsilon^{1/3}\in [0,\epsilon_0^{1/3})$ as desired.  }
%   
%   But for $\epsilon>0$ the problem is regular and $\rho_i,\phi_{1,i},\phi_{2,i}$, $i=u,s$, are therefore $C^\infty$-smooth for $\epsilon^{1/3}\in (0,\epsilon_0^{\1/3})$. We therefore only have to prove the smoothness at $\epsilon=0$. For this we use that $W^u$ and $W^s$ are uniquely fixed as the unstable and stable manifolds at infinity $t=\pm \infty$ and hence $\rho_i,\phi_{1,i},\phi_{2,i}$, $i=u,s$, are unique and $C^M$ on $\epsilon^{1/3} \in [0,\epsilon_0^{1/3}(M))$ for any $M\in \mathbb N$. }.
%    \fbox{argue for $C^\infty$}
  \end{proof}

%  $W^u(\epsilon,E)=W^s(\epsilon,E)$ for $E\in D$ if and only if 
%  \begin{align}\eqlab{DeltaEq0}
%  \Delta(\epsilon,E)=0,
%  \end{align}
%  where 
%  \begin{align}
%   \Delta(\epsilon,E)= \sin \left(\frac{1}{\epsilon}\int_{t_-(E)}^{t_+(E)} \sqrt{U(t,E)} dt -\frac{\pi}{2}+\epsilon^{2/3}\phi(\epsilon^{1/3},E) \right) +\epsilon^{2/3}\rho(\epsilon^{1/3},E).\eqlab{DeltaExpr}
%  \end{align}
% Here $\phi,\rho:[0,\epsilon_0^{1/3}]\times D\rightarrow \mathbb R$ are both $C^M$ smooth.
% \end{theorem}
\section{Proof of \thmref{01eigenvalues}}\seclab{01eigenvalues}
To prove \thmref{01eigenvalues} we first use \thmref{DeltaThm} and the characterization of the unstable and stable manifolds within the elliptic regime. It turns out to be useful to define the section $t=t_0(E)$ in the following way:
\begin{align}
 \int_{t_-(E)}^{t_0(E)} \sqrt{U(t,E)}dt  = \int_{t_0(E)}^{t_+(E)} \sqrt{U(t,E)}dt.\eqlab{t0Econd}
\end{align}
\begin{lemma}
 $t_0:\mathbb R_+\rightarrow \mathbb R$ defined by \eqref{t0Econd} is uniquely defined, $C^\infty$-smooth and satisfies
 \begin{align*}
  t_-(E)<t_0(E)<t_+(E),
 \end{align*}
for all $E$.
\end{lemma}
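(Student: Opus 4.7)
The plan is to set up the defining relation as the zero set of a scalar function and apply the implicit function theorem. Define
\begin{align*}
 F(t_0,E) := \int_{t_-(E)}^{t_0} \sqrt{U(t,E)}\,dt - \int_{t_0}^{t_+(E)} \sqrt{U(t,E)}\,dt,
\end{align*}
for $E>0$ and $t_0 \in [t_-(E),t_+(E)]$. Since $U(t,E)>0$ strictly for $t\in (t_-(E),t_+(E))$ by assumption \ref{Vquadratic}, $F$ is $C^\infty$ in $t_0$ on this open interval and
\begin{align*}
 \frac{\partial F}{\partial t_0}(t_0,E) = 2\sqrt{U(t_0,E)} > 0,
\end{align*}
so $F(\cdot,E)$ is strictly increasing. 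At the endpoints one has $F(t_-(E),E) = -\pi J(E) < 0$ and $F(t_+(E),E) = \pi J(E) > 0$ by \lemmaref{Jprop}. Hence the intermediate value theorem produces a unique $t_0(E) \in (t_-(E),t_+(E))$ with $F(t_0(E),E)=0$.

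For smoothness in $E$, the only delicacy is the dependence of the integrals on $E$ through the moving, square-root-singular endpoints $t_\pm(E)$. Here I would mimic the argument used for the proof of \lemmaref{Jprop}: introduce local changes of variable $t = t_-(E) + s^2$ near the left endpoint and $t = t_+(E) - s^2$ near the right endpoint. Since $U$ has a simple zero at $t_\pm(E)$ with $\mp V'(t_\pm(E))>0$, one obtains
\begin{align*}
 \sqrt{U(t,E)}\,dt = 2s^2\sqrt{\mp V'(t_\pm(E))}\,\sqrt{1+\mathcal O(s^2)}\,ds,
\end{align*}
so the integrand becomes $C^\infty$ jointly in $(s,E)$ near $s=0$. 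Splitting each integral into a boundary piece (treated via this substitution) and an interior piece (where the integrand and limits are manifestly smooth), one concludes that $F$ is $C^\infty$ jointly in $(t_0,E)$ on the open region $t_0\in(t_-(E),t_+(E))$.

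With $F\in C^\infty$ and $\partial_{t_0}F(t_0(E),E) = 2\sqrt{U(t_0(E),E)} \neq 0$, the implicit function theorem yields $t_0\in C^\infty(\mathbb{R}_+,\mathbb{R})$, which completes the argument. I do not foresee a major obstacle beyond the bookkeeping of the endpoint regularization; everything else is a direct application of monotonicity and the implicit function theorem.
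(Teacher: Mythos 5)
Your proof is correct, and the existence/uniqueness part (monotonicity of $F(\cdot,E)$ with $\partial_{t_0}F = 2\sqrt{U(t_0,E)}>0$, sign change at the endpoints, intermediate value theorem) is exactly the argument the paper gives. The only place where you diverge is the smoothness step, which the paper does not spell out: it states that the smoothness of $t_0(E)$ follows ``as in the proof of \lemmaref{Jprop}'' and refers to \appref{t0E}, where the square-root singularity at the moving endpoints $t_\pm(E)$ is handled by factoring $\sqrt{U(t,E)}=\sqrt{a(E)(t-t_\pm(E))}\,\widetilde U_\pm(t,E)$ and then integrating by parts $N$ times to push the singular factor to higher and higher powers $(t-t_\pm(E))^{(2n+1)/2}$, after which Leibniz's rule applies. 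You instead desingularize directly with the substitution $t=t_\pm(E)\mp s^2$, which turns the integrand into a function that is jointly $C^\infty$ in $(s,E)$ near $s=0$, and then split off a smooth interior piece. Both routes are standard and both work here; your substitution is arguably more direct for the purpose of establishing joint smoothness of $F$ before invoking the implicit function theorem, while the paper's integration-by-parts scheme has the side benefit of producing an explicit finite expansion (used elsewhere in \appref{t0E} to analyze $J$ near $E=0$). The remaining pieces of your argument — the values $F(t_\pm(E),E)=\pm\pi J(E)$ and the nonvanishing of $\partial_{t_0}F$ at the root — are correct, so no gap remains beyond the bookkeeping you already acknowledge.
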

\begin{proof}
 With $t_- (E)<0<t_+(E)$ given, consider 
 \begin{align*}
  F(t,E): = \int_{t_-(E)}^{t} \sqrt{U(s,E)}ds - \int_{t}^{t_+(E)} \sqrt{U(s,E)}ds.
 \end{align*}
Clearly, $F(t,E)$ is well-defined for each $t\in [t_-(E),t_+(E)]$ and roots of $F(\cdot,E)$ correspond to solutions of \eqref{t0Econd}. We have $F(t_-(E),E)<0<F(t_+(E),E)$ and $F'_t(t,E)=2\sqrt{U(t,E)}>0$ for all $t\in (t_-(E),t_+(E))$. Consequently, there exists a unique $t=t_0(E)\in (t_-(E),t_+(E))$ such that 
\begin{align*}
 F(t_0(E),E)=0,
\end{align*}
for all $E>0$. The proof of the smoothness of $t_0(E)$ is similar to the proof of \lemmaref{Jprop}, see \appref{t0E}, and therefore left out.
\end{proof}

\begin{lemma}\lemmalab{DeltaThm}
 Consider \eqref{xyt} and suppose that \ref{Vquadratic}, \ref{smoothness} and \ref{growth} all hold and fix \KUK{$D\subset \mathbb R_+$} a compact interval. Then there exists an $\epsilon_0>0$ such that the following holds for any $\epsilon\in (0,\epsilon_0)$: $W^u(\epsilon,E)=W^s(\epsilon,E)$ for $E\in D$ if and only if
 \begin{align}
  \Delta(\epsilon,E) = 0,\eqlab{DeltaEq0}
 \end{align}
where 
 \begin{align}\eqlab{DeltaExpr}
 \Delta(\epsilon,E) = \sin\left(\frac{1}{\epsilon} \int_{t_-(E)}^{t_+(E)} \sqrt{U(t,E)} -\frac{\pi}{2}+\epsilon^{2/3} \phi(\epsilon^{1/3},E)\right)+\epsilon^{2/3}\rho(\epsilon^{1/3},E).
 \end{align}
 Here $\phi,\rho:[0,\epsilon_0^{1/3})\times D\rightarrow \mathbb R$ are both $C^\infty$-smooth functions.
 \end{lemma}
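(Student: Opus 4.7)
The plan is to translate the condition $W^u(\epsilon,E)=W^s(\epsilon,E)$ into the vanishing of a $2\times 2$ determinant formed from the two spanning vectors provided by \thmref{DeltaThm}, and then to massage that determinant into the claimed form using the defining property \eqref{t0Econd} of $t_0(E)$. Since $W^u\cap\{t=t_0(E)\}$ and $W^s\cap\{t=t_0(E)\}$ are one-dimensional subspaces of $\mathbb R^2$, they coincide if and only if their generators are parallel, i.e.\
\[
 0=\det\begin{pmatrix}X_u & X_s\\ -\sqrt{U(t_0(E),E)}\,Y_u & \sqrt{U(t_0(E),E)}\,Y_s\end{pmatrix}=\sqrt{U(t_0(E),E)}\bigl(X_u Y_s+X_s Y_u\bigr).
\]
Since $U(t_0(E),E)>0$ for all $E\in D$, the eigenvalue condition is equivalent to $X_u Y_s+X_s Y_u=0$.

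Next, apply the product-to-sum identity $\cos A\sin B=\tfrac12[\sin(A+B)-\sin(A-B)]$ to the two terms. Writing $\theta_\ast:=\frac{1}{\epsilon}\int_{t_-(E)}^{t_0(E)}\sqrt{U(t,E)}\,dt-\tfrac{\pi}{4}$, the choice \eqref{t0Econd} of $t_0(E)$ forces the corresponding integral for the $s$-branch to equal $\theta_\ast$ as well, so the ``difference phase'' $\theta_u-\theta_s$ vanishes identically while the ``sum phase'' $\theta_u+\theta_s$ becomes exactly
\[
 \Theta(\epsilon,E)=\frac{1}{\epsilon}\int_{t_-(E)}^{t_+(E)}\sqrt{U(t,E)}\,dt-\frac{\pi}{2},
\]
as in the lemma's statement. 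Carrying out the expansion produces
\[
 X_uY_s+X_sY_u=\tfrac12(1+\epsilon^{2/3}\rho_s)\sin(\Theta+\epsilon^{2/3}\alpha_1)+\tfrac12(1+\epsilon^{2/3}\rho_u)\sin(\Theta+\epsilon^{2/3}\alpha_2)+\epsilon^{2/3}R_0,
\]
where $\alpha_1=\phi_{1,u}+\phi_{2,s}$, $\alpha_2=\phi_{2,u}+\phi_{1,s}$, and $R_0(\epsilon^{1/3},E)$ collects the ``low-frequency'' sine contributions $\sin(\epsilon^{2/3}(\phi_{1,\cdot}-\phi_{2,\cdot}))$, each of which is $\epsilon^{2/3}$ times a smooth function of $\epsilon^{1/3}$ and $E$ by \thmref{DeltaThm}.

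Finally, expand $\sin(\Theta+\epsilon^{2/3}\alpha_i)=\sin\Theta\cos(\epsilon^{2/3}\alpha_i)+\cos\Theta\sin(\epsilon^{2/3}\alpha_i)$ and collect: the oscillatory part takes the form $a(\epsilon^{1/3},E)\sin\Theta+b(\epsilon^{1/3},E)\cos\Theta$ with $a=1+\mathcal O(\epsilon^{2/3})$ and $b=\mathcal O(\epsilon^{2/3})$, both smooth in $(\epsilon^{1/3},E)$. Since $a$ stays bounded away from zero, we can rewrite this as
\[
 \sqrt{a^2+b^2}\,\sin\bigl(\Theta+\epsilon^{2/3}\widetilde\phi(\epsilon^{1/3},E)\bigr),\qquad \widetilde\phi:=\epsilon^{-2/3}\arctan(b/a),
\]
with $\sqrt{a^2+b^2}=1+\epsilon^{2/3}c(\epsilon^{1/3},E)$ smooth. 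Dividing $X_uY_s+X_sY_u$ by the nonvanishing factor $1+\epsilon^{2/3}c$ (for $\epsilon_0$ small enough) produces an equivalent equation of the stated form \eqref{DeltaExpr}, with $\phi:=\widetilde\phi$ and $\rho:=R_0/(1+\epsilon^{2/3}c)$. The main bookkeeping obstacle is ensuring that every amplitude/phase that appears is smooth in $(\epsilon^{1/3},E)$ alone and carries no residual dependence on the large phase $\Theta$; this is precisely what the choice \eqref{t0Econd} of $t_0(E)$ accomplishes, by killing the $\theta_u-\theta_s$ contribution that would otherwise produce an extra fast-oscillating factor incompatible with the desired representation.
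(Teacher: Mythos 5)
Your proposal is correct and follows essentially the same route as the paper's proof: reduce $W^u=W^s$ to the vanishing of the $2\times 2$ determinant from \thmref{DeltaThm}, use the defining property \eqref{t0Econd} of $t_0(E)$ to eliminate the difference-phase terms, combine the two sum-phase sines into a single sine with an $\mathcal O(\epsilon^{2/3})$ amplitude and phase correction, and divide by the nonvanishing amplitude. Your explicit $a\sin\Theta+b\cos\Theta$ bookkeeping is just a slightly more detailed presentation of the paper's trigonometric recombination step.
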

\begin{proof}
%  By \thmref{localWu} with $\mu(t,E) = U(t_-(E)+t,E)$ ,  $\mu(t,E)=U(t_+(E)-t,E)$ and $\nu = t_0(E)$ (using the preceeding discussion), we obtain the following representation of $W^u(\epsilon,E)$ and $W^s(\epsilon,E)$ 
%   Consequently, $W^u(\epsilon,E)=W^s(\epsilon,E)$ if and only if 
By \thmref{DeltaThm}, we have that $W^u(\epsilon,E)=W^s(\epsilon,E)$ if and only if 
\begin{align*}
\operatorname{det}\begin{pmatrix}
                   X_u(\epsilon^{1/3},E) & X_s(\epsilon^{1/3},E)\\
                   -Y_u(\epsilon^{1/3},E) & Y_s(\epsilon^{1/3},E)
                  \end{pmatrix}=0.
\end{align*}
By expanding the left hand side, using the definition of $t_0(E)$ in \eqref{t0Econd} we obtain
\begin{equation}\eqlab{deltaexpr}
\begin{aligned}
 &\frac12 \sin\left(\frac{1}{\epsilon }\int_{t_-(E)}^{t_+(E)} \sqrt{U(t,E)}dt -\frac{\pi}{2}+\mathcal O_1(\epsilon^{2/3})\right)\left(1 +\mathcal O_2(\epsilon^{2/3})\right)+\\
 &\frac12 \sin\left(\frac{1}{\epsilon }\int_{t_-(E)}^{t_+(E)} \sqrt{U(t,E)}dt -\frac{\pi}{2}+\mathcal O_3(\epsilon^{2/3})\right)\left(1 +\mathcal O_4(\epsilon^{2/3})\right)+\mathcal O_5(\epsilon^{2/3}),
\end{aligned}
\end{equation}
using subscripts to indicate that the $\mathcal O(\epsilon^{2/3})$-terms (each being $C^\infty$-smooth with respect to $\epsilon^{1/3}$ and $E$) are different (in general). (The advantage of defining $t_0(E)$ as in \eqref{t0Econd} is that terms with 
\begin{align*}
 \frac{1}{\epsilon }\int_{t_-(E)}^{t_0(E)} \sqrt{U(t,E)}dt-\frac{1}{\epsilon }\int_{t_0(E)}^{t_+(E)} \sqrt{U(t,E)}dt,
\end{align*}
disappear.)
Using simple trigonometric identities, we then write the first two terms of \eqref{deltaexpr} as $$\frac12 (1+\epsilon^{2/3}\rho(\epsilon^{1/3},E) )\sin\left(\frac{1}{\epsilon }\int_{t_-(E)}^{t_+(E)} \sqrt{U(t,E)}dt -\frac{\pi}{2}+\epsilon^{2/3}\phi(\epsilon^{1/3},E)\right),$$
with $\rho$ and $\phi$ defined by
\begin{align*}
 \left(1+\epsilon^{2/3}\rho(\epsilon^{1/3},E)\right)^2 : &=\left((1 +\mathcal O_2) \cos \mathcal O_1 +(1 +\mathcal O_4) \cos \mathcal O_3\right)^2\\
 &+\left((1 +\mathcal O_2) \sin \mathcal O_1 +(1 +\mathcal O_4) \sin \mathcal O_3 \right)^2,\\
 \epsilon^{2/3} \phi(\epsilon^{1/3},E):&=\tan^{-1}\left(\frac{(1 +\mathcal O_2) \sin \mathcal O_1 +(1 +\mathcal O_4) \sin \mathcal O_3 }{(1 +\mathcal O_2) \cos \mathcal O_1 +(1 +\mathcal O_4) \cos \mathcal O_3}\right).
\end{align*}
 Subsequently, we divide the resulting expression for \eqref{deltaexpr} by $\frac12 (1+\epsilon^{2/3}\rho(\epsilon^{1/3},E)) $ for all $\epsilon>0$ sufficiently small.  In this way, we obtain \eqref{DeltaEq0} and \eqref{DeltaExpr}.
% \begin{align*}
% \sin\left(\frac{1}{\epsilon} \int_{t_-(E)}^{t_+(E)} \sqrt{U(t,E)} dt - \frac{\pi}{2}+\epsilon^{2/3} \phi(\right)
% \end{align*}%eps(n,E)=...

%   W^s(\epsilon,E) \cap \{t=t_0(E)\} :\quad 
%  \end{align*}

% \operatorname{span}\begin{pmatrix}
% %  \begin{aligned}s
%  X(\epsilon^{1/3},E) \\
%   -{\sqrt{\mu (\nu,E)}}Y(\epsilon^{1/3},E)
%  % \end{aligned}
% %\end{equation}
% \end{pmatrix},\eqlab{WuExpr2}
% \end{align}
% where 
% \begin{align*}
%  X(\epsilon^{1/3},E) &= \cos\left(\frac{1}{\epsilon }\int_{0}^{\nu} \sqrt{\mu(s,E)}ds -\frac{\pi}{4}+ \epsilon^{2/3} \phi_1(\epsilon^{1/3},E)\right),\\
%  Y(\epsilon^{1/3},E) &= 
%  (1+\epsilon^{2/3}\rho(\epsilon^{1/3},E))\sin\left(\frac{1}{\epsilon }\int_{0}^{\nu} \sqrt{\mu(s,E)}ds -\frac{\pi}{4}+ \epsilon^{2/3} \phi_2(\epsilon^{1/3},E)\right)
% %   \operatorname{Im}\big[e^{-\frac{\pi}{4}i}e^{\frac{i}{\epsilon }\int_{0}^{\nu} \sqrt{\mu(s,E)}ds}(1+\epsilon^{2/3} Z_2(\epsilon^{1/3},E))\big],
% \end{align*}
% with 
% $\rho,\phi_1,\phi_2:[0,\epsilon_0^{1/3})\times D\rightarrow \mathbb R$ all $C^M$-smooth.
\end{proof}

We now use the fact that $J$ is a $C^\infty$ diffeomorphism, recall \lemmaref{Jprop}, to write the equation in terms of $\epsilon$ and $J$ instead:
\begin{align}\eqlab{DeltaExpr2}
\widetilde \Delta(\epsilon,J)= \sin \left(\frac{\pi}{\epsilon}J -\frac{\pi}{2}+\epsilon^{2/3}\widetilde \phi(\epsilon^{1/3},J) \right) +\epsilon^{2/3}\widetilde\rho(\epsilon^{1/3},J),
\end{align}
for $\epsilon \in (0,\epsilon_0)$, $J\in \widetilde D=E(D)$,
where $\widetilde \phi(\epsilon^{1/3},J)=\phi(\epsilon^{1/3},E(J))$, $\widetilde \rho(\epsilon^{1/3},J)=\rho(\epsilon^{1/3},E(J))$ and where $E(J)$ denotes the $C^\infty$-smooth inverse of \eqref{Jdefn}. 

We henceforth drop the tilde in \eqref{DeltaExpr2}. In the following, we proceed to solve \eqref{DeltaExpr2} for $J$ as a function of $\epsilon$. 
\begin{lemma}\lemmalab{solveDeltaEq0}
 $\Delta(\epsilon,J)=0$, $\epsilon \in (0,\epsilon_0)$, $J\in D$, if and only if 
 \begin{align}
  J =\delta+\frac12 \epsilon + \rsp{\epsilon^{5/3}} \overline\phi(\epsilon^{1/3},J,(-1)^n),\eqlab{Jfinal}
 \end{align}
 where 
 \begin{align}
  \delta = n \epsilon,\quad n\in \mathbb N.
 \end{align}
 Here each of the functions:
 \begin{align*}
  \overline \phi(\cdot,\cdot,\pm 1):[0,\epsilon_0^{1/3})\times D\rightarrow \mathbb R,
 \end{align*}
are $C^\infty$-smooth.
 
% 
%  where $\delta=\pi n\epsilon$, 
%  for some $n\in \mathbb N$. 
\end{lemma}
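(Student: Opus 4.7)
The plan is to invert the outer $\sin$ in \eqref{DeltaExpr2} branch by branch. Set
\begin{align*}
\theta(\epsilon,J) := \frac{\pi J}{\epsilon} - \frac{\pi}{2} + \epsilon^{2/3}\phi(\epsilon^{1/3},J),
\end{align*}
so that $\Delta(\epsilon,J)=0$ reads $\sin\theta = -\epsilon^{2/3}\rho(\epsilon^{1/3},J)$. Since $\rho$ is continuous on $[0,\epsilon_0^{1/3})\times D$ with $D$ compact, after shrinking $\epsilon_0$ I have $|\epsilon^{2/3}\rho(\epsilon^{1/3},J)| \le \tfrac12$ uniformly on $(0,\epsilon_0)\times D$. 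For each such $(\epsilon,J)$ there is therefore a unique $n=n(\epsilon,J)\in\mathbb Z$ and a unique $\psi\in(-\pi/2,\pi/2)$ with $\theta=n\pi+\psi$, and the identity $\sin(n\pi+\psi)=(-1)^n\sin\psi$ forces
\begin{align*}
\psi = \arcsin\!\bigl((-1)^{n+1}\epsilon^{2/3}\rho(\epsilon^{1/3},J)\bigr).
\end{align*}

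The key smoothness input is that $\arcsin$ is real-analytic at $0$ with $\arcsin(z)=z+z^3/6+\mathcal O(z^5)$. Writing $s=\epsilon^{1/3}$, the expression
\begin{align*}
\Psi_\sigma(s,J) := s^{-2}\arcsin\!\bigl(\sigma s^2 \rho(s,J)\bigr) = \sigma \rho(s,J)+\mathcal O(s^4), \qquad \sigma\in\{\pm 1\},
\end{align*}
therefore extends to a $C^\infty$-smooth function of $(s,J)\in[0,\epsilon_0^{1/3})\times D$; the apparent $s^{-2}$ singularity cancels against the odd Taylor series of $\arcsin$. Inserting $\psi=\epsilon^{2/3}\Psi_{(-1)^{n+1}}(\epsilon^{1/3},J)$ into $\theta=n\pi+\psi$ and solving the resulting affine equation for $J$ yields
\begin{align*}
J = \frac{\epsilon}{2} + n\epsilon + \frac{\epsilon^{5/3}}{\pi}\!\left(\Psi_{(-1)^{n+1}}(\epsilon^{1/3},J) - \phi(\epsilon^{1/3},J)\right).
\end{align*}
Setting $\overline\phi(s,J,\sigma) := \pi^{-1}\bigl(\Psi_{-\sigma}(s,J)-\phi(s,J)\bigr)$ for $\sigma\in\{\pm 1\}$ then gives the claimed implicit equation. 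Positivity of $J\in D\subset\mathbb R_+$ together with smallness of $\epsilon_0$ forces $n$ to be (large and) positive, so $n\in\mathbb N$.

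The converse direction is immediate by unwinding the construction: any $(J,\epsilon)$ satisfying the implicit equation has $\theta(\epsilon,J)=n\pi+\arcsin\!\bigl((-1)^{n+1}\epsilon^{2/3}\rho(\epsilon^{1/3},J)\bigr)$, whence $\sin\theta=-\epsilon^{2/3}\rho$ and $\Delta(\epsilon,J)=0$. The only real subtlety — which I regard as the main point to verify carefully — is the smoothness bookkeeping at $s=0$: one must check that the factor $s^{-2}$ in front of $\arcsin(s^2\rho(s,J))$ cancels cleanly, so that $\Psi_\sigma$ is genuinely $C^\infty$-smooth up to and including $s=0$, jointly with $J\in D$. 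This cancellation is what produces the fractional power $\epsilon^{5/3}$ (rather than $\epsilon$ or $\epsilon^{2}$) in the final implicit equation, and is entirely consistent with the $\epsilon^{1/3}$-smoothness of $\phi,\rho$ inherited from \thmref{DeltaThm}. Throughout the argument $J$ appears on both sides; no explicit inversion is attempted, since the lemma only asserts the stated equivalence.
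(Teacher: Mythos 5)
Your proposal is correct and follows essentially the same route as the paper: both reduce $\Delta(\epsilon,J)=0$ to the branch equation $(-1)^n\sin(K)=-\epsilon^{2/3}\rho(\epsilon^{1/3},J)$ near $K=0$ and solve it for $K=\epsilon^{2/3}\times(\text{smooth})$, the only difference being that you make the inversion explicit via the analyticity of $\arcsin$ (correctly handling the $s^{-2}$ cancellation that yields the $\epsilon^{5/3}$ term) where the paper simply asserts the existence of the smooth solution $\widetilde K$.
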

\begin{proof}
Let $K$ be so that
\begin{align}
 J =\delta+\frac{1}{2} \epsilon - \frac{1}{\pi}\rsp{\epsilon^{5/3}} \phi(\epsilon^{1/3},J)+\frac{\epsilon}{\pi} K.\eqlab{Jsubs}
\end{align}
Inserting this into $\Delta(\epsilon,J)=0$, see \eqref{DeltaExpr2}, gives
 \begin{align}
  \sin(n \pi + K)+\epsilon^{2/3}\rho(\epsilon^{1/3},J)=(-1)^n\sin(K)+\epsilon^{2/3}\rho(\epsilon^{1/3},J)=0,\eqlab{Keqn}
 \end{align}
 upon using that $\delta=n\epsilon$. We obtain a solution 
 \begin{align*}
  K=\epsilon^{2/3} \widetilde K(\epsilon^{1/3},J,(-1)^n),
 \end{align*}
 of \eqref{Keqn},  for some $C^\infty$-smooth $\widetilde K$,  for all $\epsilon\ge 0$ small enough. Inserting this into \eqref{Jsubs} gives \eqref{Jfinal}.
%  
%  Simplifying gives \eqref{Keqn}.
\end{proof}
We now solve \eqref{Jfinal} for $J\in D$ as a function of $\delta$ and $\epsilon$. 
\begin{lemma}
Fix any \KUK{$D_0\subsetneq D$} a compact interval. Then there exists an $\epsilon_0>0$ sufficiently small so that the following holds for any $\delta\in D_0$, $\epsilon\in [0,\epsilon_0^{1/3})$: There exists a unique solution of \eqref{Jfinal} of the following form:
\begin{align*}
 J &= \delta +\frac12 \epsilon+\rsp{\epsilon^{5/3}} \rsp{\overline J}(\delta,\epsilon^{1/3},(-1)^n),
%  K &= \epsilon^{2/3} \widetilde K(\delta,\epsilon^{1/3},(-1)^n).
\end{align*}
where
\begin{align}
\rsp{\overline J}(\cdot,\cdot,\pm 1): D_0\times [0,\epsilon_0^{1/3})\rightarrow \mathbb R,
\end{align}
are $C^\infty$-smooth functions. 
\end{lemma}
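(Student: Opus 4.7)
The plan is a straightforward application of the implicit function theorem, organized so that the form of the remainder drops out automatically. Fix $\sigma \in \{-1,+1\}$ and introduce the auxiliary unknown $H$ by writing
\begin{align*}
 J = \delta + \tfrac{1}{2}\epsilon + \epsilon^{5/3} H,
\end{align*}
so that $H$ is the natural carrier of the $\mathcal O(\epsilon^{5/3})$-correction in \eqref{eq:Jfinal}. Substituting into \eqref{eq:Jfinal} yields the equation
\begin{align*}
 F(H,\delta,r,\sigma) := H - \overline{\phi}\bigl(r,\,\delta + \tfrac{1}{2}r^3 + r^5 H,\,\sigma\bigr) = 0,
\end{align*}
where I have set $r := \epsilon^{1/3}\in[0,\epsilon_0^{1/3})$. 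By the previous lemma $\overline{\phi}(\cdot,\cdot,\sigma)$ is $C^\infty$ on $[0,\epsilon_0^{1/3})\times D$, and hence $F$ is $C^\infty$ in all its arguments on a neighbourhood of $\{r=0\}\times D_0$.

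At $r=0$ the equation reduces to $H = \overline{\phi}(0,\delta,\sigma)$, which for each $\delta\in D_0$ and each $\sigma=\pm 1$ is a unique explicit solution $H_0(\delta,\sigma) := \overline{\phi}(0,\delta,\sigma)$. Moreover, the partial derivative
\begin{align*}
 \frac{\partial F}{\partial H}(H,\delta,r,\sigma) = 1 - r^5\,\partial_J \overline{\phi}\bigl(r,\delta+\tfrac{1}{2}r^3+r^5H,\sigma\bigr)
\end{align*}
equals $1$ at $r=0$, so it is bounded away from zero for all $(H,\delta,r,\sigma)$ in a neighbourhood of the solution set by shrinking $\epsilon_0>0$ if necessary and using that $D_0\subsetneq D$ is compact (which gives uniform bounds on $\partial_J \overline{\phi}$). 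The implicit function theorem in $C^\infty$-form therefore yields, for each $\sigma=\pm 1$, a unique $C^\infty$-smooth function
\begin{align*}
 \overline{J}(\cdot,\cdot,\sigma) : D_0 \times [0,\epsilon_0^{1/3}) \to \mathbb R,
\end{align*}
such that $F(\overline{J}(\delta,r,\sigma),\delta,r,\sigma) = 0$ for all $(\delta,r)\in D_0\times[0,\epsilon_0^{1/3})$. Unwinding the substitution gives exactly
\begin{align*}
 J = \delta + \tfrac{1}{2}\epsilon + \epsilon^{5/3}\,\overline{J}(\delta,\epsilon^{1/3},(-1)^n),
\end{align*}
as claimed, and uniqueness for $J$ in a neighbourhood of $\delta+\tfrac{1}{2}\epsilon$ follows from the uniqueness clause of the implicit function theorem together with the one-to-one correspondence between $J$ and $H$.

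No real obstacle is anticipated; the only minor point that must be watched is that the compact containment $D_0\subsetneq D$ is used in two places: first, to make sure that for all sufficiently small $\epsilon$ the corrected value $\delta + \tfrac{1}{2}\epsilon + \epsilon^{5/3}H$ stays in $D$ (where $\overline{\phi}$ is defined), and second, to obtain the uniform bound on $\partial_J\overline{\phi}$ needed to keep $\partial F/\partial H$ bounded away from zero uniformly in $(\delta,\sigma)$. Both amount to choosing $\epsilon_0$ small depending on the gap between $D_0$ and $D$, which is harmless.
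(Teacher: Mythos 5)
Your proposal is correct and follows essentially the same route as the paper, which simply invokes the implicit function theorem using that $J=\delta$ is a regular solution of \eqref{Jfinal} at $\epsilon=0$. Your reformulation in terms of the unknown $H$ is a slightly more explicit way of organizing the same argument, with the benefit that the $\epsilon^{5/3}$-structure of the remainder and its smoothness in $(\delta,\epsilon^{1/3})$ drop out automatically rather than requiring a short a posteriori remark.
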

\begin{proof}
 The result clearly follows from the implicit function theorem: $J=\delta$ is a regular solution of \eqref{Jfinal}  for $\epsilon=0$. 
\end{proof}
In this way, we have proven \thmref{01eigenvalues}.

\section{Proof of \thmref{smalleigenvalues}}\seclab{smalleigenvalues}

\rsp{To prove \thmref{smalleigenvalues}, we insert}
\begin{align}
E=\epsilon e,\eqlab{scalinge}
\end{align}
% Inserting \rspp{\eqref{scalinge}} 
into \eqref{xyt}. This produces the extended system
 \begin{equation}\eqlab{systemsmall}
\begin{aligned}
 \dot x &= y,\\
 \dot y &=(V(t)-\epsilon e)x,\\
 \dot t &=\epsilon,\\
 \dot \epsilon &=0.
\end{aligned}
\end{equation}
We consider $e\in D_1$ with $D_1$ fixed in some large compact set and $0\le \epsilon\ll 1$.
% Setting $\epsilon=0$ in this system, we find that $(x,0,0,0)$ is completely degenerate under the assumption \ref{Vquadratic}. We therefore apply a blowup transformation:
By assumption \ref{Vquadratic} each point $(0,0,t,0)$ with $t\ne 0$ and $\epsilon=0$ is partially hyperbolic for all $e\in D_1$, having stable and unstable manifolds $W^s$ and $W^u$. However, $(x,0,0,0)$ is fully nonhyperbolic for all $x\in \mathbb R$. We therefore apply the following blowup transformation:
% \begin{align*}
%  (r,(\bar y,\bar t,\bar \epsilon)) \mapsto 
% \end{align*}
 \begin{align}
  (r,(\bar y,\bar t,\bar \epsilon))\mapsto \begin{cases} 
                             y = r\bar y,\\
                             t\,=r \bar t,\\
                             \epsilon \,=r^2 \bar \epsilon,
                            \end{cases}\eqlab{blowupsmall}
% y=rY,\,t=r^2 T,\,\epsilon 
 \end{align}
where $r\ge 0$, $(\bar y, \bar t,\bar \epsilon)\in S^2$. Here $S^2\subset \mathbb R^3$ denotes the unit sphere. In this way, the set of points $(x,0,0,0)$ is blown up to a cylinder of spheres. To describe the blown up system we use two charts: $\bar t=-1$ and $\bar \epsilon=1$ with chart-specific coordinates $(y_1,r_1,\epsilon_1)$ and $(y_2,t_2,r_2)$, respectively, defined by 
% \begin{align*}
 \begin{align}\eqlab{bartEq1}
(y_1,r_1,\epsilon_1)\mapsto 
 \begin{cases} y = r_1 y_1,\\
 t \,= -r_1,\\
\epsilon \,= r_1^2\epsilon_1,
\end{cases}
\end{align}
and
\begin{align}\eqlab{barEpsEq1}
(r_2,y_2,t_2)\mapsto 
 \begin{cases} y = r_2 y_2,\\
 t \,= r_2 t_2,\\
\epsilon \,= r_2^2.
\end{cases}
\end{align}
The charts overlap for $t_2<0$ and the change of coordinates is given by
\begin{align}
 &\begin{cases} r_1 = r_2 (-t_2),\\  y_1 = y_2 (-t_2)^{-1},\\\epsilon_1 \,= (-t_2)^{-2}.\end{cases}\eqlab{cc12A}
\end{align}
The details of the corresponding $\bar t=1$-chart are similar to those in $\bar t=-1$ and the details of $\bar t=1$ are therefore left out.

We present further details below, but in summary the blowup transformation (by working in the charts $\bar t=-1$ and $\bar t=1$) allows us to extend the unstable manifolds into the scaling chart $\bar \epsilon=1$ where $t=\mathcal O(\epsilon^{1/2})$. 
In the scaling chart, we obtain the following Weber equation for $r_2=0$ (upon desingularization)
\begin{equation}\eqlab{webersystem}
\begin{aligned}
 \dot x &=y_2,\\
 \dot y_2 &= (t_2^2-e) x,\\
 \dot t_2 &=1.
\end{aligned}
\end{equation}
For this system the eigenvalues are known: $e=2n+1$ for all $n\in \mathbb N_0$, see \cite{AbramowitzStegun1964} and \eqref{weber2nd} below, and we obtain true eigenvalues $e_{0}(\epsilon),\ldots ,e_{n_0}(\epsilon)$, where $e_{k} = 2k+1$, $k\in \{0,1,\ldots,n_0\}$, of \eqref{xyt} with $E=\epsilon e$, for $e\in D_1$, for all $0<\epsilon\ll 1$ by applying (regular) perturbation theory. 

% Notice that 
% \begin{align}
% \xi= e^{-1},\eqlab{xie}
% \end{align}
% with $\xi$ as in \eqref{scalingEps2}. Consequently, the eigenvalues $e_{k},\ldots ,e_{n_0}$ with $k$ and $n_0$ large enough, are also visible in chart  $\overline E=1$ in the neighborhood $D_2$. At the same time, a subset of the eigenvalues of \thmref{01eigenvalues} are clearly also visible there (taking $D=[c_0,c_1]$ with $c_0>0$ small enough in \thmref{01eigenvalues}). The chart $\overline E=1$ is therefore ideally suited for describing the intermediate eigenvalues. We will study this further in \secref{intermediate}.

% \section{The small eigenvalues}\seclab{small}
% We consider \eqref{systemsmall}  and apply the blowup transformation \eqref{blowupsmall}. We consider two charts 
% and \eqref{barEpsEq1}, respectively. 
% % \end{align*}

\subsection{Analysis in the $\bar t=-1$-chart}
Upon inserting \eqref{bartEq1} into the extended system \eqref{systemsmall}, we obtain the following equations
\begin{align*}
 \dot x &=y_1,\\
 \dot y_1 &=(V_1(r_1)-\epsilon_1 e) x+\epsilon_1 y_1,\\
 \dot r_1 &=-r_1\epsilon_1,\\
 \dot \epsilon_1 &=2\epsilon_1^2,
\end{align*}
upon desingularization through division by $r_1$. Here
\begin{align*}
 V_1(r_1): = r_1^{-2} V(-r_1),
\end{align*}
by
 assumption \ref{Vquadratic},
has a $C^\infty$-smooth extension to $r_1=0$ given by $V_1(0)=1$. 

Now, setting $r_1=\epsilon_1=0$ gives 
\begin{align*}
 \dot x &=y_1,\\
 \dot y_1 &= x.
\end{align*}
For this system $(x,y_1)=(0,0)$ is a linear saddle with stable space $\text{span}\,(1,-1)$ and unstable space $\text{span}\,(1,1)$. For the full $(x,y_1,r_1,\epsilon_1)$-space, the set defined by $(0,0,r_1,\epsilon_1)$ with $r_1$, $\epsilon_1\sim 0$, becomes a center manifold, having a smooth foliation of stable and unstable fibers. 
Each foliation produces a stable manifold $W_1^s$ and an unstable manifold $W_1^u$ of the following local graph form
\begin{align*}
W_1^s:\quad &y_1 = \left(-1 +h_s(\epsilon_1,r_1,e)\right)x,\\
 W_1^u:\quad &y_1 = \left(1 +h_u(\epsilon_1,r_1,E)\right)x,
\end{align*}
for some smooth $h_{s,u}$ with $h_{s,u}(0,0,e)=0$ for all $e\in D_1$, and $(\epsilon_1,r_1)\in [0,\upsilon]^2$. (By center manifold theory, the smoothness of $h_{s,u}$ is apriori only finite for fixed $\upsilon>0$ but arbitrary as $\upsilon\rightarrow 0$.) 
\begin{lemma}\lemmalab{W1uW1s}
 $W_1^u\cap \{r_1=0\}$ is unique whereas $W_1^s\cap \{r_1=0\}$ is nonunique.
\end{lemma}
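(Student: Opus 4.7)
On the invariant set $\{r_1 = 0\}$, where $V_1(0) = 1$, the chart system collapses to
\begin{align*}
\dot x = y_1, \qquad \dot y_1 = (1 - \epsilon_1 e)\,x + \epsilon_1 y_1, \qquad \dot \epsilon_1 = 2\epsilon_1^2.
\end{align*}
Both $W_1^u\cap\{r_1=0\}$ and $W_1^s\cap\{r_1=0\}$ are invariant $2$-surfaces containing the $\epsilon_1$-axis and tangent at the origin to $y_1 = \pm x$, respectively. The plan is to write each as a graph $y_1 = (\pm 1 + h_{u,s}(\epsilon_1))\,x$, with $e$ a parameter, and reduce both claims to a scalar ODE analysis.

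A direct substitution into the invariance identity $\dot y_1 = h_{u,s}'(\epsilon_1)\,\dot\epsilon_1\, x + (\pm 1 + h_{u,s})\,\dot x$ yields the singular Riccati ODEs
\begin{align*}
2\epsilon_1^2\, h_u' &= -h_u^2 + (\epsilon_1 - 2)\,h_u + \epsilon_1(1-e),\\
2\epsilon_1^2\, h_s' &= -h_s^2 + (\epsilon_1 + 2)\,h_s - \epsilon_1(1+e),
\end{align*}
each to be solved with $h(0)=0$ in the class of $C^\infty$ functions of $\epsilon_1 \ge 0$ near $0$. The decisive asymmetry is the sign of the linear coefficient: $-2$ for $h_u$ against $+2$ for $h_s$. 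Linearizing about any smooth solution gives a homogeneous equation of leading form $2\epsilon_1^2\,\delta h' \approx \mp 2\,\delta h$, whose fundamental solutions behave like $\exp(\pm 1/\epsilon_1)$ as $\epsilon_1 \to 0^+$ (up to algebraic prefactors absorbed from the subleading terms).

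For $W_1^u$ the relevant branch $\exp(+1/\epsilon_1)$ blows up as $\epsilon_1 \to 0^+$ and so cannot be added to any $C^\infty$ solution. Therefore the difference of any two smooth solutions with $h_u(0)=0$ must vanish, which proves uniqueness of $W_1^u \cap \{r_1=0\}$. For $W_1^s$ the relevant branch $\exp(-1/\epsilon_1)$ is $C^\infty$-flat at $\epsilon_1=0$ and can be added freely. I would first construct one particular smooth solution $h_s^\ast$ whose asymptotic expansion begins with $\tfrac{1+e}{2}\,\epsilon_1 + \mathcal O(\epsilon_1^2)$ (the uniquely determined formal power series), and then exhibit a one-parameter family $h_s = h_s^\ast + C\,\psi$, $C\in\mathbb R$, with $\psi(\epsilon_1) \sim e^{-1/\epsilon_1}$, providing infinitely many distinct invariant surfaces $W_1^s\cap\{r_1=0\}$.

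The main obstacle is promoting this linearized dichotomy to the full nonlinear Riccati equation: one must show that the $C^\infty$-flat homogeneous branch persists as an honest family of solutions of the nonlinear problem. Since $g^2 = \mathcal O(e^{-2/\epsilon_1})$ whenever $g \sim e^{-1/\epsilon_1}$, the Riccati nonlinearity is strictly subdominant to the linear part on flat perturbations, and I would close this step by a contraction argument in an appropriate weighted Banach space of flat functions. The analogous argument with the sign of the weight reversed then rules out any non-trivial flat perturbation for $h_u$ and confirms the uniqueness claim.
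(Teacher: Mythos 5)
Your proposal is correct and follows essentially the same route as the paper: the paper sets $u=x^{-1}y_1$, obtains the same Riccati equation (your $h_{u,s}$ is just $u\mp 1$), and concludes by viewing $W_1^u\cap\{r_1=0\}$ and $W_1^s\cap\{r_1=0\}$ as center manifolds of a nonhyperbolic saddle at $(1,0)$ (unique) and a nonhyperbolic unstable node at $(-1,0)$ (nonunique). The only difference is that you re-derive the standard (non)uniqueness dichotomy by hand via the $e^{\pm 1/\epsilon_1}$ fundamental solutions and a weighted contraction argument, whereas the paper simply cites it.
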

\begin{proof}
 Let $u:=x^{-1} y_1$. Then
 \begin{align*}
  \dot u &= 1-u^2 -\epsilon_1 e+\epsilon_1 u,\\
  \dot \epsilon_1 &= 2 \epsilon_1^2.
 \end{align*}
 $W_1^u\cap \{r_1=0\}$ and $W_1^s\cap \{r_1=0\}$ are within this projective space center manifolds of $(u,\epsilon_1)=(1,0)$ and $(u,\epsilon_1)=(-1,0)$, respectively. The center manifold of the former is unique since it is a nonhyperbolic saddle, whereas the latter is a nonhyperbolic unstable node. 
\end{proof}

\subsection{Analysis in the $\bar \epsilon=1$-chart}
Upon inserting \eqref{barEpsEq1} into \eqref{systemsmall}, we obtain the following equations:
% \begin{align*}
   \begin{equation}\eqlab{xy2t2}
 \begin{aligned}
 \dot x &=y_2,\\
 \dot y_2 &= (t_2^2+r_2 t_2^3 R_2(r_2 t_2)-e) x,\\
 \dot t_2 &=1,
\end{aligned}
\end{equation}
and $\dot r_2=0$ upon desingularization through division by $r_2$. Here $R_2$ is a $C^\infty$-smooth function.

By using \eqref{cc12A} we can transform the result \rspp{of} \lemmaref{W1uW1s} on the stable and unstable manifolds into the present chart. This gives an unstable manifold $W_2^u(r_2,e)$ in the $(x,y_2,t_2)$-space, which within compact subsets, depends smoothly on $r_2$ and $e$. In particular, within compact subsets it is smoothly $\mathcal O(r_2)$-close, uniformly in $e\in D_1$, to the unique unstable manifold of the $r_2=0$ subsystem, see \eqref{webersystem}, for $t_2\ll -1$. By working in the $\bar t=1$-chart, then we obtain a similar result on the stable manifold $W_2^s(r_2,e)$. Here $W_2^s(r_2,e)$ is also (within compact subsets of the $(x,y_2,t_2)$-space) smoothly $\mathcal O(r_2)$-close, uniformly in $e\in D_1$, to a unique stable manifold of the $r_2=0$ subsystem for $t_2\gg 1$.

We can write \eqref{xy2t2} for $r_2=0$ as a first order system:
\begin{align}
x''(t_2) =(t_2^2-e) x(t_2),\eqlab{weber2nd}
\end{align}
or upon setting $x(t_2)= e^{-\frac12 t_2^2} u(t_2)$:
\begin{align*}
u''(t_2) -2tu'(t_2)+(e-1) u=0.
\end{align*}
This equation has a polynomial solution $u(t_2)=H_n(t_2)$ for $e=2n+1$,  $H_n$ is the Hermite polynomial of degree $n\in \mathbb N_0$, see e.g. \cite{AbramowitzStegun1964}.
% The following result is then standard, see e.g. \cite{}.
\begin{lemma}
\begin{align}
W_2^u(0,e) = W_2^s(0,e) \Longleftrightarrow e=2n+1,\,n\in \mathbb N_0.
\end{align}
Specifically, if $e=2n+1$, $n\in \mathbb N_0$ then 
% There exists a nontrivial bounded solution $x(t_2)$ of \eqref{weber2nd} if and only if $e=2n+1$, $n\in \mathbb N_0$ with
\begin{equation}\eqlab{webersol}
\begin{aligned}
x(t_2) &= e^{-\frac12 t_2^2} H_n(t_2),\\
y_2(t_2) &= e^{-\frac12 t_2^2}\left(2nH_{n-1}(t_2)-t_2 H_n(t_2)\right),
\end{aligned}
\end{equation}
is a bounded solution of \eqref{webersystem}. 
\end{lemma}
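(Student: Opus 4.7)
The plan is to verify the two directions of the biconditional separately. First I would record the asymptotic behavior of solutions of \eqref{weber2nd} as $t_2\to\pm\infty$: a standard Liouville--Green (WKB) computation shows that, for any fixed $e$, the two-dimensional solution space splits on each side into a line of sub-dominant solutions behaving like $e^{-t_2^2/2}|t_2|^{(e-1)/2}(1+o(1))$ and a line of dominant solutions behaving like $e^{+t_2^2/2}|t_2|^{-(e+1)/2}(1+o(1))$. Combined with the characterizations of $W_2^u(0,e)$ and $W_2^s(0,e)$ inherited from the $\bar t=\mp 1$-charts via the coordinate change \eqref{cc12A} (each being the unique line of decaying solutions at the respective infinity), this yields the working criterion: $W_2^u(0,e)=W_2^s(0,e)$ if and only if \eqref{weber2nd} admits a nontrivial solution decaying at both $\pm\infty$.

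For the ``if'' direction, I would plug $x(t_2)=e^{-t_2^2/2}H_n(t_2)$ into \eqref{weber2nd}. Using Hermite's equation $H_n''-2t_2 H_n'+2nH_n=0$ together with the identity $H_n'=2nH_{n-1}$, the residual collapses to $(2n+1-e)x$, which vanishes precisely when $e=2n+1$. Differentiating then yields the claimed formula for $y_2=x'$, and the super-exponential decay of $e^{-t_2^2/2}$ against the polynomial $H_n$ gives decay of $(x,y_2)$ at both infinities; this common decaying solution therefore lies in the one-dimensional subspaces $W_2^u(0,e)$ and $W_2^s(0,e)$, forcing them to coincide.

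For the ``only if'' direction, any nontrivial solution in $W_2^u(0,e)\cap W_2^s(0,e)$ is sub-dominant at both infinities and hence decays faster than any polynomial, so it is an $L^2(\mathbb R)$-eigenfunction of the quantum harmonic oscillator $L=-\partial_{t_2}^2+t_2^2$ with eigenvalue $e$. Since $L$ is essentially self-adjoint on $C_c^\infty(\mathbb R)$ with the classical pure point spectrum $\{2n+1:n\in\mathbb N_0\}$ (see e.g.~\cite{reed1972a}), we obtain $e=2n+1$. A self-contained alternative is to use the substitution $x=e^{-t_2^2/2}u$ (already performed in the excerpt) reducing to Hermite's equation $u''-2t_2 u'+(e-1)u=0$; a Frobenius/recursion analysis then shows that the power series solution terminates (yielding a polynomial and hence an $L^2$ eigenfunction for $x$) exactly when $e-1\in 2\mathbb N_0$, while the non-terminating second solution grows like $e^{t_2^2}$ and is thereby discarded.

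The main obstacle is the asymptotic dichotomy at $\pm\infty$, which underpins both the hyperbolic characterization of $W_2^u(0,e)$ and $W_2^s(0,e)$ as lines of decaying solutions and the ruling out of the non-polynomial branch when $e\ne 2n+1$. This is classical parabolic cylinder function theory and may either be invoked from the literature already cited in the paper or derived by a short direct WKB-type argument applied to \eqref{weber2nd}; everything else reduces to elementary verification with Hermite polynomials.
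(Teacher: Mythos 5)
Your proposal is correct and follows essentially the same route the paper takes: the paper states this lemma without a written proof, relying on exactly the classical Weber/Hermite theory you spell out (the substitution $x=e^{-t_2^2/2}u$ reducing \eqref{weber2nd} to Hermite's equation, the citation to \cite{AbramowitzStegun1964}, and the identification of $W_2^u(0,e)$, $W_2^s(0,e)$ with the unique lines of decaying solutions at $t_2\to\mp\infty$ coming from the $\bar t=\mp 1$-charts). Your direct verification of \eqref{webersol} and your two arguments for the converse (self-adjointness of the harmonic oscillator, or the Frobenius termination/growth dichotomy for the non-polynomial branch) are both sound and supply the details the paper leaves implicit.
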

We now use that $W_2^u(r_2,e)$ and $W_2^s(r_2,e)$ are smooth perturbations of $W_2^{u}(0,e)$ and $W_2^s(0,e)$ (in any compact domain of the $(x_2,y_2,t_2)$-space) to solve for true eigenvalues. There are different ways to proceed but our approach is inspired by Melnikov theory, see e.g. \cite{Guckenheimer97}.

Fix $n\in \mathbb N_0$, write $$e=2n+1+\tilde e,$$ and let $\Sigma_2$ denote the section at $t_2=0$. Moreover, let $\mathcal U_2$ denote the intersection of $W_2^u(0,2n+1)=W_2^s(0,2n+1)$ with $\Sigma_2$. By \eqref{webersol}, and the fact that $H_n$ is even/odd if $n$ is so, $\mathcal U_2$ is given by 
the $x$-axis if $n$ is even and the $y_2$-axis if $n$ is odd. Let $(x^i(t_2,r_2,\tilde e),y_2^i(t_2,r_2,\tilde e),t_2)\in W_2^i(r_2,2n+1+\tilde e)$ for $i=u,s$ denote solutions of \eqref{xy2t2} with $x^i(0,r_2,\tilde e)=1$ if $n$ is even and $y_2^i(0,r_2,\tilde e)=1$ if $n$ is odd. We put $\eta:= (x^i(0,r_2,\tilde e),y_2^i(0,r_2,\tilde e))$, which only depends upon $n$. Finally, we let $\mathcal V_2$ denote the one-dimensional space within $\Sigma_2$ that is orthogonal to $\mathcal U_2$ and define $\mathcal P$ as the orthogonal projection onto the $\mathcal V_2$-space.

% $P$ denote the projection onto the sp 
Finally, define
\begin{align*}
 q(t_2,r_2,\tilde e) := \begin{cases}
                        q^u(t_2,r_2,\tilde e) &\mbox{for}\,t_2\le 0,\\
                        q^s(t_2,r_2,\tilde e) &\mbox{for}\,t_2\ge 0,
                       \end{cases}
\end{align*}
for $q=x,y_2$. $(x(t_2,r_2,\tilde e),y_2(t_2,r_2,\tilde e),t_2)$ is clearly only a solution of \eqref{xy2t2} for all $t_2\in \mathbb R$ if $W_2^u=W_2^s$. 
\begin{lemma}\lemmalab{Delta1}
For all $r_2$ and $\tilde e$ sufficiently small, $W^u(r_2,2n+1+\tilde e)=W_2^s(r_2,2n+1+\tilde e)$ if and only if $\Delta_1(r_2,\tilde e)=0$ where
\begin{align*}
\Delta_1(r_2,\tilde e):=\int_{-\infty}^\infty e^{-\frac12 t_2^2} H_n(t_2) \left[-\tilde e +r_2 t_2^3 R_2(r_2t_2) \right] x(t_2,r_2,\tilde e) dt_2.
\end{align*}
\end{lemma}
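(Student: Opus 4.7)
The plan is to derive an exact Green-type identity that relates $\Delta_1(r_2,\tilde e)$ to the mismatch between the solutions $z^u=(x^u,y_2^u)$ and $z^s=(x^s,y_2^s)$ at $t_2=0$, and then use the normalisations of $x^i$, $y_2^i$ together with the parity of $H_n$ to conclude that $\Delta_1$ vanishes precisely when $z^u(0,r_2,\tilde e)=z^s(0,r_2,\tilde e)$. By linearity of \eqref{xy2t2}, the latter is equivalent to the coincidence of the one-dimensional manifolds $W_2^u(r_2,2n+1+\tilde e)=W_2^s(r_2,2n+1+\tilde e)$. The argument is a Melnikov-type computation in the spirit of \cite{Guckenheimer97}, with the bounded Weber solution $\varphi_n(t_2):=e^{-t_2^2/2}H_n(t_2)$ playing the role of the reference solution of the unperturbed self-adjoint equation.

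Concretely, $\varphi_n$ satisfies $\varphi_n''=(t_2^2-(2n+1))\varphi_n$, while any solution $x$ of \eqref{xy2t2} satisfies $x''=(t_2^2+r_2 t_2^3 R_2(r_2 t_2)-(2n+1+\tilde e))x$. Multiplying the first equation by $x$ and the second by $\varphi_n$, subtracting and rewriting the result as a Wronskian derivative yields
\begin{align*}
\bigl(\varphi_n x'-x\varphi_n'\bigr)'=\varphi_n\bigl(r_2 t_2^3 R_2(r_2 t_2)-\tilde e\bigr)x.
\end{align*}
I then integrate this identity over $(-\infty,0]$ with $x=x^u$ and over $[0,+\infty)$ with $x=x^s$, and add the two contributions. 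Using that $x^u(t_2)\to 0$ as $t_2\to -\infty$ and $x^s(t_2)\to 0$ as $t_2\to +\infty$ (with the full Weber-type Gaussian decay), together with $\varphi_n(t_2),\varphi_n'(t_2)=\mathcal O(|t_2|^{n+1}e^{-t_2^2/2})$, the boundary contributions at $\pm\infty$ vanish, and I obtain the exact identity
\begin{align*}
\Delta_1(r_2,\tilde e)=\varphi_n(0)\bigl[y_2^u(0)-y_2^s(0)\bigr]-\varphi_n'(0)\bigl[x^u(0)-x^s(0)\bigr].
\end{align*}

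The conclusion now follows from a case distinction on the parity of $n$. If $n$ is even then $H_n$ is even, so $\varphi_n(0)=H_n(0)\ne 0$ and $\varphi_n'(0)=H_n'(0)=0$, and by the normalisation $x^u(0)=x^s(0)=1$; hence the identity reduces to $\Delta_1=H_n(0)[y_2^u(0)-y_2^s(0)]$. If $n$ is odd then $H_n$ is odd, so $\varphi_n(0)=0$ and $\varphi_n'(0)=2n H_{n-1}(0)\ne 0$, while $y_2^u(0)=y_2^s(0)=1$; hence $\Delta_1=-\varphi_n'(0)[x^u(0)-x^s(0)]$. In both cases the nonzero prefactor shows that $\Delta_1(r_2,\tilde e)=0$ if and only if $z^u(0,r_2,\tilde e)=z^s(0,r_2,\tilde e)$, which is equivalent to $W_2^u(r_2,2n+1+\tilde e)=W_2^s(r_2,2n+1+\tilde e)$ by linearity.

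The step I expect to need the most care is the vanishing of the boundary terms at $t_2\to\pm\infty$, which requires a uniform (in $r_2,\tilde e$ near $0$) control of the Weber-type asymptotics of $x^u$ as $t_2\to-\infty$ and of $x^s$ as $t_2\to+\infty$. This is however a standard consequence either of a Levinson-type asymptotic ODE analysis or of the exponential dichotomy that already underlies the existence of $W_2^u$ and $W_2^s$ (in the spirit of the discussion following assumption \ref{growth}).
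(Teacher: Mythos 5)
Your proof is correct and is essentially the paper's argument: the Lagrange/Wronskian identity $(\varphi_n x'-x\varphi_n')'=\varphi_n(r_2t_2^3R_2-\tilde e)x$ integrated over the two half-lines is exactly the Melnikov computation the paper performs via the state-transition matrix and the bounded adjoint solution $\psi=(-\varphi_n',\varphi_n)$, and your parity discussion reproduces the role of the projection $\mathcal P$ onto $\mathcal V_2=\mathcal U_2^\perp$. The points you flag as delicate (decay of $x^u,x^s$ at infinity and hence convergence of the integral and vanishing of the boundary terms) are likewise taken for granted in the paper, so there is no gap relative to it.
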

\begin{proof}
 We write $$z:=(x,y_2),$$ and \eqref{xy2t2} as follows
 \begin{align*}
 \dot z &= A(t_2)z+r_2 \begin{pmatrix}
                        0\\
                        t_2^3 R_2(t_2r_2) x,
                       \end{pmatrix}
% 
%  \dot x &= y_2,
%  \dot y_2 &= (t_2^2-(2n+1)) x + r_2 t_2^3 R_2(t_2 r_2)x,
 \end{align*}
 and
 let $\Phi(t_2,t_{20})$ denote the state-transition matrix of the $r_2=0$-subsystem: $\dot z=A(t_2)z$. In this way, it is then standard to write $$z^u(0,r_2,\tilde e):=(x^u(0,r_2,\tilde e),y_2^u(0,r_2,\tilde e)),$$ in the following form:
 \begin{align*}
  z^u(0,r_2,\tilde e) = \eta + \mathcal P\int_{-\infty}^0 \Phi(0,t_2) \begin{pmatrix}
                        0\\
                        t_2^3 R_2(t_2r_2) x^u(t_2,r_2,\tilde e)                
                       \end{pmatrix} dt_2.
 \end{align*}
We obtain a similar expression for $z^s(0,r_2,\tilde e):=(x^s(0,r_2,\tilde e),y_2^s(0,r_2,\tilde e))$:
\begin{align*}
 z^s(0,r_2,\tilde e) = \eta+ \mathcal P\int_{\infty}^0 \Phi(0,t_2) \begin{pmatrix}
                        0\\
                        t_2^3 R_2(t_2r_2) x^s(t_2,r_2,\tilde e)                
                       \end{pmatrix} dt_2.
\end{align*}
Finally, we use that $$\psi(t_2) =\begin{pmatrix}
                                   -y_2(t_2,0,0)\\
                                   x_2(t_2,0,0)
                                  \end{pmatrix},$$
                                  is the unique (up to scalar multiplication) bounded solution of the adjoint equation $\dot w = -A(t_2)^T w$, satisfying $\psi(0)=\eta^\perp$. Therefore upon taking the dot product of $\psi(0)$ with $(z^u-z^s)(0,r_2,\tilde e)$, and using that $\psi(0)^T \mathcal P\Phi(0,t_2) = \psi(t_2)^T$, we obtain the desired result. 
%  \begin{align*}
%  \begin{pmatrix}
%  x^u(t_2,r_2,\tilde e)\\
%  y_2^u(t_2,r_2,\tilde e)
%  \end{align*} = \Phi(t_2,0)\upsilon(r_2,\tilde e)+\int_{-\infty}^t_2 \Phi(
\end{proof}
We now solve $\Delta_1(r_2,\tilde e)=0$. We have $\Delta_1(0,0)=0$ and
\begin{align*}
%  The following holds:
% \begin{align*}
 \frac{\partial }{\partial \tilde e}  \Delta_1(0,0) = -\int_{-\infty}^\infty e^{-t_2^2} H_n(t_2)^2 dt_2 = -\sqrt{\pi} 2^n n!\ne 0,
% \end{align*}
\end{align*}
see e.g. \cite{AbramowitzStegun1964}.
Consequently, by the implicit function theorem, there exists a $C^\infty$-smooth function $\hat e:[0,r_{20})\rightarrow \mathbb R$ such that $\hat e(0)=0$ and $\Delta_1(r_2,\hat e(r_2))=0$ for all $r_2\in [0,r_{20})$. The following lemma, shows that $\hat e$ is actually a $C^\infty$-smooth function of $\epsilon=r_2^2$.
\begin{lemma}
\eqref{xy2t2} is invariant with respect to $(r_2,t_2)\mapsto (-r_2,t_2)$ upon time reversal.
\end{lemma}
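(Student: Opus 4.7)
The claim is a direct verification, and the natural reading is that the involution acts as $(x,y_2,t_2,r_2)\mapsto(x,-y_2,-t_2,-r_2)$ together with reversal of the flow parameter; equivalently, $(r_2,t_2)\mapsto(-r_2,t_2)$ combined with the time reversal $t_2\mapsto -t_2$ in the independent variable of \eqref{xy2t2} (with $y_2=dx/dt_2$ flipping its sign accordingly). My plan is therefore to substitute this change of variables directly into each of the three equations of \eqref{xy2t2} and check invariance.

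The key algebraic observation, which is the only thing that requires a moment of thought, is that the product $r_2 t_2$ is preserved by the involution, since both factors change sign. Consequently $R_2(r_2 t_2)$ is unchanged, so the $V$-dependent coefficient carries over intact. Similarly, the prefactor $r_2 t_2^3$ satisfies $(-r_2)(-t_2)^3 = r_2 t_2^3$, so the entire bracket $t_2^2+r_2 t_2^3 R_2(r_2 t_2)-e$ in the $\dot y_2$-equation is invariant. Combined with the chain-rule factor of $-1$ coming from $t_2\mapsto -t_2$ and the sign flip $y_2\mapsto -y_2$, each of the three equations transforms into itself.

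\textbf{Where this is going.} Although the statement is only the symmetry itself, the payoff (which completes the proof of \thmref{smalleigenvalues}) is that both $W_2^u(r_2,e)$ and $W_2^s(r_2,e)$ are exchanged under the involution (since $t_2\to -\infty$ and $t_2\to+\infty$ get swapped), and hence the Melnikov function satisfies $\Delta_1(r_2,\tilde e)=\Delta_1(-r_2,\tilde e)$. The implicit-function-theorem solution $\hat e(r_2)$ inherits this evenness, so $\hat e$ extends to a smooth even function of $r_2$ on a neighborhood of $0$ and therefore descends to a smooth function $e_n(\epsilon)$ of $\epsilon=r_2^2$, as required by the theorem.

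\textbf{Main obstacle.} There is essentially none at the level of the lemma itself: the verification is mechanical once the correct involution has been identified. The only genuine subtlety is notational, namely separating the role of ``time'' ($t_2$, the independent variable of \eqref{xy2t2}) from the role of $t_2$ as a phase-space coordinate in the extended setup; being careful to transform $y_2=dx/dt_2$ consistently with $t_2\mapsto -t_2$ is what ensures the $\dot x$-equation comes out right.
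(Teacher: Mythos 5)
Your verification is correct and is exactly the intended (one-line) argument; the paper in fact states this lemma without any proof. The key point you identify --- that $r_2t_2$ and $r_2t_2^3$ are both preserved when $r_2$ and $t_2$ change sign simultaneously, so that $(x,y_2,t_2,r_2)\mapsto(x,-y_2,-t_2,-r_2)$ together with reversal of the flow parameter maps \eqref{xy2t2} to itself --- is precisely what the (loosely phrased) statement means, since $t_2$ is the time variable of \eqref{xy2t2} and reversing time therefore entails $t_2\mapsto -t_2$.
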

In this way, eigenvalues $\hat e(r_2)$ for $r_2<0$ sufficiently small satisfy $\hat e(r_2)=\hat e(-r_2)$. Therefore $\hat e$ is even as a function of $r_2$ and consequently a $C^\infty$-smooth function of $\epsilon =r_2^2$. This completes the proof of \thmref{smalleigenvalues}.

\section{Proof of \thmref{intermediateEigenvalues}}\seclab{intermediateEigenvalues}

\rsp{Consider the blowup in parameter space of $(\epsilon,E)=(0,0)$ defined by
\begin{align}
 (\rho,(\bar \epsilon,\overline E))\mapsto \begin{cases}
                                       \epsilon \,\,= \rho \bar \epsilon,\\
                                       E =\rho \overline E,
                                      \end{cases}\eqlab{bu_par}
\end{align}
for $\rho\ge 0$, $(\bar \epsilon,\overline E)\in S^1\subset \mathbb R^2$. The scaling \eqref{scalinge}, used to prove \thmref{smalleigenvalues}, can be interpreted as the $\bar \epsilon=1$-chart associated with \eqref{bu_par}:
\begin{align*}
 (\rho_1,e)\mapsto \begin{cases}
                      \epsilon \,\,=\rho_1,\\
                      E =\rho_1e,
                     \end{cases}
\end{align*}
% or simply 
% \begin{align}
% E=\epsilon e,\eqlab{scalinge}
% \end{align}
upon eliminating $\rho_1\ge 0$. In this section, we will work in 
the $\overline E=1$-chart associated with \eqref{bu_par}:
\begin{align*}
 (\rho_2,\xi )\mapsto \begin{cases}
                      \epsilon \,\,=\rho_2 \xi,\\
                      E =\rho_2,
                     \end{cases}
\end{align*}
or simply
\begin{align}
 \epsilon &= E \xi,\eqlab{epsxi}
\end{align}
upon eliminating $\rho_2 \ge 0$. Notice that 
\begin{align}
\xi= e^{-1}.\eqlab{xie}
\end{align}
Consequently, eigenvalues $e$ obtained in the $\bar \epsilon = 1$-chart, are also visible in the  $\overline E=1$-chart. At the same time, a subset of the eigenvalues of \thmref{01eigenvalues} are clearly also visible there (taking $D=[c_0,c_1]$ with $c_0>0$ small enough in \thmref{01eigenvalues}). Therefore the $\overline E=1$-chart is ideally suited for describing the intermediate eigenvalues.}

\rsp{Inserting \eqref{epsxi} into \eqref{xyt}
%  \begin{equation}\eqlab{Eepsintermediate}
%  \begin{aligned}
%   E&= \rho_2,\\
%   \epsilon &=\rho_2 \xi,
%  \end{aligned}
%  \end{equation}
 gives the extended system}
 \begin{equation}\eqlab{xytrho}
 \begin{aligned}
  \dot x &= y,\\
  \dot y&=(V(t)-E)x,\\
  \dot t &= E\xi,\\
  \dot E&=0,
 \end{aligned}
 \end{equation}
obtained from \eqref{xyt}. \rsp{We will study \eqref{xytrho} in the following. It suffices to consider $(\xi,E)\in D_2$ with $D_2$ a small neighborhood of $(0,0)$.} 
 %  upon dropping the subscript on $\rho$. 
We have $V(0)=V'(0)=0$, $V''(0)=2$ by assumption \ref{Vquadratic}. Consequently, each point $(x,0,0,0)$ is degenerate for \eqref{xytrho}, the linearization having only zero eigenvalues. We therefore proceed as in \secref{smalleigenvalues} and apply the blowup transformation (similar to \eqref{blowupsmall}):
%   point $(x,0,0,0)$ is, however, fully nonhyperbolic. We therefore apply the following blowup transformation:
% \begin{align*}
%  (r,(\bar y,\bar t,\bar \epsilon)) \mapsto 
% \end{align*}
 \begin{align*}
  (r,(\bar y,\bar t,\overline E))\mapsto \begin{cases} 
                             y \,= r\bar y,\\
                             t\,\,=r \bar t,\\
                             E =r^2 \overline E
                            \end{cases}
% y=rY,\,t=r^2 T,\,\epsilon 
 \end{align*}
where $r\ge 0$, $(\bar y, \bar t,\overline E)\in S^2\subset \mathbb R^3$. In this way, the set of points $(x,0,0,0)$ is blown up to a cylinder of spheres. Proceeding as in \secref{smalleigenvalues}, we can extend the hyperbolicity and control the stable and unstable manifolds (by working in the directional charts $\bar t=\pm 1$) all the way into the corresponding scaling chart, obtained by setting $\overline E=1$:
\begin{equation}\eqlab{chart2intermediate}
\begin{aligned}
(r_2,y_2,t_2)\mapsto 
 \begin{cases} y \,= r_2 y_2,\\
 t \,\,= r_2 t_2,\\
E = r_2^2.
\end{cases}
\end{aligned}
\end{equation}
(Here we abuse notation slightly by using the same symbols as in \eqref{barEpsEq1} for different coordinates). 
In this chart, we obtain the following equations
\begin{equation}\eqlab{xy2t2int}
\begin{aligned}
 \dot x &=y_2,\\
 \dot y_2 &= -U_2(t_2,r_2)  x,\\
 \dot t_2 &=\xi,
\end{aligned}
\end{equation}
and $\dot r_2=0$,
upon dividing the right hand side by $r_2$ (desingularization). Here 
\begin{align*}
U_2(t_2,r_2):=1-r_2^{-2} V(r_2t_2),
\end{align*}
which has a $C^\infty$-smooth extension to $r_2=0$ given by $U_2(t_2,0)=1-t_2^2$ by assumption \ref{Vquadratic}.

Let $W^u(\xi,r_2)$ and $W^s(\xi,r_2)$ denote the unstable and stable manifolds of $x=y_2=0$ for \eqref{xy2t2int} for $t_2\ll -1$ and $t_2\gg 1$, respectively, and $0\le r_2<r_{20}$, $0<\xi<\xi_0$, $r_{20},\xi_0$ both sufficiently small. We denote their extensions to all $t_2\in \mathbb R$ by the same symbol.

Let $t_{2\mp }(r_2)$ be so that $U(t_\mp (r_2),r_2) = 0$ and $t_{2\mp}(0)=\mp 1$. $t_{2\pm}$ are $C^\infty$-smooth functions of $r_2\ge 0$ small enough by the implicit function theorem.
\begin{lemma}
 \KUK{The following holds for all $\xi\in (0,\xi_0)$ and $r_2\in [0,r_{20})$}:
 \begin{align*}
\KUK{W^u(\xi,r_2)=W^s(\xi,r_2)\quad \Longleftrightarrow \quad \Delta_2(\xi,r_2)=0},
 \end{align*} where
\begin{equation}\eqlab{Delta2Expr}
\begin{aligned}
\Delta_2(\xi,r_2) &=  \sin \left(\frac{1}{\xi}\int_{t_{2-}(r_2)}^{t_{2+}(r_2)} \sqrt{U_2(t_2,r_2)} dt_2 -\frac{\pi}{2}+\xi^{2/3}\phi_2(\xi^{1/3},r_2) \right) \\
&+\xi^{2/3}\rho_2(\xi^{1/3},r_2),
 \end{aligned}
 \end{equation}
for some $C^\infty$-smooth functions $\phi_2,\rho_2:[0,\xi_0^{1/3})\times [0,r_{20})\rightarrow \mathbb R$. Moreover, 
\begin{align}
 \Delta_2(\xi,0)=0\quad \Longleftrightarrow\quad \exists\, \KUK{n\in \mathbb N_0\,:\, \xi=(2n+1)^{-1}}.
 \eqlab{Delta2n}
\end{align}
%  here $n$ is so that $\frac{1}{2n+1}<\xi_0$. 
\end{lemma}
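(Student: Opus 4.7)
The plan is to reduce both statements to results already established in the paper: the characterization by $\Delta_2$ will follow by transferring the machinery of \secref{review} to the rescaled system \eqref{xy2t2int}, while the explicit list of roots of $\Delta_2(\xi,0)$ will follow from a further $\sqrt{\xi}$-rescaling that reduces \eqref{xy2t2int} at $r_2=0$ to the Weber system of \secref{smalleigenvalues}.

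For the first statement, I would observe that \eqref{xy2t2int} has exactly the same structure as the original system \eqref{xyt}, with $\xi$ playing the role of the small parameter $\epsilon$, with $t_2$ the role of $t$, and with $U_2(t_2,r_2)$ the role of $U(t,E)=E-V(t)$. By assumption \ref{Vquadratic}, $U_2(t_2,0)=1-t_2^2$ has simple roots at $\mp 1$, so the implicit function theorem produces the $C^\infty$-smooth turning-point functions $t_{2\mp}(r_2)$ with $t_{2\mp}(0)=\mp 1$, and $U_2(\cdot,r_2)>0$ on $(t_{2-}(r_2),t_{2+}(r_2))$. The manifolds $W^u(\xi,r_2)$ and $W^s(\xi,r_2)$ are already in hand as the extensions of the unique unstable/stable objects singled out in the $\bar t=\mp 1$ charts (compare \lemmaref{W1uW1s}); one can therefore run the proof of \thmref{DeltaThm} in these rescaled variables by applying \thmref{localWu} separately at each of the two turning points $t_{2\mp}(r_2)$ and matching the two local descriptions through the elliptic interval $(t_{2-}(r_2),t_{2+}(r_2))$ via the quasi-diagonalization of \lemmaref{uvN}. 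The resulting intersection of $W^u(\xi,r_2)$ and $W^s(\xi,r_2)$ at a common section $t_2=t_{20}(r_2)$, chosen as in \eqref{t0Econd} but with $U_2$ in place of $U$, is nonempty iff a $2\times 2$ determinant vanishes; repeating the trigonometric manipulation of the proof of \lemmaref{DeltaThm} then yields precisely \eqref{Delta2Expr} with $C^\infty$-smooth $\phi_2,\rho_2$.

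For the second statement, I would set $r_2=0$ in \eqref{xy2t2int} to obtain
\begin{align*}
\dot x = y_2,\quad \dot y_2 = (t_2^2-1)x,\quad \dot t_2 = \xi,
\end{align*}
and then introduce $t_2=\sqrt{\xi}\,s$, $y_2=\sqrt{\xi}\,\tilde y$, and reparameterize time so that $s$ advances at unit speed. The system becomes
\begin{align*}
x' = \tilde y,\quad \tilde y' = (s^2 - \xi^{-1})x,\quad s'=1,
\end{align*}
which is precisely the Weber system \eqref{webersystem} with $e=\xi^{-1}$. Under this rescaling the manifolds $W^u(\xi,0)$ and $W^s(\xi,0)$ pull back to the unstable and stable manifolds $W_2^u(0,\xi^{-1})$ and $W_2^s(0,\xi^{-1})$ of \eqref{webersystem}. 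By the Weber-eigenvalue discussion in \secref{smalleigenvalues}, these coincide iff $\xi^{-1}=2n+1$ for some $n\in\mathbb N_0$, which gives \eqref{Delta2n}.

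The main obstacle is a bookkeeping issue in the first part: one has to verify that the $W^u(\xi,r_2),W^s(\xi,r_2)$ entering the lemma, defined by flowing in from the $\bar t=\mp 1$ charts of the parameter blowup \eqref{bu_par} with uniqueness provided by (the analog of) \lemmaref{W1uW1s} on the appropriate side, are the same manifolds for which \thmref{DeltaThm} supplies the expansions needed to build $\Delta_2$. This matching is not hard but needs care, since \thmref{DeltaThm} was stated for the manifolds singled out by the exponential-decay condition at $t\to\pm\infty$ in the original variables, whereas here we are working inside the scaling chart of a parameter blowup.
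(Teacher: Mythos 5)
Your proposal is correct and follows essentially the same route as the paper: the paper's proof simply invokes \lemmaref{DeltaThm} for \eqref{xy2t2int} with $\epsilon=\xi$ and $U=U_2$ to obtain \eqref{Delta2Expr}, and notes that the $r_2=0$ system is the Weber equation \eqref{weber2nd} under $\xi=e^{-1}$ (your explicit $t_2=\sqrt{\xi}\,s$, $y_2=\sqrt{\xi}\,\tilde y$ rescaling is exactly this chart change and checks out). The ``bookkeeping issue'' you flag --- identifying the blowup-chart manifolds with those covered by \thmref{DeltaThm} --- is real but is handled by the paper's earlier remark that the directional charts $\bar t=\pm 1$ extend hyperbolicity into the scaling chart, so it is not a gap.
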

\begin{proof}
  \lemmaref{DeltaThm} applies to \eqref{xy2t2int} with $\epsilon=\xi>0$ small enough and $U=U_2$. Therefore \eqref{DeltaExpr} with $\epsilon=\xi$ gives \eqref{Delta2Expr}. 
  
  Next regarding \eqref{Delta2n}, we realize that \eqref{xy2t2int} for $r_2=0$ is equivalent to the Weber equation \eqref{weber2nd} upon setting $\xi = e^{-1}$, recall \eqref{xie}. Since $e=2n+1$, \KUK{$n\in \mathbb N_0$}, are the eigenvalues associated with the Weber equation, we obtain \eqref{Delta2n}.

\end{proof}
Following this lemma, we proceed to solve $\Delta_2(\xi,r_2)=0$, using the same approach as for the large eigenvalues: 
\begin{lemma}
 $\Delta_2(\xi,r_2) =0$, $\xi\in (0,\xi_0), r_2\in [0,r_{20}]$ if and only if there is some $n\in \mathbb N$ such that 
 \begin{align}
  \frac{1}{\pi}\int_{t_{2-}(r_2)}^{t_{2+}(r_2)} \sqrt{U_2(t_2,r_2)} dt_2&=  \left(n+\frac{1}{2}\right)\xi+\xi^{5/3}\rsp{\overline J}_2(\xi^{1/3},r_2,(-1)^n),\eqlab{solutionDelta20}
 \end{align}
%  for some $n\in \mathbb N$,
 where each of the functions
 \begin{align*}
  \rsp{\overline J}_2(\cdot,\cdot,\pm 1):[0,\xi_0)\times [0,r_{20}]\rightarrow \mathbb R,
 \end{align*}
are \KUK{$C^\infty$}-smooth, and satisfy:
\begin{align}
 \rsp{\overline J}_2((2n+1)^{-1/3},0,(-1)^n)=0,\eqlab{J2cond}
\end{align}
for all \KUK{$n\in \mathbb N_0$}. 
\end{lemma}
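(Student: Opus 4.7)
The plan is to follow the same strategy used in \lemmaref{solveDeltaEq0}, exploiting the fact that $\Delta_2(\xi,r_2)$ in \eqref{Delta2Expr} has the same structural form as $\Delta(\epsilon,J)$ in \eqref{DeltaExpr2}, with $\xi$ playing the role of $\epsilon$. The essential simplification here is that the analogue of $J(E)$, namely
\[
\tilde J_2(r_2) := \frac{1}{\pi}\int_{t_{2-}(r_2)}^{t_{2+}(r_2)} \sqrt{U_2(t_2,r_2)}\,dt_2,
\]
is a $C^\infty$-smooth function of $r_2$ alone (by the same argument as in \lemmaref{Jprop}, applied to $U_2$, which by assumption \ref{Vquadratic} has simple zeros at $t_{2\pm}(r_2)$).

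First, I would substitute
\[
\tilde J_2(r_2) \;=\; \Bigl(n+\tfrac12\Bigr)\xi \;-\; \frac{1}{\pi}\,\xi^{5/3}\phi_2(\xi^{1/3},r_2) \;+\; \frac{\xi}{\pi}\,K
\]
into $\Delta_2(\xi,r_2)=0$. The argument of the sine in \eqref{Delta2Expr} collapses to $n\pi+K$, and since $\sin(n\pi+K)=(-1)^n\sin K$, the equation becomes
\[
(-1)^n \sin K + \xi^{2/3}\rho_2(\xi^{1/3},r_2) \;=\; 0.
\]
The left-hand side vanishes at $K=0$, $\xi=0$, with nonzero $K$-derivative $(-1)^n$, so the implicit function theorem yields a unique $C^\infty$-smooth solution $K = \xi^{2/3}\tilde K(\xi^{1/3},r_2,(-1)^n)$ on $[0,\xi_0^{1/3})\times[0,r_{20}]$, provided $\xi_0,r_{20}$ are taken small enough. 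The entire $n$-dependence enters through $(-1)^n$. Unwinding the substitution yields \eqref{solutionDelta20} with
\[
\overline J_2(\xi^{1/3},r_2,(-1)^n) \;:=\; \frac{1}{\pi}\Bigl( \tilde K(\xi^{1/3},r_2,(-1)^n) - \phi_2(\xi^{1/3},r_2)\Bigr),
\]
which has the claimed $C^\infty$ regularity.

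For the identity \eqref{J2cond}, I would use uniqueness together with the exact Weber-equation data. At $r_2=0$, a direct computation gives $\tilde J_2(0)=\frac{1}{\pi}\int_{-1}^{1}\sqrt{1-t_2^2}\,dt_2 = \tfrac12$. Combining this with \eqref{Delta2n}, at $\xi=(2m+1)^{-1}$ one has both $\Delta_2((2m+1)^{-1},0)=0$ and the exact identity $\tilde J_2(0) = (m+\tfrac12)\xi$. Plugging these into \eqref{solutionDelta20} forces $\xi^{5/3}\overline J_2((2m+1)^{-1/3},0,(-1)^m) = 0$, and hence \eqref{J2cond}.

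The main obstacle is not technical in any one step, but rather bookkeeping: one must verify that the parity $(-1)^n$ captures the full $n$-dependence of the IFT branch, and that the resulting smooth function $\overline J_2$ is defined on a single neighborhood $[0,\xi_0^{1/3})\times[0,r_{20}]$ independent of $n$, so that \eqref{J2cond} can be asserted for \emph{every} $m\in\mathbb N_0$ with $(2m+1)^{-1}\le \xi_0$. The uniformity is automatic from the $\xi\to 0$ analysis above, but deserves to be stated carefully.
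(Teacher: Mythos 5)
Your proposal is correct and follows essentially the same route as the paper: the same substitution reducing the sine argument to $n\pi+K$, the same implicit-function-theorem solve for $K=\xi^{2/3}\widetilde K(\xi^{1/3},r_2,(-1)^n)$, and the same use of $\tilde J_2(0)=\tfrac12$ together with \eqref{Delta2n} to force \eqref{J2cond}. The only point worth making explicit (which the paper also leaves terse) is that at $\xi=(2m+1)^{-1}$ the index $n$ produced by the equivalence must equal $m$, since the $\xi^{5/3}$-remainder is too small to shift $n$ by an integer.
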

\begin{proof}
 We proceed as in the proof of \lemmaref{solveDeltaEq0} and define $K_2$ by
 \begin{align}
  \frac{1}{\xi}\int_{t_{2-}(r_2)}^{t_{2+}(r_2)} \sqrt{U_2(t_2,r_2)} dt_2 -\frac{\pi}{2}+\xi^{2/3}\phi_2(\xi^{1/3},r_2)=\pi n + K_2.\eqlab{K2defn}
 \end{align}
Inserting this into \eqref{Delta2Expr}, see \eqref{Delta2Expr}, gives
\begin{align*}
 (-1)^n \sin(K_2) +\xi^{2/3}\rho_2(\xi^{1/3},r_2)=0,
\end{align*}
which we solve for $K_2=\xi^{2/3} \widetilde K_2(\xi^{1/3},r_2,(-1)^n)$ for all $\xi\ge 0$ small enough. Inserting the resulting expression for $K_2$ into \eqref{K2defn} and rearranging gives 
\begin{align}
 \frac{1}{\pi}\int_{t_{2-}(r_2)}^{t_{2+}(r_2)} \sqrt{U_2(t_2,r_2)} dt_2&= \left( n +\frac{1}{2}\right)\xi+\xi^{5/3} \rsp{\overline J}_2(\xi^{1/3},r_2,(-1)^n),\eqlab{intermediateEigenvalues}
 \end{align}
 with each $\rsp{\overline J}_2(\cdot,\cdot,\pm 1)$ is $C^\infty$-smooth. Now, for $r_2=0$ we have $\int_{t_{2-}(r_2)}^{t_{2+}(r_2)} \sqrt{U_2(t_2,r_2)} dt_2=\frac{\pi}{2}$ and therefore
 \begin{align*}
 \frac{1}{2} &=  \left(n+\frac{1}{2}\right)\xi+\xi^{5/3} \rsp{\overline J}_2(\xi^{1/3},0,(-1)^n).
 \end{align*}
 By \eqref{Delta2n}, this gives $\xi=(2n+1)^{-1}$ for all $n\in \mathbb N_0$, $n\gg 1$, large enough. This proves \eqref{J2cond}.
 \end{proof}
%  Multiplying
 
% Upon multiplying \eqref{intermediateEigenvalues} by $r_2^2$, we finally obtain 

% \begin{proof}

Multiplying \eqref{intermediateEigenvalues} by $r_2^2$ gives 
\begin{align*}
J(r_2^2) = n\epsilon + \frac{1}{2}\epsilon + \xi^{5/3} r_2^2 \rsp{\overline J}_2(\xi^{1/3},r_2,(-1)^n).
\end{align*}
recall \eqref{Jdefn}. Setting $E=r_2^2$ and $\xi=E^{-1}\epsilon$, recall \eqref{chart2intermediate} and \eqref{epsxi}, gives the desired result, \thmref{intermediateEigenvalues}.
% \end{proof}

%  In the remainder of the section, we consider \eqref{solutionDelta20}. We have
%  \begin{align*}
%   \int_{t_{2-}(r_2)}^{t_{2+}(r_2)} \sqrt{U_2(t_2,r_2)} dt_2 = \frac{\pi}{2}\left(1+r_2 J_2(r_2)\right),
%  \end{align*}
% for some smooth function $J_2$. 
 
%  Upon multiplying \eqref{solutionDelta20} by $E=r_2^2$ and using that $\epsilon=E \xi$, see \eqref{epsxi} and \eqref{chart2intermediate}, we obtain 
%  \begin{align*}
%   r_2^2 \int_{t_{2-}(r_2)}^{t_{2+}(r_2)} \sqrt{U_2(t_2,r_2)} dt_2 = \pi n \epsilon +\frac{\pi}{2}\epsilon+\xi^{5/3}r_2^3\widetilde J_2(\xi^{1/3},r_2,(-1)^n).
%  \end{align*}

% \end{proof}

\section{Discussion} \seclab{discussion}
% 
% % \begin{remark}
In this paper, we \rsp{revisited} the eigenvalue problem of the one-dimensional Schr{\"o}dinger equation, recall \eqref{eq0}, for $C^\infty$ potentials. In particular, we provide a new interpretation of the Bohr-Sommerfeld quantization formula.  A novel aspect of our results, which are based on the recent work \cite{uldall2024a}, is that
we cover all eigenvalues $E\in [0,E_0]$ for all $0<\epsilon\ll 1$; here $E_0>0$ is any fixed constant. For this purpose, we connect the small eigenvalues $E=\mathcal O(\epsilon)$ with large eigenvalues $E=\mathcal O(1)$ through intermediate eigenvalues and show that the Bohr-Sommerfeld quantitization formula approximates all of these eigenvalues (in a sense that is made precise). 

Our results also provide rigorous smoothness statements of the eigenvalues. Whereas the small eigenvalues $E=\mathcal O(\epsilon)$ are $C^\infty$-smooth functions of $\epsilon$, the large eigenvalues $E=\mathcal O(1)$ are $C^\infty$-smooth functions of $n\epsilon\in [c_1,c_2]$ and $\epsilon^{1/3}$; here $n\in \mathbb N$ is the index of the eigenvalues. The change in smoothness is common in problems with turning points, see e.g. \cite{freddy,gils2005a,krupa_extending_2001}. In fact, for the fold point \cite{krupa_extending_2001} the expansions of the slow manifold changes from being smooth with respect to $\epsilon$ to only being smooth with respect to $\epsilon^{\frac23}$ and $\epsilon^{1/3}\log \epsilon^{-1}$. \rspp{In} \cite{uldall2024a}, which is based on normal form theory, one would also expect presence of logarithms in the present case. But the detailed analysis in \cite{uldall2024a} shows that the corresponding ``resonant terms'' are absent. It is also possible that our smoothness results are suboptimal, but we have not found a way to improve this.

\KUK{Although, we assumed that $V\in C^\infty$, it is clear that $V\in C^k$ with $k\gg 1$ large enough will suffice. We leave the details of this case to the interested reader. Moreover, while our focus has been on a single well potential, it is not difficult to extend our results to more general potentials with additional minima, provided that the growth condition, see assumption \ref{growth}, holds at $t\rightarrow \pm \infty$. \KUK{This could include tunneling \cite{hislop1996a,simon1984a} (e.g. through the Exchange Lemma \cite{schecter2008a})}. However, in such situations, the description of eigenvalues near local maxima require a separate detailed description of a reversed type of turning point, corresponding to \eqref{airy1} with $\mu(0)=\mu'(0)=0$ and $\mu''(0)<0$. We will analyze this situation in a separate forthcoming paper \cite{kriszm22c}.}

% We then use the Exchange-Lemma \cite{schecter2008a}
% \fbox{shorten this, less technical:}
% Indeed, suppose that there were several $t_{-i}(E)$ and $t_{+i}(E)$, $i=1,\ldots,l$, each satisfying:
% \begin{align*}
%  V'(t_{-i})<0,\quad V'(t_{+i})>0,\quad V(t)<0\quad \mbox{for all} \quad t\in (t_{-i},t_{+i}),
% \end{align*}
% then \thmref{01eigenvalues}--\thmref{intermediateEigenvalues} apply to all $t_-=t_{-i}$, $t_+=t_{+i}$, $i=1,\ldots,l$, where $V(t)-E$ has a local minima in $t\in (t_-,t_+)$. To explain this, focus on the $i$th well defined by $t\in (t_{-i},t_{+i})$. We then use the Exchange-Lemma \cite{schecter2008a} to track the unstable manifold of $t=-\infty$ up exponentially close to the unstable manifold of \thmref{mainEV1} at $t=t_{-i}-c$ for $c>0$ small enough. We apply the same approach to track the stable manifold at $t=\infty$ back to $t=t_{i+}+c$ with $c>0$ small enough. Importantly, the exponential small terms that select different stable and unstabla manifolds do not affect the proof. 

% 0, μ (0) < 0. 
% Interestingly, these results show that the eigenvalues in this case are not separated by $\mathcal O(\epsilon)$-
% distances, but lie closer than that, being instead separated by distances of order $\mathcal O(\epsilon\log^{-1} \epsilon^{-1})$.
% we describe this using the same techniques as in [34] and the present paper.

% \note{need to improve on this uniqueness issue throughout, I think}
% \end{remark}
\newpage 
\bibliography{refs}
\bibliographystyle{plain}
\newpage
\appendix 

\section{Proof of \lemmaref{Jprop}}\applab{t0E}
% \fbox{not updated}

Let 
\begin{align*}
 U(t,E):=E-V(t).
\end{align*}
$t_\pm(E)$ are therefore roots of $U(\cdot,E)$. %Using the assumptions \ref{Vquadratic}, \ref{smoothness} and \ref{growth} 
We have that 
\begin{align}
 \frac{\partial}{\partial t}U(t,E) = -V'(t).\eqlab{Ut}
\end{align}
Using the assumptions \ref{Vquadratic} and \ref{growth}, the sign of the right hand side of \eqref{Ut} is $-\operatorname{sign}(t)$ for all $t\in \mathbb R$, and the existence of the $C^\infty$-smooth functions $t_\pm(E)$ therefore follow. This completes the proof of item \ref{tpmE}.

Next, we turn to the proof of item \ref{Jprop2} and the smoothness of 
\begin{align*}
 J(E) = \int_{t_-(E)}^{t_+(E)} \sqrt{U(t,E)} dt. 
\end{align*}
By Leibniz's rule of differentiation we have that
\begin{align}
 J'(E) = \int_{t_-(E)}^{t_+(E)} \frac{1}{2\sqrt{U(t,E)}} dt>0,\eqlab{Jprime}
\end{align}
upon using the definitions of $U$ and $t_\pm (E)$.
Notice that the right hand side is integrable, since $t_\pm(E)$ are simple roots of $U(\cdot,E)$ and the limits
\begin{align*}
\lim_{t\rightarrow t_\pm(E)} \vert t-t_\pm (E)\vert^{-1/2} \sqrt{U(t,E)},
\end{align*}
are therefore well-defined. In turn, $J:\mathbb R_+\rightarrow \mathbb R_+$ is a $C^1$-diffeomorphism.

Now, regarding the $C^\infty$ smoothness of $J$, we first notice, that we cannot apply Leibniz's rule of differentiation directly to \eqref{Jprime}, since this will lead to nonintegrable singularities $\vert t-t_\pm(E)\vert^{-n/2}$ with $n>1$. Instead, we use integration by parts and for this purpose, it is easiest to divide $J$ into the sum of two terms:
\begin{align*}
 J_-(E) = \int_{t_-(E)}^0 \sqrt{U(t,E)}dt,\quad J_+(E) = \int_0^{t_+(E)} \sqrt{U(t,E)} dt. 
\end{align*}
 For simplicity we focus on $J_-$; the analysis of $J_+$ is identical. 

%  
%  
% Clearly, $J(0)=0$ and 
% 
% Recall specifically, that %Specifically the fact that 
% $V$ has a unique, global minimum at $t=0$ and grows unboundedly for $t\rightarrow \infty$. $t_\pm(E)$ are smooth since they correspond to simple zeros 
% 
% We recall that $t_-(E)<0<t_+(E)$ are defined by $U(t_-(E),E)=U(t_+(E),E)=0$ and 
% $F(t_0(E),E)=0$ for all $E>0$ where
% \begin{align*}
% F(t,E):= \int_{t_-(E)}^{t} \sqrt{U(s,E)} ds - \int_{t_0(E)}^{t} \sqrt{U(s,E)} ds.
% \end{align*}
% To finish the proof of \lemmaref{Jprop}, we have to show that $t_0(E)$ is smooth. This follows once we have shown that $F$ is a smooth function of $t$ and $E$ (by the implicit function theorem). For this, we focus on the first integral defining $F$:
% \begin{align*}
%  F_1(t,E) := \int_{t_-(E)}^t \sqrt{U(s,E)} ds.
% \end{align*}
% The second integral can be handled similarly. 
% By Leibniz's rule of differentiation we have that 
% \begin{align*}
%  \frac{\partial}{\partial E} F_1(t,E) = \int_{t_-(E)}^t \frac{1}{2\sqrt{U(s,E)}} \frac{\partial}{\partial E}U(s,E) ds,
% \end{align*}
% which is integrable (since $1/\sqrt{U(t,E)}\sim (t-t_-(E))^{-1/2}$ by assumption). But we cannot differentiate further with respect to $E$. This would produce nonintegrable singularities $(t-t_-(E))^{-n/2}$ with $n>1$. We therefore apply integration by parts before differentiating. 
Since $t=t_-(E)$ is a simple root of $U(t,E)$ for all $E>0$, we have 
\begin{align*}
 \sqrt{U(t,E)} = \sqrt{a(E)(t-t_-(E))} \widetilde U_-(t,E), 
\end{align*}
with $a(E)>0$, and $\widetilde U_-(t,E)>0$ for $t\in [t_-(E),0]$, both $C^\infty$-smooth. We therefore write $J_-$ as $\sqrt{a(E)}\widetilde J_-$ with
\begin{align*}
 \widetilde J_-(E) = \int_{t_-(E)}^0 \sqrt{t-t_-(E)} \widetilde U_-(t,E) dt,
\end{align*}
for $t\in (t_-(E),t_+(E))$. By showing that $\widetilde J_-$ is $C^\infty$-smooth it follows that $J_-$ is $C^\infty$-smooth. We omit the tildes henceforth.

We claim:
\begin{equation}\eqlab{F1n}
\begin{aligned}
 J_-(E)& = \sum_{k=1}^{n} (-1)^{k-1} \frac{2^k}{(2k+1)!!} (-t_-(E))^{\frac{2k+1}{2}} \frac{\partial^{k-1}}{\partial t^{k-1}} \widetilde U_-(0,E)\\
 &+(-1)^n \frac{2^n}{(2n+1)!!} \int_{t_-(E)}^0 (t-t_-(E))^{\frac{2n+1}{2}}\frac{\partial^n}{\partial t^n} \widetilde U_-(t,E)dt.
\end{aligned}
\end{equation}
for every $n\in \mathbb N$. 
The result is true for $n=0$. We therefore prove \eqref{F1n} by induction, assuming that it holds for $n-1$ such that
\begin{align*}
J_-(E)& = \sum_{k=1}^{n-1} (-1)^{k-1} \frac{2^k}{(2k+1)!!} (-t_-(E))^{\frac{2k+1}{2}} \frac{\partial^{k-1}}{\partial t^{k-1}} \widetilde U_-(0,E)\\
 &+(-1)^{n-1} \frac{2^{n-1}}{(2n-1)!!} \int_{t_-(E)}^0 (t-t_-(E))^{\frac{2n-1}{2}}\frac{\partial^{n-1}}{\partial t^{n-1}} \widetilde U_-(t,E)dt.
\end{align*}
We now use integration by parts, writing 
\begin{align*}
(t-t_-(E))^{\frac{2n-1}{2}}= \frac{2}{2n+1} \frac{d}{dt} (t-t_-(E))^{\frac{2n+1}{2}}.
\end{align*}
This gives 
\begin{align*}
J_-(E)& = \sum_{k=1}^{n-1} (-1)^{\rsp{k-1}} \frac{2^k}{(2k+1)!!} (-t_-(E))^{\frac{2k+1}{2}} \frac{\partial^\rsp{k-1}}{\partial t^\rsp{k-1}} \widetilde U_-(0,E)\\
 &+(-1)^{n-1} \frac{2^{n}}{(2n+1)!!} (-t_-(E))^{\frac{2n+1}{2}} \frac{\partial^{n-1}}{\partial t^{n-1}} \widetilde U_-(0,E)\\
 &+(-1)^n \frac{2^{n}}{(2n+1)!!} \int_{t_-(E)}^0 (t-t_-(E))^{\frac{2n+1}{2}}\frac{\partial^{n}}{\partial t^{n}} \widetilde U_-(t,E)dt\\
 &=\sum_{k=1}^{n} (-1)^{\rsp{k-1}} \frac{2^k}{(2k+1)!!} (-t_-(E))^{\frac{2k+1}{2}} \frac{\partial^\rsp{k-1}}{\partial t^\rsp{k-1}} \widetilde U_-(0,E)\\
%  &+(-1)^{n-1} \frac{2^{n}}{(2n+1)!!} (t-t_-(E))^{\frac{2n+1}{2}} \frac{\partial^{n-1}}{\partial t^{n-1}} \widetilde U_-(t,E)\\
 &+(-1)^n \frac{2^{n}}{(2n+1)!!} \int_{t_-(E)}^0 (t-t_-(E))^{\frac{2n+1}{2}}\frac{\partial^{n}}{\partial t^{n}} \widetilde U_-(t,E)dt,
\end{align*}
as desired. 

Fix any $N\in \mathbb N$. To see that $J_-$ is $C^N$-smooth with respect to $E>0$, consider \eqref{F1n} with $n=N$.  Then the finite sum is a smooth function of $E>0$. Moreover, we can apply the Leibniz rule of differentiation to the final integral $N$ number of times. Hence $J_-$ is $C^\infty$-smooth and since $J_+$ can be handled in the same way, we conclude that $J$ itself is $C^\infty$-smooth with respect to $E>0$.

We now finally turn to the proof of item \ref{Jprop3}. By assumption \ref{Vquadratic}, we write 
\begin{align*}
 V(t) = t^2 \overline V(t),
\end{align*}
with $\overline V(0)=1$, $\overline V(t)>0$ by assumption \ref{Vquadratic}. Seeing that $t_\pm (E)$ are roots of $U(t,E) = E-V(t)$, it follows that 
\begin{align*}
 t_\pm (E) = \sqrt{E} \overline t_\pm (\sqrt{E}),
\end{align*}
with $\overline t_\pm$ being roots of
\begin{align}
 E^{-1} U(\sqrt{E} \overline t,E) = 1-\overline t^2 \overline V(\sqrt{E}\overline t) = 0.\eqlab{overlinetEqn}
\end{align}
Here $\overline t_\pm:\mathbb R\rightarrow \mathbb R$ are both smooth, specifically $\overline t_\pm(0)=\pm 1$ and 
\begin{align}
 \overline t_\pm (-\sqrt{E}) = -\overline t_\mp(\sqrt{E}),\eqlab{this}
\end{align}
due to the invariance of \eqref{overlinetEqn} with respect to $(\overline t, E)\mapsto (-\overline t,-E)$.
We then write $J(E)$ as 
\begin{align*}
 J(E) = E\overline J(\sqrt{E}),\quad \overline J(\sqrt{E}):= \int_{\overline t_-(\sqrt{E})}^{\overline t_+(\sqrt{E})} \sqrt{1-\overline t^2 \overline V(\sqrt{E} \overline t)} d\overline t. 
\end{align*}
Clearly, $\overline J(0)=\frac12$.
We can then prove that $\overline J$ is smooth in a neighborhood of $\sqrt{E}=0$ in the same way as above for $J$. Then, upon using \eqref{this}, a simple change of variables show that $\overline J$ is an even function
\begin{align*}
 \overline J(-\sqrt{E})=\overline J(\sqrt{E}).
\end{align*}
It is therefore in fact smooth with respect to $E$ and item \ref{Jprop3} of \lemmaref{Jprop} follows. 
% Notice in particular t
% It is straightforward to shows that $\overline J(0)=\frac12$ and item \ref{Jprop3} therefore follows. 

% 
% $J(0)=0$ is obvious. For $\lim_{E\rightarrow 0^+} J'(E) = \frac12$, consider \eqref{F1n} with $n=1$:
% \begin{align*}
%  J_-(E) = \frac{2}{3}(-t_-(E))^{\frac32}\widetilde U_-(0,E)-\frac{2}{3} \int_{t_-(E)}^0 (t-E_-(E))^{\frac{3}{2}} \frac{\partial}{\partial t} \widetilde U_-(t,E) dt.
% \end{align*}
% \newpage 
% \input{reply}
\end{document}